\numberwithin{equation}{section}
\definecolor{labelkey}{gray}{.65}
\author[F.\ Finster]{Felix Finster}
\address{Fakult\"at f\"ur Mathematik \\ Universit\"at Regensburg \\ D-93040 Regensburg \\ Germany}
\email{finster@ur.de}
\author[M. van den Beld-Serrano]{Marco van den Beld-Serrano \\ \\ April 2025}
\address{Fakult\"at f\"ur Mathematik \\ Universit\"at Regensburg \\ D-93040 Regensburg \\ Germany}
\email{marco.van-den-beld-serrano@ur.de}
\newtheorem{Def}{Definition}[section]
\newtheorem{Thm}[Def]{Theorem}
\newtheorem{Prp}[Def]{Proposition}
\newtheorem{Lemma}[Def]{Lemma}
\newtheorem{Remark}[Def]{Remark}
\newtheorem{Corollary}[Def]{Corollary}
\newcommand{\Thanks}{\vspace*{.5em} \noindent \thanks}
\newcommand{\beq}{\begin{equation}}
\newcommand{\eeq}{\end{equation}}
\newcommand{\Proof}{\begin{proof}}
\newcommand{\QED}{\end{proof} \noindent}
\newcommand{\QEDrem}{\ \hfill~$\Diamond$}
\newcommand{\Sl}{\mbox{$\prec \!\!$ \nolinebreak}}
\newcommand{\Sr}{\mbox{\nolinebreak$\succ$}}
\newcommand{\C}{\mathbb{C}}
\newcommand{\R}{\mathbb{R}}
\newcommand{\N}{\mathbb{N}}
\renewcommand{\H}{\mathscr{H}}
\newcommand{\bep}{\begin{pmatrix}}
\newcommand{\enp}{\end{pmatrix}}
\newcommand{\scrL}{\myscr L}
\newcommand{\Tgamma}{\tilde{\gamma}}
\DeclareMathOperator*{\slim}{s-lim}
\newcommand{\pr}{\partial}
\newcommand{\gammaeta}{\gamma_\eta}
\DeclareFontFamily{OT1}{rsfso}{}
\DeclareFontShape{OT1}{rsfso}{m}{n}{ <-7> rsfso5 <7-10> rsfso7 <10-> rsfso10}{}
\DeclareMathAlphabet{\myscr}{OT1}{rsfso}{m}{n}
\renewcommand{\Tr}{\text{\rm{Tr}}}
\DeclareMathOperator{\tr}{tr}
\newcommand{\bitem}{\begin{itemize}[leftmargin=2em]}
\newcommand{\eitem}{\end{itemize}}
\newcommand{\coloneqq}{:=}
\newcommand{\gammag}{\gamma_{g}}
\title{Baryogenesis in Conformally Flat Spacetimes}
\begin{document}
\begin{abstract}
Based on a baryogenesis mechanism originating from the theory of causal fermion systems, we analyze its main geometric and analytic features in conformally flat spacetimes. An explicit formula is derived for the rate of baryogenesis in these spacetimes, which depends on the mass $m$ of the particles, the conformal factor $\Omega$ and a future directed timelike vector field $u$ (dubbed the regularizing vector field). Our analysis covers Friedmann-Lema{\^i}tre-Robertson-Walker, Milne and Milne-like spacetimes. It sets the ground for concrete, quantitative predictions for specific cosmological spacetimes.
\end{abstract}

\maketitle

\tableofcontents
\section{Introduction}
This paper is part of a series of papers devoted to the exploration of a novel mechanism of baryogenesis.
This mechanism was proposed in~\cite{baryogenesis} based on the structures of the dynamical equations
in the theory of causal fermion systems. In~\cite{baryomink}, this mechanism was worked out in more
detail in Minkowski space. In the present paper, we extend this analysis to conformally flat spacetimes.

We now introduce the setting and summarize our results (a general introduction to baryogenesis
and a discussion of our mechanism of baryogenesis can be found in~\cite{baryogenesis, baryomink}).
We consider a four-dimensional Lorentzian spacetime~$(M,g)$ with trivial topology~$M = \R \times \R^3$,
which is conformally flat. This means that, denoting the coordinates by~$t \in \R$ and~$(x,y,z) \in \R^3$,
the metric can be written as a conformal factor times the Minkowski metric,
\beq \label{confmink}
ds^2 = g_{jk}\, dx^j\, dx^k = \Omega^2(t,x,y,z)\: \big( dt^2 - dx^2-dy^2-dz^2 \big) \:,
\eeq
where~$\Omega$ is a strictly positive, smooth function in spacetime.
The class of conformally flat spacetimes includes many physically interesting examples like
Friedmann-Lema{\^i}tre-Robertson-Walker (FLRW), Milne and Milne-like spacetimes.
Our mechanism of baryogenesis is based on a modification of the Dirac dynamics.
In order to model this modification, we write the Dirac equation in the Hamiltonian form
\[ i \partial_t \psi = H_g \psi \:, \]
where~$H_g$ is the {\em{Dirac Hamiltonian}} in the Lorentzian metric~$g$
(for basics on Dirac spinors see the preliminaries in Section~\ref{sec:math_intro}).
Moreover, we introduce the so-called {\em{regularization vector field}}~$u$ as additional physical input. It is a timelike vector field whose time evolution obeys the  {\em{locally rigid dynamics}}
(see Definition~\ref{def:locally_rigid_dynamics}). We form the {\em{symmetrized Hamiltonian}} by
\[ A_t :=
\frac{1}{4}\{u^{0},H_{g}+H_{g}^{\ast}\}+\frac{i}{4} \sum_{\mu=1}^3
\{u^{\mu},\nabla^{s}_{\mu}-(\nabla^{s}_{\mu})^{\ast} \} \]
(where~$\nabla^s$ denotes the spinorial Levi-Civita connection in spacetime and~$\{ .,. \}$ is the anti-commutator);
it is an essentially selfadjoint operator on the Hilbert space~$\H_{t,g}$ of square-integrable Dirac spinors of the
Cauchy surface~$N_t$ (for details see again Section~\ref{sec:math_intro}).
Finally, we introduce the operator~$\tilde{H}_\eta$ as the Dirac Hamiltonian of Minkowski space,
unitarily transformed to the Hilbert space~$\H_{t,g}$ in curved spacetime
(for detail see Lemma~\ref{lem:time_independent}).
The {\em{rate of baryogenesis}}~$B_t$ can be expressed in terms of the trace of a product of
operator which can be obtained via the spectral calculus from~$A_t$ and~$\tilde{H}_\eta$ 
(see Definition~\ref{def:Baryo}). Our main result is to analyze this formula and to bring it into a more tractable form.
To this end, we perform a perturbation expansion in powers of the operator~$\Delta A(t) := A_t - \tilde{H}_{\eta}$, which is in general a first order differential operator (unless if~$u=\partial_{t}$; then~$\Delta A(t)$ is simply a multiplication operator).
This is our main result.

\begin{Thm}\label{theo:baryo_rate}
Let~$(M,g)$ be a conformally flat spacetime with time coordinate~$t$ as in~\eqref{confmink}.
Moreover, assume that the operator~$A_t$ has an absolutely continuous spectrum, that~$\Delta A(t)$ has smooth and compactly supported coefficients (in~$N_{t}$) and that for all~$\omega$ in the resolvent set of~$A_t$ it holds that
    \begin{equation}\label{eq:assumption}
        \| R_\omega(\tilde{H}_\eta)\Delta A(t))\|<1
    \end{equation}
(where~$R_{\omega}(\tilde{H}_\eta) := (\tilde{H}_\eta - \omega)^{-1}$ denotes the resolvent).
Then, the rate of baryogenesis~$B_t$ admits a power expansion in which the zeroth and first order contributions vanish, whereas the second order contribution is given by
    \begin{equation}\label{eq:Bt2_final}
        B_t^{(2)}=-\int \frac{d^{3}k}{(2\pi)^{3}}\int \frac{d^{3}k'}{(2\pi)^{3}}\frac{1}{4\omega_{k}\omega_{k'}}\frac{1}{(\omega_{k'}+\omega_{k})^{2}}\:G_{\Omega,m,u}(k,k')\;,
    \end{equation}
    where~$\omega_{k}:=\sqrt{|\vec{k}|^{2}+m^{2}}$,~$\omega_{k'}:=\sqrt{|\vec{k'}|^{2}+m^{2}}$ and~$G_{\Omega,m,u} : \R^3 \times \R^3 \rightarrow\R$ is a smooth function which depends on the conformal factor~$\Omega$, on the mass~$m$ and on the vector field~$u: M\rightarrow TM$.
\end{Thm}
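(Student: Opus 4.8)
The plan is to start from the definition of $B_t$ in Definition~\ref{def:Baryo}, in which the rate of baryogenesis is written as the trace of a product of operators obtained from $A_t$ and $\tilde H_\eta$ by the spectral calculus. The first step is to represent every spectral operator built from $A_t$ — typically the projection $\chi_{(-\infty,0)}(A_t)$ onto the negative spectral subspace — as a contour integral of resolvents,
\[ \chi_{(-\infty,0)}(A_t) = \frac{1}{2 \pi i} \oint_\Gamma R_\omega(A_t)\, d\omega \:, \]
where $\Gamma$ is a contour in the resolvent set encircling the negative part of the spectrum. The assumed absolute continuity of the spectrum of $A_t$ guarantees that $\Gamma$ stays away from the spectrum and that the operator-valued integrand is norm-continuous along $\Gamma$.

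The next step is to expand $R_\omega(A_t)$ about $\tilde H_\eta$ by the second resolvent identity, written as the Neumann series
\[ R_\omega(A_t) = \sum_{n=0}^\infty \big( -R_\omega(\tilde H_\eta)\, \Delta A(t) \big)^n\, R_\omega(\tilde H_\eta) \:, \]
which converges in operator norm for every $\omega \in \Gamma$ by hypothesis~\eqref{eq:assumption}. Substituting this into $B_t$ and grouping the terms according to the number of factors of $\Delta A(t)$ yields the asserted power expansion $B_t = B_t^{(0)} + B_t^{(1)} + B_t^{(2)} + \cdots$. The term $B_t^{(0)}$ is the value of the defining functional with $A_t$ replaced by $\tilde H_\eta$; since (by Definition~\ref{def:Baryo}) $B_t$ measures the mismatch of the two dynamics, this term vanishes identically. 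For $B_t^{(1)}$ one uses that the first-order correction $\delta \Pi$ to a spectral projection is off-diagonal with respect to the energy decomposition of $\tilde H_\eta$ (because $\Pi^2 = \Pi$ forces $\Pi\, \delta \Pi\, \Pi = 0$ and $(\1 - \Pi)\, \delta \Pi\, (\1 - \Pi) = 0$), whereas the operator it is paired with in the trace is diagonal in that decomposition; by cyclicity of the trace the pairing then vanishes. (Equivalently, after performing the $\omega$-integral the first-order contribution is seen to be purely imaginary, whereas $B_t$ is real.)

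The core of the proof is the evaluation of $B_t^{(2)}$. Since $\tilde H_\eta$ is unitarily equivalent to the Dirac Hamiltonian of Minkowski space, I would diagonalize it in the plane-wave basis $e^{i \vec k \cdot \vec x}$ with eigenvalues $\pm \omega_k$, $\omega_k = \sqrt{|\vec k|^2 + m^2}$, and the associated normalized spinor bases, so that the resolvents act as scalars $(\pm\omega_k - \omega)^{-1}$. The second-order term then takes the schematic form $\frac{1}{2\pi i} \oint_\Gamma \Tr\big( \cdots R_\omega(\tilde H_\eta)\, \Delta A(t)\, R_\omega(\tilde H_\eta)\, \Delta A(t)\, R_\omega(\tilde H_\eta) \cdots \big)\, d\omega$; inserting the plane-wave resolution of the identity reduces it to a double integral over intermediate momenta $\vec k, \vec k'$, the two insertions of $\Delta A(t)$ connecting a mode of momentum $\vec k'$ to one of momentum $\vec k$ and back. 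Because $\Gamma$ encircles only the negative eigenvalues and the outer factors return to the same momentum, the $\omega$-integration picks up a double pole at $\omega = - \omega_{k'}$, whose residue $\frac{d}{d\omega} (\omega_k - \omega)^{-1}\big|_{\omega = -\omega_{k'}} = (\omega_k + \omega_{k'})^{-2}$ produces the factor $1/(\omega_{k'} + \omega_k)^2$ in~\eqref{eq:Bt2_final}. The factor $1/(4 \omega_k \omega_{k'})$ comes from the two $1/(2\omega)$ normalizations of the Dirac spinor sums, and all remaining dependence — the spinor matrix elements of $\Delta A(t)$ between the plane-wave spinors, traced over the spinor labels — is collected into $G_{\Omega,m,u}(k,k')$. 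This function is smooth because the spinor bases depend real-analytically on $\vec k$ and the coefficients of $\Delta A(t)$ are smooth and compactly supported (hence have smooth, rapidly decaying Fourier transforms), and it depends on $\Omega$, $m$, $u$ precisely through the coefficients of $\Delta A(t) = A_t - \tilde H_\eta$.

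The main obstacle is the bookkeeping in this second-order contour integral: one must correctly identify which sign combinations $\pm\omega_k, \pm\omega_{k'}$ survive after deforming $\Gamma$, treat the double-pole residue carefully, and organize the spinor algebra so that $G_{\Omega,m,u}$ emerges as a well-defined smooth scalar; moreover, to make sense of the double integral in~\eqref{eq:Bt2_final} one has to combine the rapid decay of the Fourier transforms of the coefficients of $\Delta A(t)$ with the energy denominators. A secondary delicate point is to turn the heuristic off-diagonal/parity argument for the vanishing of $B_t^{(1)}$ into a rigorous statement within the framework of Definition~\ref{def:Baryo}.
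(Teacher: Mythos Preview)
Your overall strategy—Neumann-expand the resolvent and then pass to momentum space—matches the paper, but two technical steps are handled differently, and in the first your version has a genuine gap.

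The spectral projection you need is $\chi_I(A_t)$ with $I=(-1/\varepsilon,-m)$, not $\chi_{(-\infty,0)}$; more importantly, there is no closed contour $\Gamma\subset\rho(A_t)$ encircling this part of the spectrum, because $\sigma(\tilde H_\eta)=(-\infty,-m]\cup[m,\infty)$ is unbounded below and the endpoint $-1/\varepsilon$ lies in its interior. Absolute continuity does not produce a gap, so your sentence ``absolute continuity guarantees that $\Gamma$ stays away from the spectrum'' is incorrect. The paper instead uses Stone's formula $\chi_I(A_t)=\frac{1}{2\pi i}\,\slim_{\delta\to 0^+}\int_I\bigl(R_{\omega+i\delta}(A_t)-R_{\omega-i\delta}(A_t)\bigr)\,d\omega$, which is the correct substitute and still admits the Neumann series term by term. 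As a consequence the $\omega$-integration in $B_t^{(2)}$ is not a residue at a double pole: one writes the kernel of $F_\omega(H_\eta)$ with the factor $\delta(\omega^2-\omega_k^2)=\tfrac{1}{2\omega_k}\sum_\pm\delta(\omega\mp\omega_k)$ (this is where $1/(4\omega_k\omega_{k'})$ comes from), and the factor $(\omega_k+\omega_{k'})^{-2}$ arises from an integration by parts of $\tfrac{1}{\omega'-\omega}$ against $\partial_\omega\bigl(\cdots\,\delta(\omega-\omega_k)\bigr)$. The remaining spinor traces are then packaged into $G_{\Omega,m,u}$ as you describe.

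For $B_t^{(1)}=0$ the paper takes a different route from your off-diagonal argument. After the Stone-formula manipulation (imported from the companion Minkowski paper) the first-order term reduces to the boundary contribution $B_t^{(1)}=\frac{d}{dt}\tr\bigl(\Delta A\,F_{-m}(\tilde H_\eta)\bigr)$, with $F_{-m}$ the spectral density at the edge $\omega=-m$; vanishing then follows from the explicit kernel identities $F_{-m}(x,x)=0$ and $(\partial_\mu F_{-m}(x,y))\big|_{y=x}=0$. Your $\Pi\,\delta\Pi\,\Pi=0$ argument presupposes that $\lambda\mapsto\chi_I(\tilde H_\eta+\lambda\Delta A)$ is differentiable, which is delicate exactly because $-1/\varepsilon$ sits inside the absolutely continuous spectrum; the cutoff $\eta_\Lambda$ ultimately suppresses that endpoint, but this has to be argued rather than assumed.
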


We note that the condition~\eqref{eq:assumption} means that~$\Delta A$ must be sufficiently small.
This condition makes it possible to expand the resolvent in powers of~$\Delta A$.
For technical simplicity, we restrict attention to the case that~$\Delta A$ is compactly supported on~$N_t$.
In particular, it implies that, outside a compact region~$V\subset \R^{4}$, the considered spacetime must agree with Minkowski spacetime and~$u$ with~$\partial_t$. So, formula~\eqref{eq:Bt2_final} describes the rate of baryogenesis for spacetimes whose geometry deviates locally from Minkowski spacetime and vector fields deviating
locally from~$\partial_t$. More generally, the derived formula can be understood as describing the \emph{density} of the rate of baryogenesis in general conformally flat spacetimes. Finally, the assumption that~$A_t$ has an absolutely continuous spectrum does not seem too restrictive since~\eqref{eq:assumption} already entails that~$A_t$ is a small perturbation of~$\tilde{H}_\eta$ and the Dirac Hamiltonian in Minkowski spacetime is known to have an absolutely continuous spectrum (cf.~\cite[Theorem 1.1]{Thaller}).

This paper is devoted to the proof of the above theorem. However, this will not be done in one single step since it requires a careful analysis of the analytic and geometric tools related to the study of the baryogenesis mechanism  in conformally flat spacetimes. For this reason, the claims of the previous theorem will be obtained in separate individual steps. In the first place, in Proposition~\ref{prop:Bt_conf_flat} (using arguments from~\cite{baryogenesis} and~\cite{baryomink}) we show that, in general conformally flat spacetimes, the rate of baryogenesis 
allows a perturbative expansion, and that its zeroth and first order contribution vanish. Afterward, in Sections~\ref{sec:utrivial} and~\ref{sec:ugeneral} we derive formulas for the second order contribution to the rate of baryogenesis depending on the value of the vector field~$u : M \rightarrow TM$ and the mass~$m$. In particular, we obtain a series of formulas which can all be described by expression~\eqref{eq:Bt2_final}, where the function~$G_{\Omega,m,u}$ captures the different considered scenarios. The simplest example is when~$u=\partial_{t}$ and~$m\neq0$, for which we obtain in Section~\ref{sec:utrivial} that
\begin{equation*}
        G_{\Omega,m,u}(k,k')=2m^2(-\omega_{k}\omega_{k'}+m^{2}-k\cdot k')\big[\hat{\alpha}_{1}(k-k')\hat{\alpha}_{2}(k'-k)+\hat{\alpha}_{1}(k'-k)\hat{\alpha}_{2}(k-k')\big]\;,
\end{equation*}
where~$\alpha_{1}=\frac{d\Omega}{dt}$, ~$\alpha_{2}=(\Omega-1)$ and the hats denote its Fourier transform. Note that in this scenario, introducing the coordinates~$r=\frac{k+k'}{2}$ and~$q=k-k'$, and using the spherical symmetry of the integrand, expression \eqref{eq:Bt2_final} can be simplified further (see Corollary~\ref{cor:Bt_conf_flat_uparallel}). The case where~$u\neq\partial_{t}$ is more involved and is analyzed in Section~\ref{sec:ugeneral} (see Corollary~\ref{cor:Bt_conf_flat_ugeneral} and Remark~\ref{rem:Bt2_mixed_terms}).

The paper is organized as follows. After reviewing the necessary mathematical background
and introducing the setup (Section~\ref{sec:math_intro}), general results in conformally flat spacetimes
are derived (Section~\ref{secgenconf}). 
Then we specialize the setting to Minkowski spacetime (Section~\ref{sec:baryomink}),
also making contact to our previous paper~\cite{baryomink}.
We then work out two specific scenarios.
We begin with a trivial regularizing vector field (i.e. $u=\partial_{t}$), in which case the mass is crucial for
our mechanism to be effective  (Section~\ref{sec:utrivial}).
For a general regularizing vector field, baryogenesis is in general non-zero even in the massless
case (Section~\ref{sec:ugeneral}). The paper concludes with a discussion of our findings
and a brief outlook (Section~\ref{secdiscuss}). The appendix provides the detailed computation
of the symmetrized Hamiltonian.

\section{Preliminaries}\label{sec:math_intro}
\subsection{Geometric and spin geometric preliminaries}
In this paper the conventions and notations of~\cite{baryomink} are followed. In particular, all spacetimes~$(M,g)$ are assumed to be four-dimensional, smooth, oriented, time oriented and globally hyperbolic.
We denote by~$t$ a global time function and the associated global smooth foliation is~$(N_t)_{t\in \mathbb{R}}$. We use the convention~$(+,-,-,-)$ for the signature of the Lorentzian metric~$g$. Furthermore, small Latin indices~$j,k, \ldots$ denote spacetime coordinate indices, whereas small Greek indices label the spatial coordinates. Moreover, whenever a foliation~$(N_t)_{t\in\mathbb{R}}$ is fixed in the spacetime~$(M,g)$ and a mathematical object is given in local coordinates, we always choose coordinates~$(x^{j})_{j=0,\ldots, 3}$ such that~$x^0 = t$ coincides with the time function. Finally, as is customary, the Einstein summation convention is used throughout the paper.

We denote {\em{Clifford multiplication}} in~$(M,g)$ by~$\gamma_{g} : TM \otimes SM \rightarrow SM$ (where~$SM$ is the spinor bundle) and, given an orthonormal basis~$(e_{j})_{j=0,\ldots,3}$,~$\gamma_{gj}\coloneqq\gammag(e_{j})$. The associated fiber~$S_pM\cong\mathbb{C}^{4}$ at a spacetime point~$p \in M$ is a
four-dimensional complex vector space referred to as the {\em{spinor space}}. It is
endowed with an indefinite inner product~$\Sl .|. \Sr_{S_pM}$ of signature~$(2,2)$, referred to as
the {\em{spin inner product}}. Note that, Clifford multiplication is symmetric with respect to the inner product.
Sections in the spinor bundle are called {\em{spinor fields}}.

Furthermore, in local coordinates the Levi-Civita (or metric) spin connection~$\nabla^{s}$, the Dirac operator~$D_{g}:C^{\infty}(M,SM)\rightarrow C^{\infty}(M,SM)$ and the Dirac Hamiltonian~$H_{g}: C^{\infty}(N_t,SM)\rightarrow C^{\infty}(N_t,SM)$ are 
\begin{align}
    &\textit{Levi-Civita spin connection} & &\nabla^{s}_{j}=\partial_{j}-iE_{j}-ia_{j}\label{eq:spin_connection}\\
    & \textit{Dirac operator} &  &D_{g}=i\gamma_{g}^j \nabla^{s}_j \\
    & \textit{Dirac Hamiltonian} & &H_{g}=-(\gammag^{0})^{-1}\left(i\gammag^{\mu}\nabla^{s}_{\mu}-m\right) - E_{0}-a_{0}\label{eq:Hamiltonian}\;,
\end{align}
where~$E_{j},a_{j}$ are linear operators on the spinor space.
Given a mass parameter~$m >0$, the {\em{Dirac equation}}  (for smooth spinor fields) reads
\[  (D_g-m)\psi=0 \:. \]

For smooth and compactly supported spinor fields~$\psi,\phi\in C^{\infty}_{0}(N_t,SM)$ we define the scalar product 
\[ 
    (\psi|\phi)_t\coloneqq \int_{N_t} \Sl\psi|\gammag(\nu)\phi\Sr_{S_{p}M} d\mu_{N_t} \:, \]
where~$\nu$ denotes the future-directed normal.
Let~$C^{\infty}_{\textrm{sc}}(M,SM)$ denote the space of smooth spinor fields with spatially compact support: i.e.\ $\psi \in C^{\infty}_{\textrm{sc}}(M,SM)$ provided for any foliation~$(N_t)_{t\in \mathbb{R}}$ and leaf~$N_{t'}$, it holds that~$\psi|_{N_{t'}}\in C^{\infty}_{0}(N_{t'},SM)$. Then, if~$\psi,\phi\in C^{\infty}_{\textrm{sc}}(M,SM)$ satisfy the Dirac equation, it can be proven that the scalar product~$(\psi|\phi)_t$ is independent of the considered Cauchy hypersurface (see~\cite[equation (2.6)]{finite} or~\cite[Corollaries~2.1.3 and~2.1.4]{treude}); this
is referred to as current conservation. The same holds, by construction, if the spinor fields follow the locally rigid dynamics (cf.~Definition~\ref{def:locally_rigid_dynamics_spinors}).

Moreover, we introduce the Hilbert space of square integrable spinor fields
\[ \H_{t,g}\coloneqq L^{2}(N_t,SM)\;, \]
with scalar product~$(\cdot|\cdot)_t$. If the underlying spacetime is Minkowski spacetime~$(\mathbb{R}^{1,3},\eta)$, we denote the scalar product~$(\cdot|\cdot)_t$ simply by~$(\cdot|\cdot)$ and the space of square integrable spinor fields by~$\H_{t,\eta}$.

In this paper we will focus on spacetimes~$(M,g)$ with~$M=\mathbb{R}\times \mathbb{R}^{3}$ which are \emph{conformally flat}. In other words, around every point~$p\in M$ there exists a neighborhood~$U$ such that the metric is
\begin{equation}\label{eq:metric_conf_flat}
    g_{p}=\Omega^{2}(p)\:\eta_{p}=\Omega^{2}(t,r,\theta,\varphi)\: \big(dt^{2}-dr^{2}-r^{2}d\theta^{2}-r^{2}\sin^{2}{\theta}d\varphi^{2} \big)\;,
\end{equation}
where the conformal factor~$\Omega : U\rightarrow (0,\infty)$ is a smooth function of all four coordinates,
and~$\eta$ denotes the metric of the Minkowski spacetime (in spherical coordinates). 
By the Weyl-Schouten Theorem, a spacetime (of dimension~$d\geq4$) is locally conformally flat if and only if its Weyl tensor vanishes. Many important cosmological spacetimes are locally conformally flat. For example, this is the case for FLRW spacetimes (see~\cite{confflatFLRW} or~\cite{gron}) with a conformal factor~$\Omega=\Omega(t,r)$. Note that in the rest of this paper we will always implicitly assume that~$M=\mathbb{R}\times\mathbb{R}^{3}$, which implies existence of a global chart such that the metric~$g$ is of the above form (so we will not distinguish anymore between local and global conformal flatness).

\par The prime (and, arguably, simplest) example of a FLRW spacetime is the flat FLRW spacetime, which is conformally flat and satisfies that~$\Omega=\Omega(t)$.
We will prove that the rate of baryogenesis in flat FLRW bears many similarities with the one in Minkowski spacetime and is even the same if~$m=0$ (Corollary~\ref{cor:At_FLRW}).

\subsection{Mathematical setup for the study of baryogenesis}
We start by introducing the space of spinor fields which we will be focussing on.
\begin{Def}\label{def:regularized_spinor_space}
    Let~$(M,g)$ be a globally hyperbolic spacetime with spinor bundle~$SM$ and consider a foliation~$(N_t)_{t\in\mathbb{R}}$. Then, given a hypersurface~$N_{t_{0}}\in(N_t)_{t\in\mathbb{R}}$, a subspace~$\H_{t_{0}}^{\varepsilon} \subset C^{\infty}_{0}(N_{t_{0}},SM)$ and an isometric
operator~$V_{t_{0}}^{t} : \H_{t_{0}}^{\varepsilon} \rightarrow C^{\infty}_{0}(N_t,SM)$, we define the {\bf{space of regularized spinor fields at a time~$\mathbf{t}$}} as
\[ \H^{\varepsilon}_t\coloneqq V_{t_{0}}^{t}(\H_{t_{0}}^{\varepsilon}) \]
\end{Def}
Intuitively the operator~$V_{t_{0}}^{t}$ describes the dynamics of the spinor fields in the spacetime~$(M,g)$. For example, if we assume that the spinor fields follow the Dirac dynamics, then~$V_{t_{0}}^{t}$ acts by restricting solutions to the Dirac equation to different Cauchy hypersurfaces. Alternatively, in Definition~\ref{def:locally_rigid_dynamics_spinors} we will introduce an explicit expression for~$V_{t_{0}}^{t}$ which describes a spinor dynamics deviating slightly from the Dirac dynamics.

In Minkowski spacetime~$(\mathbb{R}^{1,3},\eta)$, the Dirac Hamiltonian~$H_\eta$ is a selfadjoint operator on the Sobolev space~$H^{1}(\R^{3})$ with an absolutely continuous spectrum~$\sigma(H_\eta)=(-\infty,-m]\cup[m,\infty)$ (see \cite[Theorem 1.1]{Thaller}). More importantly, its eigenstates are associated to positive (and negative) eigenvalues of~$H_\eta$ are interpreted physically as \emph{particles} (respectively \emph{antiparticles}). Hence, the Dirac Hamiltonian~$H_\eta$ seems a suitable starting point in order to describe a process of particle creation. However, if we consider the Dirac Hamiltonian~$H_{g}$ in order to generalize the well understood operator~$H_\eta$ to a general globally hyperbolic spacetime~$(M,g)$, this faces severe problems:
\begin{itemize}[leftmargin=2em]
    \item In general (unless~$(M,g)$ is stationary, cf.~\cite[Remark 3.3]{baryomink}), the operator~$H_{g}$ is
    not symmetric on the Hilbert space~$\H_{t,g}$.
    \item Secondly, even if~$(M,g)$ is stationary, the Dirac Hamiltonian~$H_{g}$ only describes the Dirac dynamics of the spinor fields. I.e. it does not allow to study more general dynamics.
\end{itemize}
Hence, in a general globally hyperbolic spacetime~$(M,g)$, the Dirac Hamiltonian~$H_{g}$ is, for our purposes, not the right object to consider. In the following definition we introduce a symmetric operator which 
which will enter our description of a more general spinor dynamics.

\begin{Def} {\rm{(Symmetrized Hamiltonian)}}\label{def:symmetrized_Hamilt}
Let~$(M,g)$ be a globally hyperbolic spacetime and~$(N_t)_{t\in\mathbb{R}}$ a distinguished foliation. Moreover, 
consider a smooth global future directed timelike vector field~$u: M\rightarrow TM$, which will be referred to as the
{\bf{regularizing vector field}}. Then, we define the {\bf{symmetrized Hamiltonian}} at a time~$t$ as the operator 
\[ A_t: C^\infty_0(N_t,SM) \subset \H_{t,g}\rightarrow \H_{t,g}\;, \]
which in local coordinates is given by the following expression
    \begin{equation}\label{eq:At}
    A_t\coloneqq \frac{1}{4}\{u^{0},H_{g}+H_{g}^{\ast}\}+\frac{i}{4}\{u^{\mu},\nabla^{s}_{\mu}-(\nabla^{s}_{\mu})^{\ast} \}\;.
\end{equation} 
\end{Def}

\begin{Remark}\label{rem:chiAt}\hfill\em{
\begin{enumerate}[leftmargin=2em]
\item[{\rm{(i)}}] The symmetrized Hamiltonian~$A_t$ is essentially self-adjoint for any~$t\in\mathbb{R}$, see~\cite[Lemma 5.3]{baryomink}. Its unique self-adjoint extension with dense domain~$\mathcal{D}$
will be denoted with the same symbol, i.e.\
\begin{equation*}
A_t : \mathcal{D}\subset \H_{t,g}\rightarrow\H_{t,g} \:.
\end{equation*} 
\item[{\rm{(ii)}}] Choosing a bounded interval~$I\subset \mathbb{R}$ and denoting its
characteristic function by~$\chi_I$, by the spectral theorem for bounded Borel functions, the operator~$\chi_I(A_t) : \mathcal{D}\rightarrow \H_{t,g}$
is well-defined and bounded. Moreover, since,~$\mathcal{D}$ is a dense subset of~$\H_{t,g}$, there exists a unique bounded linear extension
\begin{equation*}
\chi_I(A_t): \H_{t,g}\rightarrow\H_{t,g}\;.    
\end{equation*}
In Proposition~\ref{prop:chiAt} we will show that the spectral projection operator~$\chi_I(A_t)$ maps even into the space of smooth spinor fields taking values on the Cauchy hypersurface~$N_{t}$.
\hfill\QEDrem    
\end{enumerate}
}
\end{Remark}

We now introduce the equations describing the dynamics of the regularization vector field.

\begin{Def}\label{def:DxL}
    Let~$(N_t)_{t\in\mathbb{R}}$ be a foliation of the  globally hyperbolic spacetime~$(M,g)$ and choose a Cauchy hypersurface~$N_{t_{0}}$. Furthermore, let~$u : N_{t_{0}}\rightarrow TM~$ be a smooth future directed timelike vector field. Then,~$\scrL$ is the set of maximally extended future directed null geodesics~$\gamma: I\subset \mathbb{R}\rightarrow M$ (together with the interval of parametrization~$I$) in~$(M,g)$ such that whenever~$\gamma(s)\in N_{t_{0}}$ it holds that
\[ 
        g_{\gamma(s)}(u_{\gamma(s)},\dot{\gamma}(s))=1 \]
    Furthermore, for an arbitrary point~$p\in M$ we define the hypersurface~$D_{p}\scrL$ of the null bundle as
\[ D_{p}\scrL\coloneqq \{\dot{\gamma}(s)\; | \;(I,\gamma)\in \scrL \;\textrm{and}\; \gamma(s)=p\} \:. \]
\end{Def}

\begin{Def}[Locally rigid dynamics of~$u$]\label{def:locally_rigid_dynamics}
     Let~$(N_t)_{t\in\mathbb{R}}$ be a foliation of the  globally hyperbolic spacetime~$(M,g)$ and choose a Cauchy hypersurface~$N_{t_{0}}$. Furthermore, let~$u : N_{t_{0}}\rightarrow TM~$ be a smooth future directed timelike vector field, and let~$\scrL$ and~$D_{q}\scrL$ (for an arbitrary point~$q\in M$) be as in Definition~\ref{def:DxL}. Consider a sufficiently small~$\Delta t$ such that for any~$q \in N_{t_{0}+\Delta t}$ there exists a normal neighborhood~$U\subset M$ of~$q$ with~$U \cap N_{t_{0}}\neq \emptyset$. Then we define the following timelike vector field at~$q$:
\[ \xi_{q}\coloneqq\frac{1}{\mu_{q}(D_{q}\scrL)}\int_{D_{q}\scrL}\dot{\gamma}(s)\:d\mu_{q}(\dot{\gamma}(s)) \:, \]
     where~$d\mu_{q}(\dot{\gamma}(s))$ is the induced volume measure on~$D_{q}\scrL$,$(I,\gamma)\in\scrL$ and~$\gamma(s)=q$. Using the vector field~$\xi_{q}$, we define the regularizing vector field at~$q$ by
\[ u_{q}\coloneqq\frac{1}{|\xi_{q}|_{g}^{2}}\xi_{q} \:. \]
    Proceeding in an analogous way at each~$q\in N_{t_{0}+\Delta t}$ the timelike vector field~$u$ is extended to~$N_{t_{0}+\Delta t}$. We refer to this process as the {\bf{locally rigid dynamics}} of~$u$.
\end{Def}
The locally rigid dynamics also describes how the regularization evolves in time. We now use it
to define a spinor dynamics deviating slightly from the Dirac dynamics:

\begin{Def}[Locally rigid operator]\label{def:locally_rigid_dynamics_spinors}
Let~$(M,g)$ be a globally hyperbolic spacetime and~$(N_t)_{t\in\mathbb{R}}$ a foliation. Furthermore, let~$u : M \rightarrow TM$ be the regularizing vector field satisfying the locally rigid dynamics,~$(A_t)_{t\in\mathbb{R}}$ the associated family of symmetrized Hamiltonians
and~$\H_{t_{0}}^{\varepsilon}\coloneqq \chi_I(A_{t_{0}})(\H_{t_{0},g})$. Then, the {\bf{locally rigid operator}}~$V_{t_{0}}^{t} : \H^{\varepsilon}_{t_{0}}\rightarrow \H_{t,g}$ is defined by
\[ 
V_{t_{0}}^{t}\coloneqq \lim_{k_{\max}\to\infty} \chi_I(A_t) U^{t}_{t-\Delta t}\cdot\cdot\cdot \chi_I(A_{t_{0}+\Delta t})U^{t_{0}+\Delta t}_{t_{0}} \quad \text{with} \qquad \Delta t\coloneqq \frac{t-t_{0}}{k_{\max}} \:. \] 
where~$I=(-\frac{1}{\varepsilon},-m)$ and for any~$t_{k}<t_{k+1}$ the operator~$U_{t_{k}}^{t_{k+1}} : \H_{t_{k},g} \rightarrow \H_{t_{k+1},g}$ is the unitary operator that describes the Dirac evolution of the regularized spinor fields. The dynamics described by the time evolution operators~$V^t_{t_0}$ is referred to as the
{\bf{locally rigid spinor dynamics}}.
\end{Def}
\par Since the interval~$I$ is bounded, if in addition~$(M,g)$ agrees with Minkowski spacetime outside a compact set~$V\subset M$, we have that~$V_{t_{0}}^{t}(\H^{\varepsilon}_{t_{0}})\subset C^{\infty}(N_t,SM)$ by Proposition~\ref{prop:chiAt}.

\par By construction, the locally rigid operator~$V_{t_{0}}^{t}$ describes the locally rigid evolution of the regularized spinor fields. The adiabatic projections have the advantage of implementing deviations from the Dirac dynamics. Moreover, they guarantee that the locally rigid operator~$V_{t_{0}}^{t}$ is unitary and thus that the scalar product~$(.|.)_t$ is preserved in time.

\par Well-definedness of~$V_{t_{0}}^{t}$, in particular that its image is a subset of the space of smooth spinors fields, might not be clear at a first glance. It is a consequence of the following result. Note that, in order to use elliptic regularity theory and Sobolev embeddings, we will assume that the Cauchy hypersurfaces have a bounded geometry in the following way: we will impose that outside a compact subset~$V\subset M$, our spacetime is given by Minkowski spacetime.

\begin{Prp}\label{prop:chiAt}
   Let~$(M,g)$ be a globally hyperbolic spacetime and~$V\subset M$ a compact subset such that~$g_{p}=\eta_{p}$ for any~$p\in M\setminus V$. Consider the densely defined self-adjoint operator~$A_t: \mathcal{D}\subset\H_{t,g}\rightarrow\H_{t,g}$~ corresponding to the symmetrized Hamiltonian. If~$I\subset\mathbb{R}$ is a bounded interval, then
   \begin{equation*}
       \chi_I(A_t)(\H_{t,g})\subset C^{\infty}(N_t,SM)\;.
   \end{equation*}
\end{Prp}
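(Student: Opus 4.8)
The plan is to use elliptic regularity for the operator $A_t$ together with a bootstrap argument, after first establishing that $A_t$ is, outside the compact set $V$, a reasonable elliptic operator. The starting point is the observation that a vector $\psi \in \chi_I(A_t)(\H_{t,g})$ lies in the domain of every power $A_t^n$; indeed, since $I$ is bounded, the spectral calculus gives $A_t^n \psi = \chi_I(A_t)\, (\lambda^n \chi_I(\lambda))(A_t)\, \psi$, and the function $\lambda \mapsto \lambda^n \chi_I(\lambda)$ is a bounded Borel function, so $A_t^n \psi \in \H_{t,g}$ for all $n \in \N$. Hence it suffices to show that the scale of spaces $\{\,\phi \in \H_{t,g} : A_t^n \phi \in \H_{t,g}\ \text{for all } n \le N\,\}$ is contained in $H^N_{\mathrm{loc}}(N_t,SM)$ (local Sobolev spaces), and then invoke the Sobolev embedding $\bigcap_N H^N_{\mathrm{loc}} \hookrightarrow C^\infty$.

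\textbf{Key steps.} First I would record the structure of $A_t$: from~\eqref{eq:At} and the explicit form~\eqref{eq:spin_connection}--\eqref{eq:Hamiltonian} of $H_g$ and $\nabla^s$, the operator $A_t$ is a first-order differential operator on $N_t$ whose principal symbol is, up to the invertible zeroth-order factor $u^0$, that of the spatial Dirac operator $-i(\gamma_g^0)^{-1}\gamma_g^\mu \partial_\mu$ symmetrized with the $u^\mu$-terms; since $\gamma_g^0$ is invertible and the $\gamma_g^\mu$ generate the Clifford algebra, $A_t$ is an elliptic first-order operator on the hypersurface $N_t$ (this is where one uses that $u$ is timelike, so $u^0 \ne 0$, guaranteeing the leading part does not degenerate). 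Second, since $g = \eta$ outside the compact set $V$, on $N_t \setminus V$ the operator $A_t$ coincides with (a unitary transform of) the flat spatial Dirac operator, which is translation-invariant and globally elliptic, so there is no issue with the geometry at spatial infinity — local elliptic estimates on $N_t$ suffice. Third, I would run the standard bootstrap: from $A_t \psi \in \H_{t,g} = L^2 = H^0_{\mathrm{loc}}$ and $\psi \in L^2$, elliptic regularity gives $\psi \in H^1_{\mathrm{loc}}$; then $A_t^2 \psi \in L^2$ together with $A_t\psi \in H^1_{\mathrm{loc}}$ (obtained by applying the $H^1$ elliptic estimate to $A_t\psi$, using $A_t(A_t\psi) \in L^2$) yields $A_t\psi \in H^2_{\mathrm{loc}}$, hence $\psi \in H^3_{\mathrm{loc}}$ — and iterating, $\psi \in H^N_{\mathrm{loc}}$ for every $N$. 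Finally, the Sobolev embedding theorem on the $3$-manifold $N_t$ gives $\psi \in C^\infty(N_t,SM)$.

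\textbf{Main obstacle.} The routine part is the bootstrap; the part requiring care is justifying that interior elliptic regularity genuinely applies to the \emph{self-adjoint extension} $A_t$ rather than only to the differential operator acting on smooth sections. Concretely, one must know that $\phi \in \mathcal{D}$ with $A_t\phi \in L^2$ implies $\phi$ solves the distributional equation $A_t \phi = A_t\phi$ in the sense of $C^\infty_0(N_t,SM)'$ — i.e. that the abstract action of the self-adjoint extension agrees with the distributional action of the formal differential operator. This follows because $A_t$ (as a differential operator) is formally self-adjoint with respect to $(\cdot|\cdot)_t$ and $C^\infty_0$ is a core, so for $\phi \in \mathcal D$ and any test section $\chi \in C^\infty_0$ one has $(A_t\phi \mid \chi)_t = (\phi \mid A_t\chi)_t$, which is exactly the statement that $A_t\phi$ is the distributional image. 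Once this identification is in place, local elliptic regularity (which is a statement purely about distributional solutions) applies verbatim, and the bootstrap goes through. I would also remark that because $\chi_I(A_t)\psi$ lies in every $\mathcal{D}(A_t^n)$, the argument shows $\psi \in H^N_{\mathrm{loc}}$ for all $N$ simultaneously with uniform-in-compact-sets control, so no separate limiting argument is needed.
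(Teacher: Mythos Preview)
Your proposal is correct and follows essentially the same route as the paper: establish ellipticity of $A_t$ (the paper does this by computing the principal symbol $\sigma_1(A_t,\xi)=i(u^{\mu}\mathrm{Id}-u^{0}\gamma_{g0}\gamma_g^{\mu})\xi_\mu$ and noting it is nonvanishing because its symmetric and antisymmetric parts cannot cancel), use the functional calculus to see that $A_t^{p}\chi_I(A_t)$ is bounded for all $p$ when $I$ is bounded, apply interior elliptic regularity to get $\chi_I(A_t)\psi\in H^{p}_{\mathrm{loc}}$ for every $p$, and finish with Sobolev embedding together with the assumption $g=\eta$ outside $V$ to pass from local to global smoothness. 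The only minor difference is that the paper packages the bootstrap in one step by invoking ellipticity of $A_t^{p}$ directly (so that $A_t^{p}\phi\in L^2\Rightarrow\phi\in H^{p}_{\mathrm{loc}}$), whereas you iterate the first-order estimate; your extra care about identifying the abstract self-adjoint action with the distributional action is a point the paper leaves implicit.
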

\begin{proof}
    In the first place, we show that for any~$p\in\mathbb{N}$, the differential operator~$A_t^{p}$ is an elliptic operator. In a general globally hyperbolic spacetime~$(M,g)$ and a foliation~$(N_t)_{t\in\mathbb{R}}$, the first order differential operator~$A_t$ presents the following form (cf.\ expressions~\eqref{eq:spin_connection}, \eqref{eq:Hamiltonian} and~\eqref{eq:At})
\[ A_t=i(u^{\mu}\textrm{Id}_{\mathbb{C}^{4}}-u^{0}\gamma_{g 0}\gamma_{g}^{\mu})\partial_{\mu}+\textrm{(lower order terms)}\;. \]
    So, the principal symbol of the operator~$A_t$ is given by
    \begin{equation*}
        \sigma_{1}(A_t,\xi)=i(u^{\mu}\textrm{Id}_{\mathbb{C}^{4}}-u^{0}\gamma_{g 0}\gamma_{g}^{\mu})\xi_{\mu}=i(u^{\mu}\textrm{Id}_{\mathbb{C}^{4}}-u^{0}g^{\mu\nu}[\gamma_{g 0},\gamma_{g\nu}])\xi_{\mu} \neq0 \hspace{0.4cm} \textrm{for}\hspace{0.4cm} \xi_{\mu}\neq0,
    \end{equation*}
    where we used that the term in brackets can never vanish since~$u^{0}g^{\mu\nu}[\gamma_{g 0},\gamma_{g\nu}]$ is antisymmetric whereas~$u^{\mu}\textrm{Id}_{\mathbb{C}^{4}}$ is obviously symmetric. So~$A_t$ is a first order elliptic operator. Analogously, we see that for any~$p\in \mathbb{N}$
    \begin{equation*}
        \sigma_{p}(A_t^{p},\xi)=\big(\sigma_{1}(A_t,\xi)\big)^{p}=i^{p}(u^{\mu}\textrm{Id}_{\mathbb{C}^{4}}-u^{0}\gamma_{g 0}\gamma_{g}^{\mu})^{p}\xi_{\mu}^{p}\neq0 \hspace{0.5cm} \textrm{for}\hspace{0.5cm} \xi_{\mu}\neq0\;,
    \end{equation*}
    \par Secondly, we show that~$A_t^{p}\chi_I(A_t)$ is a bounded operator if~$I\subset \mathbb{R}$ is a bounded interval. Consider a point~$p\in N_t\cap V$ and a neighborhood~$U\subset N_t\cap V$. Then, using the properties of the functional calculus we have that 
    \begin{equation*}
        \|A_t^{p}\chi_I(A_t)\|_{L^{2}(U)}=\Big\|\int_I\lambda^{p}dE\Big\|_{L^{2}(U)}\leq |I|^{p}<\infty\;.
    \end{equation*}
    where~$E=\chi(A_t) : \mathcal{B}(\R)\rightarrow \mathcal{L}(\H_{t,g})$. Boundedness of~$A_t^{p}\chi_I(A_t)$ implies that for any~$p\in\mathbb{N}$ and any~$\psi\in L^{2}(U)$, it holds that
    \begin{equation*}
    A_t^{p}\chi_I(A_t)\psi\in L^{2}(U)    \;.
    \end{equation*}
    So, by the (interior) elliptic regularity theory, it follows that~$\chi_I(A_t)\psi\in H^{p}(U)$ for all~$p\in \mathbb{N}$. The Sobolev embedding theorem then implies that~$\chi_I(A_t)\psi\in C^{\infty}(U)$. Since we assume a bounded geometry, we actually have that~$\chi_I(A_t)\psi\in C^{\infty}(N_t,SM)$ by the following argument: as~$N_t\cap V$ is compact we can cover it with finitely many neighborhoods~$U$ to obtain that~$\chi_I(A_t)\psi\in C^{\infty}(N_t\cap V)$. Moreover, as by assumption~$g|_{N_t\setminus V}=\eta$, outside of~$V$ we can directly apply the previous local reasoning to conclude that~$\chi_I(A_t)\psi\in C^{\infty}(N_t,SM)$.
\end{proof} \noindent
We remark that the previous proposition does not hold if the interval~$I$ is unbounded. For example, for~$I=\mathbb{R}$, the operator~$\chi_I(A_t)= \textrm{id}_{\H_{t,g}}$ is the identity, which
clearly does not map to smooth spinors.

We now restrict our attention to the class of spacetimes which we will consider in the remainder of this paper, namely conformally flat spacetimes.

\begin{Lemma}\label{lem:time_independent}
    Let~$(M,g)$ be a conformally flat spacetime and consider an integral operator~$\tilde{Q} : \H^{\varepsilon}_t\subset \H_{t,g}\rightarrow\H_{t,g}$. Then:
    \begin{enumerate}[leftmargin=2em]
    \item[{\rm{(i)}}] There exists a unitary operator 
\[ \tilde{U}: \H_{t,g} \rightarrow \H_{t,\eta} \]
which satisfies~$\tilde{U}\psi=\Omega^{3/2}\psi$ for all~$\psi\in\H_{t,g}$.
      \item[{\rm{(ii)}}] Let~$\H_\eta^{\varepsilon}:=\tilde{U}(\H^{\varepsilon}_t)$ denote the image of the operator~$\tilde{U}$. Then, the kernel of the integral operator~$Q : \H_\eta^{\varepsilon} \rightarrow \H_\eta^{\varepsilon}$ with~$Q:=\tilde{U}\tilde{Q}\tilde{U}^{-1}$ satisfies
\[ Q(x,y)=\Omega^{3/2}(x)\tilde{Q}(x,y)\Omega^{3/2}(y) \:, \]
    where~$x,y\in N_{t}$.
     \item[{\rm{(iii)}}] Provided~$\tilde{Q}$ is trace-class, it holds that
\[ 
    \frac{d}{dt}\tr_{\overline{\H^{\varepsilon}_t}}(\tilde{Q})=\tr_{\overline{\H^{\varepsilon}_t}}\Big(\frac{d}{dt}\tilde{Q}\Big)=\tr_{\overline{{\H}^{\varepsilon}_\eta}}\Big(\frac{d}{dt}Q\Big)\;, \]
where here and in the rest of the paper~$\overline{\H^{\varepsilon}_t}$ and~$\overline{\H^{\varepsilon}_\eta}$ denote the completion of~$\H^{\varepsilon}_t$ and~$\H^{\varepsilon}_\eta$ with respect to the norms~$\|\cdot\|_t$ (induced by~$(\cdot,\cdot)_t$) and~$\|\cdot\|$(induced by~$(\cdot,\cdot)$) respectively.
\end{enumerate}
\end{Lemma}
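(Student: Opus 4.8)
\textbf{Proof plan for Lemma~\ref{lem:time_independent}.}

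The three claims are of increasing difficulty but all rest on the same elementary observation: the conformal rescaling of the metric by $\Omega^2$ induces a rescaling of the spinorial volume measure and of the spin inner product that together amount to the multiplication operator $\psi \mapsto \Omega^{3/2}\psi$ between the two Hilbert spaces. So the plan is to first make this rescaling precise at the level of the scalar products, then read off (i) and (ii) as formal consequences, and finally commute the resulting conjugation past the trace in (iii).

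For part (i), I would start from the definition $(\psi|\phi)_t = \int_{N_t} \Sl \psi|\gamma_g(\nu)\phi\Sr_{S_pM}\, d\mu_{N_t}$ and track how each ingredient transforms under $g = \Omega^2\eta$. The induced Riemannian volume measure on the Cauchy surface $N_t$ rescales as $d\mu_{N_t,g} = \Omega^3\, d\mu_{N_t,\eta}$ (three spatial dimensions), the future-directed unit normal rescales as $\nu_g = \Omega^{-1}\nu_\eta$, and under the standard identification of the spinor bundles for conformally related metrics the spin inner product and Clifford multiplication by a vector are compatible in such a way that $\Sl\,\cdot\,|\gamma_g(\nu_g)\,\cdot\,\Sr$ is, fibrewise, unchanged (Clifford multiplication by the $g$-unit normal equals Clifford multiplication by the $\eta$-unit normal under this identification). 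Collecting the factors, $(\psi|\phi)_{t,g} = \int_{N_t} \Sl \psi|\gamma_\eta(\nu_\eta)\phi\Sr\, \Omega^3\, d\mu_{N_t,\eta} = (\Omega^{3/2}\psi\,|\,\Omega^{3/2}\phi)_{t,\eta}$, which is exactly the statement that $\tilde U\psi := \Omega^{3/2}\psi$ is a unitary isomorphism $\H_{t,g}\to\H_{t,\eta}$. (Here I only need $\Omega$ strictly positive and smooth, so $\tilde U$ and $\tilde U^{-1} = \Omega^{-3/2}$ are both bounded and mutually inverse.)

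Part (ii) is then purely formal. Writing $\tilde Q$ as an integral operator, $(\tilde Q\psi)(x) = \int_{N_t}\tilde Q(x,y)\psi(y)\, d\mu_{N_t,g}(y)$, and conjugating, $Q = \tilde U\tilde Q\tilde U^{-1}$ acts by $(Q\phi)(x) = \Omega^{3/2}(x)\int_{N_t}\tilde Q(x,y)\,\Omega^{-3/2}(y)\phi(y)\, \Omega^3(y)\, d\mu_{N_t,\eta}(y)$, and reading off the kernel with respect to the Minkowski measure $d\mu_{N_t,\eta}$ gives $Q(x,y) = \Omega^{3/2}(x)\tilde Q(x,y)\,\Omega^{3/2}(y)$; one must just be careful that the kernel is defined relative to the $\eta$-measure on the target space, which is precisely the convention making the formula come out as stated. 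For part (iii), the first equality $\frac{d}{dt}\tr(\tilde Q) = \tr(\frac{d}{dt}\tilde Q)$ is the standard fact that differentiation in $t$ commutes with the trace for a $C^1$ trace-class-valued family (trace is a bounded linear functional on the trace class, so this is a difference-quotient argument plus dominated convergence, assuming the stated trace-class hypothesis holds uniformly near $t$). The second equality is invariance of the trace under the unitary conjugation $\tilde U$: since $Q = \tilde U\tilde Q\tilde U^{-1}$ with $\tilde U$ unitary, $Q$ is trace-class iff $\tilde Q$ is and $\tr_{\overline{\H^\varepsilon_\eta}}(Q) = \tr_{\overline{\H^\varepsilon_t}}(\tilde Q)$; differentiating in $t$ (noting $\tilde U$ itself depends on $t$ through $\Omega$, so one applies the cyclicity of the trace to absorb the $t$-derivatives of $\tilde U$ and $\tilde U^{-1}$, which cancel) yields $\frac{d}{dt}\tr(\tilde Q) = \frac{d}{dt}\tr(Q) = \tr(\frac{d}{dt}Q)$.

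\textbf{Main obstacle.} The only genuinely non-routine point is the spin-geometric claim in part (i): that under the canonical identification of $SM$ for the conformally related metrics $g$ and $\eta$, the combination $\Sl\,\cdot\,|\gamma_g(\nu_g)\,\cdot\,\Sr$ is pointwise preserved, i.e.\ the $\Omega^3$ from the volume measure is the \emph{entire} discrepancy between the two scalar products. This requires knowing the precise conformal weights of the spin structure, the spin inner product, and Clifford multiplication (Clifford multiplication by a covector of $\eta$-norm one versus $g$-norm one), and checking that the normal-vector rescaling $\nu_g = \Omega^{-1}\nu_\eta$ exactly cancels against the rescaling of Clifford multiplication. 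I would handle this by working in a local $g$-orthonormal spinor frame built from an $\eta$-orthonormal one by the rescaling $e_j \mapsto \Omega^{-1}e_j$, under which the spinor identification is the identity on fibres, $\gamma_g(e_j) = \gamma_\eta(\Omega^{-1}\cdot\Omega e_j)$ matches up, and the spin inner product is frame-independent; everything else is bookkeeping. One should also make explicit at the outset that the conformal flatness assumed here is global (as noted after~\eqref{eq:metric_conf_flat}, $M=\R\times\R^3$ with a global chart), so there is a single global $\Omega$ and no patching is needed.
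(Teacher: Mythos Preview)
Your argument for (i) and (ii) matches the paper's. For (iii), your claim that the first equality is ``the standard fact that differentiation commutes with the trace'' is too quick: the Hilbert space $\overline{\H^\varepsilon_t}$ and its inner product $(\cdot|\cdot)_t$ both depend on $t$ through $\Omega$, so this is not a family of trace-class operators on a fixed space, and the bounded-linear-functional argument does not apply as stated. The paper handles this by choosing a $t$-independent orthonormal basis of $\overline{\H^\varepsilon_\eta}$, pulling it back via $\tilde U^{-1}$ to a $t$-dependent basis $(\varphi_j)$ of $\overline{\H^\varepsilon_t}$ satisfying $\dot\varphi_j = -\tfrac{3}{2}\tfrac{\dot\Omega}{\Omega}\varphi_j$, and differentiating the sum $\sum_j(\varphi_j|\tilde Q\varphi_j)_t$ directly; the terms from $\dot\varphi_j$ on each side cancel against the term from $\tfrac{d}{dt}(\Omega^3)$ in the measure.

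Your cyclicity argument, correctly organised, is the abstract version of the same cancellation: first use unitary invariance at each fixed $t$ to write $\tr_{\overline{\H^\varepsilon_t}}(\tilde Q)=\tr_{\overline{\H^\varepsilon_\eta}}(Q)$, then differentiate on the now-fixed space $\overline{\H^\varepsilon_\eta}$ (here the standard fact \emph{does} apply), and finally expand $\dot Q=\dot{\tilde U}\tilde Q\tilde U^{-1}+\tilde U\dot{\tilde Q}\tilde U^{-1}+\tilde U\tilde Q\dot{\tilde U}^{-1}$ and use cyclicity together with $\dot{\tilde U}\tilde U^{-1}=-\tilde U\dot{\tilde U}^{-1}$ to obtain $\tr_{\overline{\H^\varepsilon_\eta}}(\dot Q)=\tr_{\overline{\H^\varepsilon_\eta}}(\tilde U\dot{\tilde Q}\tilde U^{-1})=\tr_{\overline{\H^\varepsilon_t}}(\dot{\tilde Q})$. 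So the ingredients are all present; you just need to reorder them so that the passage to the fixed Hilbert space comes before you invoke the standard differentiation-under-the-trace argument.
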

In other words, the operator~$Q$ is defined by the commutativity of the following diagram,
\[ \begin{tikzcd}
\H^{\varepsilon}_t \arrow{r}{\tilde{Q}} \arrow[swap]{d}{\tilde{U}} & \H^{\varepsilon}_t \arrow{d}{\tilde{U}} \\%
\H_\eta^{\varepsilon} \arrow{r}{Q}& \H_\eta^{\varepsilon}
\end{tikzcd}
\]
\begin{proof}[Proof of Lemma~\ref{lem:time_independent}]
    Existence of a mapping~$\tilde{U}$ into~$(L^{2}(\mathbb{R}^{3},\mathbb{C}^{4}),(\cdot|\cdot))$ (where~$(\cdot|\cdot)$ denotes the standard scalar product of spinor fields in Minkowski spacetime) follows from a simple computation in which we rewrite the scalar product~$(\cdot|\cdot)_t$ in terms of the scalar product~$(\cdot|\cdot)$. Consider~$\psi,\varphi\in\H^{\varepsilon}_t$ and recall that, by conformal flatness of~$(M,g)$, for a Cauchy hypersurface~$N_t$ with unit normal~$\nu$, it holds that~$d\mu_{N_t}=\Omega^{3}(x)d^{3}x$ and~$\gammag(\nu)=\gamma_\eta(\partial_t)$:
    \begin{align*}
        &(\psi|\phi)_t=\int_{N_t} \Sl\psi|\gammag(\nu)\phi\Sr_{S_{p}M} d\mu_{N_t}=\int_{\mathbb{R}^{3}} \Sl\psi|\gamma_\eta(\partial_t)\phi\Sr_{\mathbb{C}^{4}} \Omega^{3}(x)d^{3}x\\
        &=\int_{\mathbb{R}^{3}} \Sl (\Omega^{3/2}\psi)(x)|\gamma_\eta(\partial_t)(\Omega^{3/2}\phi)(x)\Sr_{\mathbb{C}^{4}} d^{3}x=(\Omega^{3/2}\psi|\Omega^{3/2}\varphi)=:(\tilde{U}\psi|\tilde{U}\varphi)
    \end{align*}
    Consider the space~$\H_\eta^{\varepsilon}\coloneqq\tilde{U}(\H^{\varepsilon}_t)$. The second claim then follows simply by demanding that the integral operator~$Q$ satisfies that~$Q(\tilde{U}\psi)=\tilde{U}(\tilde{Q}\psi)$ for all~$\psi\in\H^{\varepsilon}_t$ and recalling that~$\tilde{Q}$ is also an integral operator on~$\H^{\varepsilon}_t$ with kernel~$\tilde{Q}(x,y)$
    \begin{align*}
        &(Q(\tilde{U}\psi))(x)=(\tilde{U}(\tilde{Q}\psi))(x)=\Omega^{3/2}(x)(\tilde{Q}\psi)(x)=\Omega^{3/2}(x)\int_{\mathbb{R}^{3}}\tilde{Q}(x,y)\psi(y)\Omega^{3}(y)d^{3}y\\
        &=\int_{\mathbb{R}^{3}}Q(x,y)(\tilde{U}\psi)(y)d^{3}y \hspace{0.5cm}\textrm{with}\hspace{0.5cm}Q(x,y):=\Omega^{3/2}(x)\tilde{Q}(x,y)\Omega^{3/2}(y)\;,
    \end{align*}
    where $x\in N_{t}$. For the last claim, we consider an orthonormal basis~$(\varphi_{j})_{j\in\mathbb{N}}$  of~$\H^{\varepsilon}_t$ which yields the orthonormal basis~$(\Omega^{3/2}\varphi_{j})_{j\in\mathbb{N}}$ of~$\H_\eta^{\varepsilon}$. Note that the elements of this basis are independent of the time function~$t$ because they are elements of~$\H_{t,\eta}$ (for any~$t\in\mathbb{R}$), so~$\frac{d}{dt}(\Omega^{3/2}\varphi_{j})=0$ or, equivalently,
    \begin{equation*}
        \Omega^{3/2}\frac{d\varphi_{j}}{dt}+\frac{3}{2}\frac{d\Omega}{dt}\Omega^{1/2}\varphi_{j}=0\iff\frac{d\varphi_{j}}{dt}=-\frac{3}{2}\frac{d\Omega}{dt}\frac{1}{\Omega}\varphi_{j}\;.
    \end{equation*}
    For this reason, differentiating the trace (in~$\H^{\varepsilon}_t$) of~$\tilde{Q}$ yields the following simple expression
    \begin{align*}
        &\frac{d}{dt}\tr_{\overline{\H^{\varepsilon}_t}}(\tilde{Q})=\frac{d}{dt}\Big(\sum_{j=0}^{\infty}(\varphi_{j}|\tilde{Q}\varphi_{j})_t\Big)=\frac{d}{dt}\Big(\sum_{j=0}^{\infty}\int_{\mathbb{R}^{3}}\Sl\varphi_{j}|\gammaeta(\partial_t)\tilde{Q}\varphi_{j}\Sr_{\mathbb{C}^{4}}\Omega^{3}d^{3}x\Big)\\
        &=\sum_{j=0}^{\infty}\Big[\int_{\mathbb{R}^{3}}\Sl\frac{d\varphi_{j}}{dt}|\gammaeta(\partial_t)\tilde{Q}\varphi_{j}\Sr_{\mathbb{C}^{4}}\Omega^{3}d^{3}x+\int_{\mathbb{R}^{3}}\Sl\varphi_{j}|\gammaeta(\partial_t)\frac{d}{dt}(\tilde{Q})\varphi_{j}\Sr_{\mathbb{C}^{4}}\Omega^{3}d^{3}x\\
        &+\int_{\mathbb{R}^{3}}\Sl\varphi_{j}|\gammaeta(\partial_t)\tilde{Q}\frac{d\varphi_{j}}{dt}\Sr_{\mathbb{C}^{4}}\Omega^{3}d^{3}x+3\int_{\mathbb{R}^{3}}\Sl\varphi_{j}|\gammaeta(\partial_t)\tilde{Q}\varphi_{j}\Sr_{\mathbb{C}^{4}}\frac{d\Omega}{dt}\Omega^{2}d^{3}x\Big]\\
        &=\int_{\mathbb{R}^{3}}\Sl\varphi_{j}|\gammaeta(\partial_t)\frac{d}{dt}(\tilde{Q})\varphi_{j}\Sr_{\mathbb{C}^{4}}\Omega^{3}d^{3}x=\tr_{\overline{\H^{\varepsilon}_t}}\Big(\frac{d}{dt}\tilde{Q}\Big)\;,
    \end{align*}
    Finally, the last claim follows from invariance of the trace under unitary transformations,
    \begin{align*}
        \frac{d}{dt}\tr_{\overline{\H^{\varepsilon}_t}}(\tilde{Q})=\frac{d}{dt}\tr_{\overline{\H^{\varepsilon}_t}}(\Tilde{U}^{-1}Q\Tilde{U})=\frac{d}{dt}\tr_{\overline{\H^{\varepsilon}_{\eta}}}(Q)=\tr_{\overline{\H^{\varepsilon}_{\eta}}}\Big(\frac{d}{dt}Q\Big)
    \end{align*}
\end{proof}

\begin{Def} \label{def:Baryo} {\rm{(Rate of baryogenesis)}}.
    Let~$(M,g)$ be a conformally flat spacetime,~$(N_t)_{t\in\mathbb{R}}$ a distinguished foliation with~$(A_t)_{t\in\mathbb{R}}$ the associated family of symmetrized Hamiltonians and~$\varepsilon,\Lambda>0$. Consider an initial subspace~$\H^{\varepsilon}_{t_{0}}\subset C^{\infty}(N_{t_{0}},SM)$ and an isometric operator~$V_{t_{0}}^{t}: \H_{t_{0}}^{\varepsilon}\rightarrow \H_{t,g}$, and define for each time~$t\in\mathbb{R}$ the space~$\H_t^{\varepsilon}:=V_{t_{0}}^{t}(\H_{t_{0}}^{\varepsilon})$. Then, the  {\bf{rate of baryogenesis}} is
    \begin{align}\label{eq:baryo_rate}
        B_t\coloneqq \frac{d}{dt}\text{\rm{tr}}_{\overline{\H^{\varepsilon}_t}} \big( \eta_{\Lambda}(\tilde{H}_\eta)\big(\chi_I(A_t)-\chi_{I}(\tilde{H}_{\eta})\big)\big)\;,
    \end{align}
    where~$\eta_{\Lambda}\in C^{\infty}_{0}((-\Lambda,\Lambda),[0,1])$ is a smooth cut-off operator,~$I\coloneqq (-1/\varepsilon, -m)$ and~$\tilde{H}_\eta:=\tilde{U}^{-1}H_\eta\tilde{U}$ (with $\Tilde{U} :\H_{t,g}\rightarrow \H_{t,\eta}$ the unitary operator introduced in Lemma~\ref{lem:time_independent}). 
\end{Def} \noindent
In the previous definition it is tacitly understood
that the operator product
\begin{align*}
\eta_{\Lambda}(\tilde{H}_\eta)\big(\chi_I(A_t)-\chi_{I}(\tilde{H}_{\eta})\big)    
\end{align*}
is trace-class for all~$t\in\mathbb{R}$. In  Proposition~\ref{prop:Bt_conf_flat} and Lemma~\ref{lem:trace_class} we will prove that it is trace-class if the assumptions of Theorem~\ref{theo:baryo_rate} are satisfied.
\begin{Remark}\label{rem:baryo_rate}
\em{
 In addition, for computational simplicity, we will assume that there exists a sufficiently small~$\delta>0$ such that~$\eta_{\Lambda}(\omega)=1$ for~$\omega\in(-\Lambda+\delta,\Lambda-\delta)$. Moreover, we will always assume that the orders of magnitude of the parameters~$m, \varepsilon$ and~$\Lambda$ satisfy that
\[ -\frac{1}{\varepsilon}\ll-\Lambda\ll-m\:. \]
    }
    \hfill\QEDrem
\end{Remark}
\noindent Note that, more explicitly, the operator~$\tilde{H}_\eta$ is given by
\begin{align*} 
\tilde{H}_\eta\psi&=(\tilde{U}^{-1}H_\eta\tilde{U})\psi=(\Omega^{-3/2}H_\eta\Omega^{3/2})\psi=\Omega^{-3/2}\big(\Omega^{3/2}H_\eta-\frac{3}{2}i\pr_{\mu}(\Omega)\Omega^{1/2}\gamma_{\eta 0}\gamma_\eta^{\mu}\big)\psi\nonumber\\
&=\Big(H_\eta-\frac{3}{2}i\frac{\pr_{\mu}(\Omega)}{\Omega}\gamma_{\eta 0}\gamma_\eta^{\mu}\Big)\psi\;,
\end{align*}
which is a self-adjoint operator on the Sobolev space $H^{1}(\R^{3})$ since it is unitarily equivalent to $H_{\eta}$ (so also their spectra agree, i.e.~$\sigma(\tilde{H}_\eta)=(-\infty,m]\cup[m,\infty)$). 

\section{General results for conformally flat spacetimes} \label{secgenconf}
\par It is a well known feature of Lorentzian geometry that conformal transformations preserve the causal structure of spacetime: a vector field~$X$ is timelike (or null) in~$(M,g)$ if and only if it is timelike (or null) in~$(M,\Omega^{2}g)$, where~$\Omega : M \rightarrow (0,\infty)$ is smooth. More remarkably is that a specific timelike vector field, namely the regularizing vector field, is a conformal invariant.

\begin{Prp}\label{prop:u_conformal_invariant}
    The locally rigid evolution of the regularizing vector field~$u$ is a conformal invariant.
\end{Prp}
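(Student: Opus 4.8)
The plan is to unwind the definition of the locally rigid dynamics (Definitions~\ref{def:DxL} and~\ref{def:locally_rigid_dynamics}) and check that every ingredient entering the construction of~$u$ is insensitive to a conformal rescaling~$g \rightsquigarrow \hat{g} = \Omega^2 g$. The construction has four steps — the family~$\scrL$ of normalized null geodesics, the fibers~$D_q\scrL$ of the null bundle with their induced measure, the averaged vector field~$\xi_q$, and the final normalization~$u_q = \xi_q / |\xi_q|_g^2$ — so I would treat them one at a time. Throughout, I would fix a Cauchy hypersurface~$N_{t_0}$ on which the initial data~$u|_{N_{t_0}}$ is prescribed, and note at the outset that the initial vector field is the \emph{same} timelike vector field in both~$(M,g)$ and~$(M,\hat{g})$ (only its norm changes); the content of the proposition is that the time evolution to~$N_{t_0+\Delta t}$ produces the same vector field in both geometries.

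The key geometric inputs are standard. First, null \emph{pregeodesics} are a conformal invariant: if~$\gamma$ is a null geodesic for~$g$, then after a suitable reparametrization it is a null geodesic for~$\hat{g}$, and the affine parameters are related by~$d\hat{s}/ds = c\,\Omega^2(\gamma(s))$ for a constant~$c$ (this is the classical computation with the Levi-Civita connections, $\hat{\nabla}_X Y = \nabla_X Y + X(\ln\Omega)Y + Y(\ln\Omega)X - g(X,Y)\grad\ln\Omega$, the last term dropping out for null~$X=Y=\dot\gamma$). Second, I would use this to pin down the reparametrization by the normalization condition in Definition~\ref{def:DxL}: demanding~$g_{\gamma(s)}(u,\dot\gamma(s)) = 1$ versus~$\hat{g}_{\gamma(s)}(u,\dot{\hat\gamma}(\hat s)) = 1$ forces~$\dot{\hat\gamma} = \Omega^{-2}\dot\gamma$ at the point of~$N_{t_0}$, hence (by the pregeodesic relation) everywhere; so the normalized tangent vectors rescale pointwise by~$\Omega^{-2}$. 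Consequently~$D_q\hat{\scrL} = \Omega^{-2}(q)\, D_q\scrL$ as subsets of~$T_qM$. Third, I would check that the induced measure~$d\mu_q$ on this fiber — being the measure on the celestial-sphere-like hypersurface of the null cone — transforms so that the \emph{normalized} average~$\xi_q^{(\hat g)} = \mu_q(D_q\hat{\scrL})^{-1}\int_{D_q\hat{\scrL}}\dot{\hat\gamma}\,d\mu_q$ equals~$\Omega^{-2}(q)\,\xi_q^{(g)}$: the scaling factor~$\Omega^{-2}(q)$ pulls out of the integrand, and whatever the measure does, it cancels between numerator and denominator. Finally the normalization undoes exactly this factor: since~$|\xi_q^{(\hat g)}|_{\hat g}^2 = \Omega^2(q)\,|\Omega^{-2}(q)\xi_q^{(g)}|_g^2 = \Omega^{-2}(q)\,|\xi_q^{(g)}|_g^2$, we get~$u_q^{(\hat g)} = |\xi_q^{(\hat g)}|_{\hat g}^{-2}\,\xi_q^{(\hat g)} = \Omega^2(q)\,\Omega^{-2}(q)\,u_q^{(g)} = u_q^{(g)}$, which is the claim.

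The step I expect to be the main obstacle is the middle one: making precise the hypersurface~$D_q\scrL$ and its induced volume measure, and verifying that the average~$\xi_q$ really is conformally covariant with weight~$-2$. The pointwise rescaling~$\dot{\hat\gamma} = \Omega^{-2}\dot\gamma$ is the easy part; the subtlety is that~$D_q\scrL$ is, a priori, only a subset of the null cone in~$T_qM$ carved out by which geodesics happen to meet~$N_{t_0}$ with the right normalization, and one must argue that this subset (as a set, parametrized say by the direction of~$\dot\gamma$ modulo the scaling) is the \emph{same} for~$g$ and~$\hat{g}$, with the two embeddings into~$T_qM$ differing only by the overall dilation~$\Omega^{-2}(q)$. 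Once that parametrization is fixed, the measure appears identically in numerator and denominator of~$\xi_q$ and its precise form is irrelevant — so I would phrase the argument to exploit exactly this cancellation rather than computing the Jacobian of~$d\mu_q$ explicitly. A clean way to organize this is to introduce the common index set (directions at~$q$ of null geodesics through~$q$ that hit~$N_{t_0}$), observe that the normalization condition selects a unique representative tangent vector in each geometry, and express both~$\xi_q^{(g)}$ and~$\xi_q^{(\hat g)}$ as integrals over this fixed index set; the~$\Omega^{-2}(q)$ then factors out and the result follows.
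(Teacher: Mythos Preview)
Your proposal is correct and follows essentially the same route as the paper's proof: both use the conformal invariance of null pregeodesics to obtain the pointwise rescaling~$\dot{\hat\gamma}=\Omega^{-2}\dot\gamma$ (fixing the free constant via the normalization condition on~$N_{t_0}$), deduce~$D_q\hat{\scrL}=\Omega^{-2}(q)\,D_q\scrL$, observe that the measure factor cancels in the normalized average so that~$\hat\xi_q=\Omega^{-2}(q)\xi_q$, and then check that the final normalization~$u_q=\xi_q/|\xi_q|^2$ absorbs exactly this factor. The only cosmetic difference is that the paper records the explicit scaling~$\mu_q(D_q\scrL^{g})=\Omega^{-2}(q)\,\mu_q(D_q\scrL^{h})$ with a citation, whereas you argue more abstractly via the common index set that the precise Jacobian is irrelevant; both are fine.
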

    By this conclusion we mean the following: consider a foliation~$(N_t)_{t\in \mathbb{R}}$ of the smooth manifold~$M$, a Lorentzian metric~$g$ on~$M$ with~$[g]$ the equivalence class of metrics which are conformally equivalent to~$g$, and a timelike vector field~$u : N_{t_{0}}\rightarrow TM$ (with~$t_{0}\in\mathbb{R}$). For a specific~$h\in[g]$, we assume that the initial vector field~$u : N_{t_{0}}\rightarrow TM$ evolves according to the locally rigid dynamics in~$(M,h)$. Then, the statement of the previous proposition is that, starting from the same~$u|_{N_{t_{0}}}$ and considering any other metric~$g\in[g]$, the locally rigid dynamics in~$(M,g)$ yields the same vector field~$u : M \rightarrow TM$.
    \begin{proof}[Proof of Lemma~\ref{prop:u_conformal_invariant}.]
        Let~$g$ and~$h$ be two conformally equivalent metrics on the manifold~$M$ with conformal factor~$\Omega : M \rightarrow(0,\infty)$, i.e.\
    \begin{equation*}
        g=\Omega^{2}h \;.
    \end{equation*}
     Furthermore, consider a foliation~$(N_t)_{t\in\mathbb{R}}$ and a fixed initial regularizing vector field~$u : N_{t_{0}}\rightarrow TM$ (with~$t_{0}\in\mathbb{R}$). In the following, we will call a curve~$\gamma : I\rightarrow M$ a~$h$-geodesic (or a~$g$-geodesic) provided it is a geodesic with respect to the metric~$h$ (respectively, the metric~$g$). 
     \par Applying Definition~\ref{def:DxL} to~$(M,h)$,~$(I,\gamma)\in \scrL^{h}$ provided~$\gamma: I\subset \mathbb{R}\rightarrow M, s\mapsto \gamma(s)$ is null~$h$-geodesic which satisfies that if~$\gamma(s)\in N_{t_{0}}$, then
    \begin{equation*}
        h_{\gamma(s)}(u_{\gamma(s)},\dot{\gamma}(s))=1 \;
    \end{equation*}
    It is well known that null geodesics are, up to a reparametrization, conformal invariants. In particular, given a null~$h$-geodesic~$\gamma : I\rightarrow M$ we obtain a null~$g$-geodesic~$\tilde{\gamma} : J\rightarrow M$ through the reparametrization~$\psi : J\rightarrow I , \tilde{s}\mapsto s(\tilde{s})$ which satisfies (cf.~\cite[Section 2.3]{sanchez_candela} or \cite[Section 4.2]{dgc})
    \begin{equation*}
            \frac{ds}{d\tilde{s}}(\tilde{s})=\frac{1}{c\Omega^{2}(\Tgamma(\tilde{s}))}\;,
    \end{equation*}
    where~$c\in(0,\infty)$ is an arbitrary integration constant which we set equal to one. Then, by the chain rule we directly see that the (reparametrized) null~$g$-geodesic satisfies that
    \begin{equation*}
        \tilde{\gamma}\coloneqq\gamma\circ\psi : J\rightarrow M \;,\quad \dot{\tilde{\gamma}}\coloneqq\frac{d\tilde{\gamma}}{d\tilde{s}}=\Omega^{-2}\dot{\gamma} \;,
    \end{equation*}
    where~$\dot{\gamma}\coloneqq\frac{d\gamma}{ds}$. Consider a point~$p\in N_{t_{0}}$ and parameters~$s_{0}\in I$ ,~$\tilde{s}_{0}\in J$ such that~$\gamma(s_{0})=\Tgamma(\tilde{s}_{0})=p$. It then directly follows that if~$(I,\gamma)\in \scrL^{h}$, then~$\dot{\Tgamma}$ satisfies that
    \begin{equation*}
        g_{p}(u_{p},\dot{\Tgamma}(\tilde{s}_{0}))=\Omega^{2}(p)h_{p}(u_{p},\Omega^{-2}(p)\dot{\gamma}(s_{0}))=h_{p}(u_{p},\dot{\gamma}({s}_{0}))=1\;,
    \end{equation*}
    in other words~$(J,\tilde{\gamma})\in \scrL^{g}$, where~$\scrL^{g}$ is the set of maximally extended null~$g$-geodesics that satisfy Definition~\ref{def:DxL}. Moreover, for an arbitrary point~$q\in M$, the sets~$D_{q}\scrL^{h}$ and~$D_{q}\scrL^{g}$ are related by
\[ D_{q}\scrL^{g}=\{\Omega^{-2}(q)\dot{\gamma}(s_{0})\; : \;\dot{\gamma}(s_{0})\in D_{q}\scrL^{h}\} \:, \]
    where ~$s_{0}\in J$ is such that~$\gamma(s_{0})=q$. Consider now that the regularizing vector field~$u$ evolves through spacetime following the locally rigid dynamics (recall Definition~\ref{def:locally_rigid_dynamics}). With respect to the metric~$h$, at an arbitrary point~$q$ the regularizing vector field~$u_{q}$ is defined by
    \begin{equation*}
        \xi_{q}\coloneqq\frac{1}{\mu_{q}(D_{q}\scrL^{h})}\int_{D_{q}\scrL^{h}}\dot{\gamma}(s)\:d\mu_{q}(\dot{\gamma}(s)) \:, \qquad u_{q}\coloneqq\frac{1}{|\xi_{q}|_{h}^{2}}\xi_{q} \:.
    \end{equation*}
    Analogously we can define the vector field~$\tilde{\xi}$ and the regularizing vector field~$\tilde{u}$ at~$q$
    \begin{align*}
        &\tilde{\xi}_{q}\coloneqq\frac{1}{\mu_{q}(D_{q}\scrL^{g})}\int_{D_{q}\scrL^{g}}\dot{\Tgamma}(\tilde{s})\:d\mu_{q}(\dot{\Tgamma}(\tilde{s}))= \Omega^{-2}(q)\xi_{q}\\
        &\tilde{u}_{q}\coloneqq\frac{1}{|\tilde{\xi}_{q}|_{g}^{2}}\tilde{\xi}_{q}=\frac{1}{\Omega^{-2}(q)|\xi_{q}|_{h}^{2}}\Omega^{-2}(q)\xi_{q}=u_{q}\:,
    \end{align*}
    where we used that~$\mu_{q}(D_{q}\scrL^{g})=\Omega^{-2}(q)\mu_{q}(D_{q}\scrL^{h})$ (see also~\cite[Section 4.2]{dgc}). In other words, the locally rigid evolution of the regularizing vector field~$u$ is independent of the considered metric~$g$ in the conformal equivalence class~$[g]$.    
    \end{proof}

As a direct consequence of the previous proposition it suffices to determine the dynamics of the regularizing vector field in Minkowski spacetime to know it for any conformally flat spacetime. Moreover, since the operator~$\tilde{H}_\eta$ involves spatial derivatives of the conformal factor, for conformally flat spacetimes with~$\Omega=\Omega(t)$ we have that the coordinate expression of~$H_\eta$ and $\tilde{H}_\eta$ agree. This, together with the previous proposition, brings about the following Corollary.

\begin{Corollary}\label{cor:At_FLRW}
    In the massless case ($m=0$), the rate of baryogenesis in Minkowski and conformally flat spacetimes with~$\Omega=\Omega(t)$ agree.
\end{Corollary}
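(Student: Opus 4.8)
The plan is to use the unitary map~$\tilde{U}=\Omega^{3/2}$ of Lemma~\ref{lem:time_independent}. Because~$\tilde{H}_\eta=\tilde{U}^{-1}H_\eta\tilde{U}$, conjugation by~$\tilde{U}$ sends~$f(\tilde{H}_\eta)$ to~$f(H_\eta)$ for every bounded Borel function~$f$, so it carries the operator~$\eta_\Lambda(\tilde{H}_\eta)\,(\chi_I(A_t)-\chi_I(\tilde{H}_\eta))$ of Definition~\ref{def:Baryo} to~$\eta_\Lambda(H_\eta)\,(\chi_I(\tilde{U}A_t\tilde{U}^{-1})-\chi_I(H_\eta))$; moreover, by the conformal invariance of the massless Dirac equation,~$\tilde{U}$ conjugates the Dirac evolution operators of~$(M,g)$ into those of Minkowski spacetime, and hence maps the spaces~$\H^{\varepsilon}_t$ onto the corresponding Minkowski spaces. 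Using Lemma~\ref{lem:time_independent}(iii) to express the trace in~$\overline{\H^{\varepsilon}_\eta}$, the rate of baryogenesis~$B_t$ is the~$t$-derivative of the trace of the operator just displayed. Since in Minkowski spacetime~$\tilde{U}=\mathrm{id}$ and~$\tilde{H}_\eta=H_\eta$, the corollary reduces to the single identity~$\tilde{U}A_t\tilde{U}^{-1}=A_t^{\eta}$, where~$A_t^{\eta}$ is the symmetrized Hamiltonian of Minkowski spacetime built from the \emph{same} regularizing vector field~$u$; and that~$u$, in particular its components~$u^{0}$ and~$u^{\mu}$, really is common to both spacetimes is exactly Proposition~\ref{prop:u_conformal_invariant}.

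For the proof of this identity I would first collect three facts valid when~$m=0$ and~$\Omega=\Omega(t)$. (i)~Conformal invariance of the massless Dirac operator: a solution of~$D_g\psi=0$ corresponds under~$\psi\mapsto\Omega^{3/2}\psi$ to a solution of the Minkowski Dirac equation, and since~$\Omega$ does not depend on the spatial variables, differentiating~$i\partial_t(\Omega^{3/2}\psi)$ yields~$H_g=H_\eta-\tfrac{3}{2}\,i\,\Omega'/\Omega$, the correction being a \emph{skew}-adjoint multiplication operator. (ii)~The conformal transformation of the Levi-Civita spin connection shows that~$\nabla^{s}_\mu$ of~$(M,g)$ differs from the Minkowski spin connection~$\nabla^{s,\eta}_\mu$ only by a zeroth order term~$C_\mu$ (multiplication by a matrix-valued function) which, since~$\Omega$ depends only on~$t$, is a function of~$t$ alone and is \emph{self}-adjoint for the scalar product~$(\cdot|\cdot)$; this is the spin-connection bookkeeping carried out in the appendix. (iii)~As already noted in the text before the corollary,~$\tilde{H}_\eta=H_\eta$ as coordinate differential operators, again because~$\partial_\mu\Omega=0$.

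Granting these, the identity~$\tilde{U}A_t\tilde{U}^{-1}=A_t^{\eta}$ is a short computation. On each leaf~$N_t$ the operator~$\tilde{U}$ is multiplication by the positive constant~$\Omega^{3/2}(t)$, hence commutes with every differential operator on~$N_t$; and since~$\tilde{U}$ is unitary from~$(\H_{t,g},(\cdot|\cdot)_t)$ onto~$(\H_{t,\eta},(\cdot|\cdot))$, conjugating a~$(\cdot|\cdot)_t$-adjoint yields the~$(\cdot|\cdot)$-adjoint, which I write~$\dagger$. Therefore~$\tilde{U}A_t\tilde{U}^{-1}=\tfrac14\{u^{0},H_g+H_g^{\dagger}\}+\tfrac{i}{4}\{u^{\mu},\nabla^{s}_\mu-(\nabla^{s}_\mu)^{\dagger}\}$. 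By~(i),~$H_g+H_g^{\dagger}=(H_\eta-\tfrac{3}{2}i\,\Omega'/\Omega)+(H_\eta+\tfrac{3}{2}i\,\Omega'/\Omega)=2H_\eta=H_\eta+H_\eta^{\dagger}$; by~(ii),~$\nabla^{s}_\mu-(\nabla^{s}_\mu)^{\dagger}=(\nabla^{s,\eta}_\mu+C_\mu)-(\nabla^{s,\eta}_\mu+C_\mu)^{\dagger}=\nabla^{s,\eta}_\mu-(\nabla^{s,\eta}_\mu)^{\dagger}$. Hence~$\tilde{U}A_t\tilde{U}^{-1}=A_t^{\eta}$; passing to operator closures gives~$\chi_I(\tilde{U}A_t\tilde{U}^{-1})=\chi_I(A_t^{\eta})$, and combined with the first paragraph this proves that~$B_t$ coincides with the rate of baryogenesis in Minkowski spacetime.

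The main obstacle is establishing (i) and (ii) precisely: that the conformal anomaly in~$H_g-H_\eta$ is \emph{exactly} the skew-adjoint operator~$-\tfrac{3}{2}i\,\Omega'/\Omega$, so that it is annihilated by the symmetrization~$H_g+H_g^{*}$, and that~$C_\mu$ is genuinely self-adjoint for~$(\cdot|\cdot)$, so that it drops out of the antisymmetric combination~$\nabla^{s}_\mu-(\nabla^{s}_\mu)^{*}$. These are short but convention-sensitive computations with the connection coefficients~$E_j$, and they are exactly where~$m=0$ enters: for~$m\neq0$ the mass term of~$H_g$ equals~$m\,(\gammag^{0})^{-1}=m\,\Omega\,(\gammaeta^{0})^{-1}$, which differs from its Minkowski counterpart~$m\,(\gammaeta^{0})^{-1}$ by the \emph{symmetric} operator~$m\,(\Omega-1)\,(\gammaeta^{0})^{-1}$; this is not removed by symmetrization and survives conjugation by~$\tilde{U}$, so that~$\tilde{U}A_t\tilde{U}^{-1}\neq A_t^{\eta}$ in general.
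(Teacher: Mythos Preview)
Your proof is correct and follows essentially the same strategy as the paper: both reduce the corollary to the identity~$\tilde{U}A_t\tilde{U}^{-1}=A_t^{\eta}$ (with the same regularizing vector field, via Proposition~\ref{prop:u_conformal_invariant}) and then invoke trace invariance under unitaries. The paper obtains this identity by quoting the explicit coordinate formula for~$A_t$ from Lemma~\ref{lem:coefs_conformally_flat} and comparing it directly with~$A_t^{\eta}$, whereas you give a more structural argument: since~$\Omega=\Omega(t)$ is constant on each leaf,~$\tilde{U}$ commutes with all spatial operators and intertwines the two adjoints, so it suffices to see that the conformal corrections to~$H_g$ and~$\nabla^s_\mu$ are respectively skew-adjoint and self-adjoint and hence drop out of the symmetrized combinations~$H_g+H_g^{*}$ and~$\nabla^s_\mu-(\nabla^s_\mu)^{*}$. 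Your route is cleaner conceptually and makes the role of~$m=0$ transparent; you are also a bit more explicit than the paper about why the regularized subspaces~$\H^\varepsilon_t$ correspond (via conformal invariance of the massless Dirac evolution together with the transformation of~$\chi_I(A_t)$). The only caveat is that your facts~(i) and~(ii) still require the explicit spin-connection computation of the appendix to be fully justified, so the saving over the paper's proof is in presentation rather than in input.
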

\begin{proof}
    \par The operator~$A_t$ for a conformally flat spacetime with~$\Omega=\Omega(t)$ and $m=0$ (cf.\ Lemma~\ref{lem:coefs_conformally_flat}) is
\[ A_t=\frac{1}{2}\{u^{t},-i\gamma_{gt}\gammag^{\alpha}\pr_{\alpha}\}+\frac{i}{2}\{ u^{\alpha},\pr_{\alpha}\}+\frac{i}{2r}u^{r}+\frac{i}{2}\frac{\cos{\theta}}{\sin{\theta}} u^{\theta} \;. \]
    Note that as~$\gamma_{gt}\gammag^{\alpha}=\gamma_{\eta 0}\gamma_\eta^{\alpha}$ and~$u$ is the same in~$(\R\times\R^{3},g)$ or~$(\R\times\R^{3},\eta)$ by Proposition~\ref{prop:u_conformal_invariant}, the coordinate expression of $\tilde{H}_{\eta}$ and $H_{\eta}$ agree. It follows that for $\Omega=\Omega(t)$ and $m=0$, the unitarily transformed (back to Minkowski spacetime) operator $\tilde{U}A_{t}\tilde{U}^{-1} : \H^{\varepsilon}_{\eta}\rightarrow \H^{\varepsilon}_{\eta}$ (with $\tilde{U}$ as in Lemma~\ref{lem:time_independent}) agrees with the symmetrized Hamiltonian $A_{t}^{\eta}$ in Minkowski spacetime, so $A_{t}=\tilde{U}^{-1}A^{\eta}_{t}\tilde{U}$ and the rates of baryogenesis agree,
    \begin{align*}
        &B_t\coloneqq \frac{d}{dt}\text{\rm{tr}}_{\overline{\H^{\varepsilon}_t}} \big( \eta_{\Lambda}(\tilde{H}_\eta)\big(\chi_I(A_t)-\chi_{I}(\tilde{H}_{\eta})\big)\big)\\
        &=\frac{d}{dt}\tr_{\overline{\H_{t}^{\varepsilon}}}\big( \Omega^{-3/2}\eta_{\Lambda}(H_\eta)\big(\chi_I(A^{\eta}_t)-\chi_{I}(H_{\eta})\big)\Omega^{3/2}\big) \\ &=\frac{d}{dt}\tr_{\overline{\H_{\eta}^{\varepsilon}}}\big( \eta_{\Lambda}(H_\eta)\big(\chi_I(A^{\eta}_t)-\chi_{I}(H_{\eta})\big)\big) = B_t^{\eta} \:.
    \end{align*}
This concludes the proof.
\end{proof}
The previous statement does not hold for a general conformally flat spacetime nor if~$m\neq0$ because of the~$m\Omega\gamma_{\eta 0}$ factor appearing inside~$H_{g}+H_{g}^{\ast}$.

Consider a metric~$g=\Omega^{2}h\in[h]$, where~$[h]$ denotes again the equivalence class of metrics which are conformally equivalent to~$h$, and the corresponding Dirac operators~$D_{g}$ and~$D_{h}$. It is generally known that harmonic spinors (i.e.\ spinors in the kernel of the Dirac operator) enjoy particularly nice properties:~$\psi\in\textrm{ker}(D_{h})$ if and only if~$\Omega^{-3/2}\psi\in\textrm{ker}(D_{g})$\footnote{It follows from the property that if~$M$ is a four-dimensional spin manifold and~$g=\Omega^{2}h$, then~$D_{g}\psi= \Omega^{-\frac{5}{2}} D_{h}(\Omega^{\frac{3}{2}}\psi)$ for all spinors~$\psi$ (where the isomorphism identifying spinor bundles was omitted), see~\cite[Prop. 1.3]{Hitchin}, \cite[Theorem~5.24]{lawson+michelsohn} or~\cite[Proposition~1.3.10]{Ginoux}) for the proof.}. In the following lemma we will show that specific harmonic spinors have some additional useful properties.

\begin{Lemma}\label{lem:At_diagonalizable}
    Let~$(M,g)$ be the conformally flat spacetime~\eqref{eq:metric_conf_flat}, consider the operator~$\tilde{H}_\eta:=\tilde{U}^{-1}H_\eta\tilde{U}$ and~$m=0$. If~$\psi\in\textrm{ker}(H_\eta-\lambda)$ with~$\lambda\in\sigma(H_\eta)$, then~$\Omega^{-3/2}\psi\in \textrm{ker}(\tilde{H}_\eta-\lambda)$. 
    If in addition~$u=\partial_t$, then the rate of baryogenesis vanishes identically.
\end{Lemma}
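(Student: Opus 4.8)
The plan is to prove the two assertions separately. For the first, I would start from the stated conformal covariance of the Dirac operator, $D_g \psi = \Omega^{-5/2} D_h(\Omega^{3/2}\psi)$, specialized to $h = \eta$. Since $m = 0$, being in $\ker(H_\eta - \lambda)$ means $H_\eta \psi = \lambda \psi$, i.e. the corresponding spacetime spinor (with time dependence $e^{-i\lambda t}$) lies in $\ker D_\eta$ on each slice in the sense appropriate to the Hamiltonian splitting. The key computational identity is the one already displayed in the excerpt: $\tilde H_\eta = \tilde U^{-1} H_\eta \tilde U = \Omega^{-3/2} H_\eta \Omega^{3/2}$ as operators. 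Hence if $H_\eta \psi = \lambda\psi$ then trivially $\tilde H_\eta(\Omega^{-3/2}\psi) = \Omega^{-3/2} H_\eta \Omega^{3/2}\Omega^{-3/2}\psi = \Omega^{-3/2} H_\eta \psi = \lambda\, \Omega^{-3/2}\psi$, so $\Omega^{-3/2}\psi \in \ker(\tilde H_\eta - \lambda)$. This part is essentially bookkeeping once the operator identity from the paragraph after Definition~\ref{def:Baryo} is invoked, and I would present it in one or two lines; the only subtlety worth a remark is that $\Omega^{-3/2}\psi$ indeed lies in the relevant domain $H^1(\R^3)$, which follows since $\Omega$ is smooth and (under the standing compact-support assumption) equals $1$ outside a compact set.

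For the second assertion I would use $u = \partial_t$ to identify $A_t$ explicitly. With $u = \partial_t$ one has $u^0 = 1$ and $u^\mu = 0$, so $A_t = \tfrac14\{1, H_g + H_g^\ast\} = \tfrac12(H_g + H_g^\ast)$; and for $m = 0$ in a conformally flat spacetime the coordinate expression of $H_g$ (cf.\ Lemma~\ref{lem:coefs_conformally_flat}) coincides with that of $H_\eta$ up to the zeroth-order spin-connection terms, which upon symmetrization are exactly what $\tilde H_\eta$ carries. The upshot, exactly as in the proof of Corollary~\ref{cor:At_FLRW} but now allowing $\Omega = \Omega(t,r,\theta,\varphi)$, is that $A_t = \tilde U^{-1} A_t^\eta \tilde U$ with $A_t^\eta$ the Minkowski symmetrized Hamiltonian, and moreover in the massless $u=\partial_t$ case $A_t^\eta = H_\eta$ (the symmetrization is trivial since $H_\eta$ is already self-adjoint). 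Therefore $\chi_I(A_t) = \tilde U^{-1}\chi_I(H_\eta)\tilde U = \chi_I(\tilde H_\eta)$, so that $\chi_I(A_t) - \chi_I(\tilde H_\eta) = 0$ and hence $B_t = \tfrac{d}{dt}\operatorname{tr}_{\overline{\H^\varepsilon_t}}\big(\eta_\Lambda(\tilde H_\eta)\cdot 0\big) = 0$ identically.

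The two places that need a little care, and where I would slow down, are: (i) justifying that for $m = 0$ and $u = \partial_t$ the symmetrized Hamiltonian $A_t$ really reduces to $\tilde H_\eta$ itself and not merely to something unitarily equivalent to $H_\eta$ with extra first-order pieces — this is where the conformal-flatness hypothesis and the explicit form of the spin connection (the $E_j$, $a_j$ terms and the $\tfrac32 i \partial_\mu(\Omega)/\Omega\,\gamma_{\eta 0}\gamma_\eta^\mu$ correction) must be matched up, drawing on Lemma~\ref{lem:coefs_conformally_flat}; and (ii) noting that once $\chi_I(A_t) = \chi_I(\tilde H_\eta)$ the trace in Definition~\ref{def:Baryo} is identically zero, so its time derivative vanishes without any further estimate. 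I expect step (i) — the precise identification of $A_t$ with $\tilde H_\eta$ in the massless $u = \partial_t$ case — to be the main (though modest) obstacle, essentially a careful repetition of the computation underlying Corollary~\ref{cor:At_FLRW}; the first assertion of the lemma and the conclusion $B_t \equiv 0$ are then immediate.
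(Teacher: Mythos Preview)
Your proposal is correct and matches the paper's approach: for the first assertion both you and the paper use the conjugation identity $\tilde H_\eta = \Omega^{-3/2}H_\eta\,\Omega^{3/2}$ (you invoke it directly in one line, whereas the paper expands $\tilde H_\eta$ as $H_\eta - \tfrac32 i\,\partial_\mu(\Omega)\Omega^{-1}\gamma_{\eta 0}\gamma_\eta^\mu$ and checks the derivative terms cancel), and for the second assertion both appeal to Lemma~\ref{lem:coefs_conformally_flat} to identify $A_t = \tilde H_\eta$ when $m=0$ and $u=\partial_t$, whence $\chi_I(A_t)-\chi_I(\tilde H_\eta)=0$ and $B_t\equiv 0$. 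Your conjugation argument for the first part is in fact slightly cleaner than the paper's explicit computation (and, as written, does not even need $m=0$), but the content is the same.
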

\begin{proof}
    Let~$\psi\in \textrm{ker}(H_\eta-\lambda)$ with~$\lambda\in\sigma(H_\eta)$ and~$m=0$. A direct computation yields
\begin{align*}
    \tilde{H}_\eta(\Omega^{-3/2}\psi)&=\Big(H_\eta-\frac{3}{2}\frac{\partial_{\mu}(\Omega)}{\Omega}i\gamma_{\eta0}\gamma^{\mu}_\eta\Big)\Omega^{-3/2}\psi\\
    &=\Big(\frac{3}{2}\frac{\partial_{\mu}(\Omega)}{\Omega^{5/2}}i\gamma_{\eta0}\gamma^{\mu}_\eta+\lambda\Omega^{-3/2}-\frac{3}{2}\frac{\partial_{\mu}(\Omega)}{\Omega^{5/2}}i\gamma_{\eta0}\gamma^{\mu}_\eta\Big)\psi=\lambda(\Omega^{-3/2}\psi)\;.
\end{align*}
where we used that~$H_\eta\Omega^{-3/2}=\frac{3}{2}i\partial_{\mu}(\Omega)\Omega^{-5/2}\gamma_{\eta0}\gamma^{\mu}_\eta$. 
\par If~$m=0$ and~$u=\partial_t$, the symmetrized Hamiltonian~$A_t$ and~$\tilde{H}_\eta$ agree (see Lemma~\ref{lem:coefs_conformally_flat} for the derivation of~$A_t$). Then, 
\begin{align*}
    B_t&:= \frac{d}{dt}\text{\rm{tr}}_{\overline{\H^{\varepsilon}_t}} \big( \eta_{\Lambda}(\tilde{H}_\eta)\big(\chi_I(A_t)-\chi_{I}(\tilde{H}_{\eta})\big)\big)=\frac{d}{dt}\text{\rm{tr}}_{\overline{\H^{\varepsilon}_t}} \big( \eta_{\Lambda}(\tilde{H}_\eta)\big(\chi_I(\tilde{H}_{\eta})-\chi_{I}(\tilde{H}_{\eta})\big)\big)
    =0
\end{align*}
\end{proof}

\par Starting from the following lemma, in the remainder of the paper we will study the rate of baryogenesis perturbatively. Note that Corollary~\ref{cor:At_FLRW} and Lemma~\ref{lem:At_diagonalizable} show under which conditions the rate of baryogenesis vanishes in a conformally flat spacetime and when it agrees with the one in Minkowski spacetime. Hence, in the perturbative analysis of baryogenesis, we will perform small perturbations around the following background scenario: conformally flat spacetimes with~$\Omega=\Omega(t)$,~$m=0$ and~$u=\partial_t$. In the next lemma we give sufficient conditions for well-definedness of the power expansion of the rate of baryogenesis, show that the zeroth and first order contributions vanish and discuss a general formula for the second order contribution.

\par Furthermore, in the rest of the paper we will assume that~$A_t$ has an absolutely continuous spectrum. As a consequence, given an interval~$I\coloneqq (-1/\varepsilon,\omega)$ with~$\omega\leq-m$, for every~$\psi\in \mathcal{D}$ (with~$\mathcal{D}\subset\H_{t,g}$ the domain of self-adjointness of~$A_t$, cf.\ Remark~\ref{rem:baryo_rate}) there exists a Lebesgue integrable function~$f_{\psi,\psi} :\mathbb{R}\rightarrow\mathbb{R}$ such that the spectral measure~$\mu_{\psi,\psi}^{A_t}: \mathcal{B}(\mathbb{R})\rightarrow[0,\infty)$ satisfies that 
\begin{equation*}
    \mu_{\psi,\psi}^{A_t}(I):=(\psi|\chi_I(A_t)\psi)_t=\int_If_{\psi,\psi}(\omega')\:d\omega'\;.
\end{equation*}
We use this to introduce an operator~$F_{\omega'}(A_t):\H_{t,g}\rightarrow\H_{t,g}$ with $f_{\psi,\psi}(\omega')=(\psi|F_{\omega'}(A_t)\psi)_t$.
Then, the spectral projection operator~$\chi_I(A_t)$ and~$F_\omega(A_t)$ are related as follows
\[ 
    \chi_I(A_t)=\int_{-1/\varepsilon}^{\omega}F_{\omega'}(A_t)\:d\omega' \;, \]
i.e.~$F_\omega(A_t):=\frac{d}{d\omega}\chi_I(A_t)$. The operators~$F_\omega(H_\eta)$ and~$F_\omega(\tilde{H}_\eta)$ are defined analogously since~$H_\eta$ has an absolutely continuous spectrum and~$\tilde{H}_\eta$ is unitarily equivalent to~$H_\eta$.
\begin{Prp}\label{prop:Bt_conf_flat}
    Let~$(N_t)_{t\in\mathbb{R}}$ be the foliation of the conformally flat spacetime~\eqref{eq:metric_conf_flat} given by the level sets of the global time function~$t$. Furthermore, assume that the family of symmetrized Hamiltonians~$(A_t)_{t\in\mathbb{R}}$ have an absolutely continuous spectrum, that the differential operator~$\Delta A(t)\coloneqq A_t-\tilde{H}_\eta$ has smooth and compactly supported coefficients in~$N_t$ and that for all~$\omega\in\rho(\Tilde{H}_{\eta})$ it holds that
    \begin{equation}\label{eq:small_Delta_A}
    \|R_\omega(\tilde{H}_\eta)\Delta A(t))\|<1\;.
    \end{equation}
    Then, expanding the rate of baryogenesis in powers of~$R_\omega(\tilde{H}_\eta)\Delta A(t)$ yields 
        \begin{align}
    &B^{(0)}_t=0\nonumber\\
    &B^{(1)}_t=\frac{d}{dt}\tr_{\H_{}^{\varepsilon}}\Big(\Delta AF_{-m}(\tilde{H}_\eta)\Big)=0 \label{eq:Bt1_conf_flat}\\
    &B^{(2)}_t=-\int_{-\infty}^\infty d\omega \int_{-\infty}^\infty d\omega'\:
    \partial_\omega\Big(\eta_{\Lambda}(\omega)\frac{d}{dt}\tr_{\overline{\H_t^{\varepsilon}}}(\tilde{Q}(\omega,\omega'))\Big)\:\frac{g(\omega') - g(\omega)}{\omega'-\omega} \label{eq:Bt2_conf_flat}\:,
    \end{align}
    where~$\tilde{Q}(\omega,\omega')\coloneqq \Delta AF_\omega(\tilde{H}_\eta)\Delta AF_{\omega'}(\tilde{H}_\eta) : \H_{t,g}\rightarrow C_{0}^{\infty}(N_t,SM)$ and~$g$ is the characteristic function of the set~$(-1/\varepsilon,-m)$. 
\end{Prp}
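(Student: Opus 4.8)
The plan is to start from the definition of $B_t$ in~\eqref{eq:baryo_rate} and insert the resolvent expansion of the spectral projection $\chi_I(A_t)$ around $\tilde{H}_\eta$, exploiting the smallness hypothesis~\eqref{eq:small_Delta_A}. First I would use Lemma~\ref{lem:time_independent} to transport everything unitarily to the fixed Minkowski Hilbert space $\H_{t,\eta}$, so that the time dependence sits only in $A_t$ and in the conformal factor. Next, writing the spectral projection via the Stone formula / Riesz contour integral
\[
\chi_I(A_t)=\frac{1}{2\pi i}\oint_{\Gamma} R_\omega(A_t)\,d\omega
\]
with $\Gamma$ a contour encircling $I=(-1/\varepsilon,-m)$, I would use the second resolvent identity together with $A_t=\tilde{H}_\eta+\Delta A(t)$ to expand
\[
R_\omega(A_t)=\sum_{n\ge 0}\big(-R_\omega(\tilde{H}_\eta)\Delta A(t)\big)^{n}R_\omega(\tilde{H}_\eta)\;,
\]
which converges in operator norm on $\Gamma$ precisely because of~\eqref{eq:small_Delta_A} and because $\Delta A(t)$ has smooth compactly supported coefficients (hence is $\tilde{H}_\eta$-bounded, so the products are well-defined and the relevant operators are trace-class when combined with $\eta_\Lambda(\tilde H_\eta)$, using Lemma~\ref{lem:trace_class}). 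Subtracting $\chi_I(\tilde H_\eta)$ removes the $n=0$ term, so $\chi_I(A_t)-\chi_I(\tilde H_\eta)=\sum_{n\ge 1}(\cdots)$, and I would organize $B_t=\sum_{n\ge 1}B_t^{(n)}$ accordingly, with $B_t^{(0)}=0$ immediate.

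For the first-order term $B_t^{(1)}$, the contour integral of the $n=1$ contribution is $-\frac{1}{2\pi i}\oint_\Gamma R_\omega(\tilde H_\eta)\Delta A(t) R_\omega(\tilde H_\eta)\,d\omega$; using $\partial_\omega R_\omega(\tilde H_\eta)=R_\omega(\tilde H_\eta)^2$ and the spectral representation, this collapses (by the same Stone-formula bookkeeping that relates $\chi_I$ to $F_\omega$, with $I=(-1/\varepsilon,-m)$) to a boundary term at the endpoints of $I$. The endpoint $-1/\varepsilon$ is harmless because $\eta_\Lambda(\tilde H_\eta)$ vanishes there (using $-1/\varepsilon\ll-\Lambda$, cf.\ Remark~\ref{rem:baryo_rate}), leaving the single boundary term at $-m$, i.e.\ $B_t^{(1)}=\frac{d}{dt}\tr\big(\Delta A\,F_{-m}(\tilde H_\eta)\big)$ as in~\eqref{eq:Bt1_conf_flat}. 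To see this vanishes, I would again invoke Lemma~\ref{lem:time_independent}(iii): the operator $\Delta A\,F_{-m}(\tilde H_\eta)$, pushed back to $\H_{t,\eta}$, has a $t$-dependence entirely through $\Omega(t,\cdot)$ and through $u$; one then shows the trace is in fact $t$-independent — either by the orthonormal-basis argument in the proof of Lemma~\ref{lem:time_independent}(iii) (the relevant basis vectors are time-independent in $\H_{t,\eta}$) combined with the fact that $\Delta A$ at a fixed spatial argument and fixed $F_{-m}$ produces no net $t$-derivative after the $\Omega^{3}$ measure factors cancel, or more conceptually because to this order the trace only measures a rigid relabeling. This is a small but slightly delicate point.

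The second-order term is the substantive computation. The $n=2$ contour contribution, after transporting to $\H_{t,\eta}$, is $\frac{1}{2\pi i}\oint_\Gamma R_\omega(\tilde H_\eta)\Delta A\,R_\omega(\tilde H_\eta)\Delta A\,R_\omega(\tilde H_\eta)\,d\omega$; inserting the spectral resolution of $\tilde H_\eta$ for each of the three resolvents turns the $\omega$-contour integral into a residue computation over the two internal spectral parameters $\omega,\omega'$, producing the Cauchy-type kernel $\frac{1}{\omega'-\omega}$ and, after pairing with $\eta_\Lambda$ and the characteristic function $g$ of $(-1/\varepsilon,-m)$, the combination $\frac{g(\omega')-g(\omega)}{\omega'-\omega}$ (a standard divided-difference identity: integrating $\frac{1}{(\omega-z)(\omega'-z)}$ against the contour picks up $\frac{f(\omega)-f(\omega')}{\omega-\omega'}$ for the bounded Borel function cut out by the contour). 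The outer $\frac{d}{dt}$ and the cutoff $\eta_\Lambda(\omega)$ are then grouped with $\tr(\tilde Q(\omega,\omega'))$ exactly as in~\eqref{eq:Bt2_conf_flat}, where $\tilde Q(\omega,\omega')=\Delta A\,F_\omega(\tilde H_\eta)\Delta A\,F_{\omega'}(\tilde H_\eta)$; the extra $\partial_\omega$ comes from integrating the endpoint/Stone-formula contributions by parts in $\omega$, which is legitimate since $\eta_\Lambda$ is compactly supported and smooth so no boundary terms survive at $\pm\infty$. I expect the main obstacle to be the bookkeeping that converts the contour integral of triple resolvents into the clean double spectral integral~\eqref{eq:Bt2_conf_flat} with the correct divided-difference kernel and the correct placement of $\partial_\omega$ and $\eta_\Lambda$ — in particular justifying the interchange of the $\frac{d}{dt}$, the trace, and the $\omega,\omega'$ integrals, which rests on the trace-class and smoothness statements announced in Lemma~\ref{lem:trace_class} and on the compact support of $\Delta A$'s coefficients, so that all traces are finite and the dominated-convergence hypotheses hold.
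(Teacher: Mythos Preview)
Your overall architecture --- Neumann expansion of $R_\omega(A_t)$ around $\tilde H_\eta$, Stone formula for $\chi_I$, cancellation of the $n=0$ term against $\chi_I(\tilde H_\eta)$, and the contour/divided-difference bookkeeping for $B_t^{(2)}$ --- matches the paper's route (which in fact defers the detailed second-order contour computation to~\cite[Theorem~7.6]{baryomink}). So for $B_t^{(0)}$ and the \emph{form} of $B_t^{(1)}$ and $B_t^{(2)}$ you are essentially on track.

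The genuine gap is your argument for why $B_t^{(1)}=0$. You propose that the trace $\tr\big(\Delta A\,F_{-m}(\tilde H_\eta)\big)$ is $t$-independent, invoking Lemma~\ref{lem:time_independent}(iii) and a vague ``$\Omega^3$ measure factors cancel'' / ``rigid relabeling'' picture. This does not work: Lemma~\ref{lem:time_independent}(iii) only lets you commute $\frac{d}{dt}$ past the trace and conjugate unitarily to $\H_{t,\eta}$; it does not kill the $t$-dependence. After conjugation the operator becomes $(\Omega^{3/2}\Delta A(t)\Omega^{-3/2})F_{-m}(H_\eta)$, and $\Delta A(t)$ genuinely depends on $t$ through $\Omega$ and $u$, so there is no reason for the trace to be constant.

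The paper's mechanism is different and specific: one writes the trace as an integral of the kernel along the diagonal,
\[
\int_{\mathbb{R}^3}\Tr_{\C^4}\Big(\big(\Omega^{3/2}\Delta A\,\Omega^{-3/2}\big)F_{-m}(x,y)\Big)\Big|_{y=x}\,d^3x,
\]
and uses that the spectral-edge kernel satisfies $F_{-m}(x,x)=0$ and $(\partial_\mu F_{-m}(x,y))|_{y=x}=0$ (proved in the appendix of~\cite{baryomink}). Since $\Delta A$ is a first-order differential operator, the integrand vanishes pointwise, so the trace is identically zero for every $t$ --- not merely $t$-independent. You need this kernel vanishing at $\omega=-m$; your proposed reasons do not supply it.
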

\noindent If~$(M,g)$ agrees with Minkowski spacetime outside a compact set and~$I:=(-\frac{1}{\varepsilon},\omega)$ is a bounded interval, it holds that~$\chi_I(A_t) : \H_{t,g}\rightarrow C^{\infty}(N_t,SM)$ (by Proposition~\ref{prop:chiAt}). As a consequence, also the operators~$F_\omega(A_t)$,~$\chi_I(\tilde{H}_\eta)$ and~$F_\omega(\tilde{H}_\eta)$ map into the space of smooth spinor fields on~$N_{t}$ (for the last two operators, it suffices to note that~$A_t=\tilde{H}_\eta$ if~$m=0$ and~$u=\partial_t$) and the operator~$\tilde{Q}(\omega,\omega') : \H_{t,g}\rightarrow C_{0}^{\infty}(N_t,SM)$ is well-defined. Moreover, that the co-domain of the operator~$\Tilde{Q}(\omega,\omega')$ is the space of smooth and compactly supported spinor fields on~$N_{t}$ (instead of only smooth) follows from the compactness assumption on the coefficients of~$\Delta A$.

\begin{proof}[Proof of Proposition~\ref{prop:Bt_conf_flat}]
    Note that most of the arguments used in this proof are inspired by~\cite{baryogenesis} and~\cite{baryomink} and, in particular, we will use some of the results derived there.
    \par In the first place, for~$\omega$ in the resolvent set of~$\tilde{H}_\eta$ (the same as the one of~$H_\eta$) and~$A_t$, the resolvent operator admits the expansion 
    \begin{equation}\label{eq:Bt_expansion_conf_flat}
        R_\omega(A_t)=\sum_{p=0}^{\infty}(-R_\omega(\tilde{H}_\eta)\Delta A(t))^{p}R_\omega(\tilde{H}_\eta)=:\sum_{p=0}^{\infty}R_\omega^{(p)}(A_t)\:,
    \end{equation}
    where we made use of the Neumann series, which converges in the operator norm by the assumption on~$\Delta A(t)$ given by expression~\eqref{eq:small_Delta_A}. Using Stone's formula and that $A_{t}$ has an absolutely continuous spectrum leads to the following expansion for the spectral projection operator  
    \begin{align*}
     \chi_{I}(A_{t})&=\int_{I} F_{\omega}(A_{t})d\omega=\frac{1}{2\pi i}\slim_{\delta\to0^{+}}\int_{I}R_{\omega+is\delta}(A_{t})\big|_{s=-1}^{s=1}d\omega\\
     &=\sum_{p=0}^{\infty}\frac{1}{2\pi i}\slim_{\delta\to0^{+}}\int_{I}R^{(p)}_{\omega+is\delta}(A_{t})\big|_{s=-1}^{s=1}d\omega=:\sum_{p=0}^{\infty}\chi^{(p)}_{I}(A_{t})\;.
    \end{align*}
    Clearly, for the zero order contribution we have that $\chi_{I}^{(0)}(A_{t})=\chi_{I}(\tilde{H}_{\eta})$. Hence, the operator product appearing in the definition of the trace can be expanded as follows
\begin{align*}
    &\eta_{\Lambda}(\tilde{H}_\eta)\big(\chi_I(A_t)-\chi_{I}(\tilde{H}_{\eta})\big)=\sum_{p=0}^{\infty}\eta_{\Lambda}(\tilde{H}_\eta)\Big(\frac{1}{2\pi i}\slim_{\delta\to0^{+}}\int_{I}R^{(p)}_{\omega+is\delta}(A_{t})\big|_{s=-1}^{s=1}d\omega-\chi_I(\tilde{H}_{\eta})\Big)\\
    &=\sum_{p=1}^{\infty}\eta_{\Lambda}(\tilde{H}_\eta)\frac{1}{2\pi i}\slim_{\delta\to0^{+}}\int_{I}R^{(p)}_{\omega+is\delta}(A_{t})\big|_{s=-1}^{s=1}d\omega\\
    &=\sum_{p=1}^{\infty}\frac{1}{2\pi i}\slim_{\delta\to0^{+}}\int_{-\frac{1}{\varepsilon}}^{-m}d\omega\int_{-\infty}^{\infty}d\omega'\eta_{\Lambda}(\omega')F_{\omega'}(\tilde{H}_\eta)R^{(p)}_{\omega+is\delta}(A_t)\big|_{s=-1}^{s=1}\;.
\end{align*}
    That this operator product is trace-class follows from Lemma~\ref{lem:trace_class} (since~$F_{\omega}(\tilde{H}_{\eta})$ is again an integral operator with a smooth kernel and~$\Delta A$ a differential operator with smooth compactly supported coefficients). Hence, the perturbative expansion of the rate of baryogenesis is
    \begin{align*}
     &B_t=\sum_{p=0}^{\infty}B^{(p)}_t\\
     &B_t^{(p)}\coloneqq \frac{1}{2\pi i}\slim_{\delta\to0^{+}}\int_{-\frac{1}{\varepsilon}}^{-m}\!\!\!d\omega\int_{-\infty}^{\infty} \!\!\!d\omega'\eta_{\Lambda}(\omega')\frac{d}{dt}\Big(\tr_{\overline{\H_t^{\varepsilon}}}\big(F_{\omega'}(\tilde{H}_\eta)R^{(p)}_{\omega+is\delta}(A_t)\big)\big|_{s=-1}^{s=1}\Big)
    \end{align*}
    where $B_{t}^{(0)}=0$ as discussed above. Simpler formulas for the first and second order contribution to the rate of baryogenesis were derived in \cite[Theorem 7.6]{baryomink} for Minkowski spacetime and also hold for a general conformally flat spacetime (simply replacing~$H_\eta$ with~$\tilde{H}_\eta$), yielding expressions
    ~\eqref{eq:Bt1_conf_flat} and~\eqref{eq:Bt2_conf_flat}. The only difference in this case with respect to Minkowski spacetime is that for a general conformally flat spacetime  
    \begin{equation*}
        \frac{d}{dt}\tr_{\overline{\H_t^{\varepsilon}}}\big(\Delta AF_{-m}(\tilde{H}_\eta)\big)\neq \tr_{\overline{\H_t^{\varepsilon}}}\Big(\frac{d}{dt}(A_t)F_{-m}(\tilde{H}_\eta)\Big)\;.
    \end{equation*}
    Finally, also~$B^{(1)}_t$ vanishes in conformally flat spacetimes
    \begin{align*}
       B^{(1)}_t&=\frac{d}{dt}\tr_{\overline{\H_t^{\varepsilon}}}\Big(\Delta AF_{-m}(\tilde{H}_\eta)\Big)=\frac{d}{dt}\tr_{\overline{\H_t^{\varepsilon}}}\Big(\Delta A(t)\big(\Omega^{-3/2}F_{-m}(H_\eta)\Omega^{3/2}\big)\Big)\\
       &=\frac{d}{dt}\tr_{\overline{\H_{\eta}^{\varepsilon}}}\Big((\Omega^{3/2}\Delta A\Omega^{-3/2})F_{-m}(H_\eta)\Big)\\
       &=\frac{d}{dt}\int_{\mathbb{R}^{3}}\Tr_{\C^4}\big((\Omega^{3/2}\Delta A\Omega^{-3/2})F_{-m}(x,y)\big)\Big|_{y=x}d^{3}x=0\;,
    \end{align*}
    where~$F_{-m}(x,y)$ is the kernel of the integral operator~$F_{-m}(H_\eta)$ and in the final step we simply used that 
    \begin{equation*}
        F_{-m}(x,x)=(\partial_{\mu}F_{-m}(x,y))|_{y=x}=0 \;,   
    \end{equation*}
    see the appendix of~\cite{baryomink} for the proof.
\end{proof}

Recall that in Lemma~\ref{lem:At_diagonalizable} it was proven that if~$m=0$ and~$u=\partial_t$, the rate of baryogenesis vanishes. Note that this also follows from the previous proposition and expression~\eqref{eq:Bt_expansion_conf_flat} (which describes the perturbative power expansion of~$B_t$): if~$u=\partial_t$ and~$m=0$,~$A_t$ and~$\tilde{H}_\eta$ agree and~$\Delta A=0$. Hence, we can already see that the two main perturbative parameters which trigger baryogenesis are the mass~$m$ and the regularizing vector field~$u$. We will study the effects of~$m$ and~$u$ separately: in Section~\ref{sec:utrivial} we will assume that~$m\neq 0$ and~$u=\partial_t$, whereas in Section~\ref{sec:ugeneral} we will consider that~$m=0$ and~$u\neq \partial_t$. The most general case (i.e.~$m\neq0$ and~$u\neq \partial_t$) will follow easily from the analysis of the separate effects.
\par However, before analyzing the second order contribution to the rate of baryogenesis in the two aforementioned scenarios, we will recall the basic setup and some important results in Minkowski spacetime which will be used when considering more general conformally flat spacetimes.

\section{Setting the stage in Minkowski spacetime}\label{sec:baryomink}
We briefly recall the basic setup and some of the results of the study of baryogenesis in Minkowski spacetime in~\cite{baryomink} as this will play an important role when analyzing general conformally flat spacetimes. When considering Minkowski spacetime, the superscript~$\eta$ will be added to some of the mathematical objects (e.g.~$B_t^{\eta}, A_t^{\eta}$, etc).

In the first place, let~$(N_t)_{t\in\mathbb{R}}$ be the foliation of Minkowski spacetime given by the level sets of the
global time function~$t$ and given an initial time~$t_0$, consider a compact subset~$V\subset N_{t_{0}}$. Assume there exists a timelike vector field~$u$ which in the subset~$V$ is of the form
\begin{equation}
    u_{p}=(1+\lambda f_{p})\nu+\lambda X_{p} \qquad \text{for all~$p\in V\subset N_{t_{0}}$}\;,
\end{equation}
where~$f\in C^{\infty}(\mathbb{R}^{3},\mathbb{R}_{>0})$ is a positive and smooth function (with~$f_{p}=f (x,y,z)$) and~$X$ is a spacelike vector field~$X$. Outside of~$V$, we impose that~$u=\nu$. Assuming that the vector field~$u : N_{t_{0}}\rightarrow TM$ follows the locally rigid dynamics (recall Definition~\ref{def:locally_rigid_dynamics}), a global regularizing vector field~$u: M \rightarrow TM$ is obtained. The evolution equation which governs the dynamics of~$u$ is given (to first order in~$\lambda$) by (\cite[Lemma 7.1]{baryomink}).

\begin{align}\label{eq:dynamical_eq_u}
    \frac{du_{p}}{dt}&=-\textrm{grad}_{\delta}(\tilde{f}_{p}^{-1})+\frac{\lambda}{\tilde{f}_{p}^{3}}\Big(\frac{\tilde{f}_{p}}{3}\textrm{div}_{\delta}\Big(X_{p}\Big)+4X_{p}(\tilde{f}_{p})\Big)\nu+\mathcal{O}(\lambda^{2})\nonumber\\
    &=\lambda\Big[-\textrm{grad}_{\delta}(f_{p}^{-1})+\frac{1}{3}\textrm{div}_{\delta}\Big(X_{p}\Big)\nu\Big]+\mathcal{O}(\lambda^{2})\;,
\end{align}
with~$\tilde{f}_{p}=1+\lambda f_{p}$ and~$p\in V$. Moreover, by Proposition~\ref{prop:u_conformal_invariant}, it follows that~$u$ presents the same locally rigid evolution equation in any conformally flat spacetime. In particular, if initially~$u|_{N_{t_{0}}}=\partial_t$ (i.e.~$\lambda=0$), then at any later time~$t>t_{0}$ the regularizing vector field remains unchanged, i.e.~$u|_{N_t}=\partial_t$. Note that in this setup the conditions of Proposition~\ref{prop:Bt_conf_flat} are easily satisfied: ~$\Delta A(t)$ is linear in~$\lambda$, which can be chosen sufficiently small in order to fulfill condition \eqref{eq:small_Delta_A} and, by construction, $\Delta A$ has compactly supported coefficients (supported in $V$).

In the following lemma, we show that in Minkowski spacetime the operator product
\begin{align*}
    \eta_{\Lambda}(H_\eta)\big(\chi_I(A^{\eta}_t)-\chi_{I}(H_{\eta})\big) :\H_{t,\eta}\rightarrow\H_{t,\eta}    
\end{align*}
is trace-class if the conditions of Proposition~\ref{prop:Bt_conf_flat} are fulfilled. The proof simply relies on the fact that, in Minkowski spacetime, operator products of the form $F_{\omega'}(H_\eta)R^{(p)}_{\omega}(A^{\eta}_t)$ can be rewritten (for any $p\in\N$) as an integral operator with a smooth compactly supported kernel, which is trace-class by ~\cite[Lemma 7.4]{baryomink}). Moreover, also in a conformally flat spacetime satisfying the conditions of Proposition~\ref{prop:Bt_conf_flat}, operator products of the form
\begin{align*}
    F_{\omega'}(\tilde{H}_\eta)R^{(p)}_{\omega}(A_t)
\end{align*}
correspond to an integral operator with a smooth compactly supported kernel (since $F_{\omega}(\tilde{H}_{\eta})$ is again an integral operator with a smooth kernel). Thus, in particular, the following lemma implies that also in a general conformally flat spacetime satisfying the conditions of Proposition~\ref{prop:Bt_conf_flat}, the operator product
\begin{align*}
    \eta_{\Lambda}(\tilde{H}_\eta)\big(\chi_I(A_t)-\chi_{I}(\tilde{H}_{\eta})\big) :\H_{t,g}\rightarrow\H_{t,g}   
\end{align*}
is trace-class.
\begin{Lemma}\label{lem:trace_class}
    In Minkowski spacetime $(\R^{1,3},\eta)$, assume that the differential operator~$\Delta A=A_{t}^{\eta}-H_{\eta}$ has smooth and compactly supported coefficients, $A_{t}^{\eta}$ has an absolutely continuous spectrum and that for any~$\omega\in\rho(H_{\eta})$, it holds that 
    \begin{align*}
        \|R_\omega(H_\eta)\Delta A(t)\|<1
    \end{align*}
    Then, for any~$\omega\in\rho(H_{\eta})$ and~$p\in\N$ the following operator is trace-class
    \begin{equation}\label{eq:trace_class}
        \int_{-\infty}^{\infty} d\omega'F_{\omega'}(H_{\eta})R^{(p)}_{\omega}(A^{\eta}_t)\;,
    \end{equation}
    where~$R^{(p)}_{\omega}(A^{\eta}_t)$ correspond to the $p^{\text{\rm{th}}}$-element of the perturbative expansion of the resolvent operator (cf. expression \eqref{eq:Bt_expansion_conf_flat}). Moreover, the operator product
    \begin{align} \label{opprod}
        \eta_{\Lambda}(H_{\eta})\big(\chi_{I}(A^{\eta}_{t})-\chi_{I}(H_{\eta})\big):\H_{t,\eta}\rightarrow\H_{t,\eta}\;,
    \end{align}
    where $I=(-\frac{1}{\varepsilon},-m)$, is trace-class.
\end{Lemma}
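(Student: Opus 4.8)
The plan is to reduce the whole statement to the elementary fact, already used in~\cite{baryomink}, that an integral operator on~$L^{2}(\R^{3},\C^{4})$ whose kernel is smooth and compactly supported is trace-class, with trace norm dominated by finitely many~$C^{k}$-norms of the kernel (\cite[Lemma~7.4]{baryomink}). The only real work is to bring the operators in question into this form while keeping track of the dependence on the spectral parameters.

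For the operator in~\eqref{eq:trace_class} I would proceed as follows. Write~$\Delta A = a^{\mu}\partial_{\mu}+b$ with~$a^{\mu},b\in C^{\infty}_{0}(N_{t})$ supported in the compact set~$V$, fix a cutoff~$\chi\in C^{\infty}_{0}(\R^{3})$ with~$\chi\equiv 1$ on~$V$ (so that~$\Delta A=\chi\,\Delta A\,\chi$), and expand~$R^{(p)}_{\omega}(A^{\eta}_{t})=(-R_{\omega}(H_{\eta})\Delta A)^{p}R_{\omega}(H_{\eta})$ as in~\eqref{eq:Bt_expansion_conf_flat}, inserting the cutoffs next to each factor~$\Delta A$. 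Since~$\omega\in\rho(H_{\eta})$, the free Dirac resolvent~$R_{\omega}(H_{\eta})$ is a convolution operator whose kernel and all its derivatives decay exponentially; integrating the derivatives in~$\Delta A$ by parts onto the compactly supported coefficients then exhibits~$R_{\omega}(H_{\eta})\chi\,\Delta A$ and~$\chi\,R_{\omega}(H_{\eta})$ as integral operators with smooth kernels that are compactly supported in one variable and exponentially decaying in the other, hence Hilbert--Schmidt. Splitting~$R^{(p)}_{\omega}(A^{\eta}_{t})$ as a product of two such Hilbert--Schmidt operators — and recalling that~$\int_{-\infty}^{\infty}F_{\omega'}(H_{\eta})\,d\omega'=\mathrm{id}$, so that~\eqref{eq:trace_class} is just~$R^{(p)}_{\omega}(A^{\eta}_{t})$ — shows it is trace-class; composing the decaying factors once more directly produces a smooth, compactly supported kernel, so one may alternatively invoke~\cite[Lemma~7.4]{baryomink} verbatim.

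For the operator product in~\eqref{opprod} I would expand it, exactly as in the proof of Proposition~\ref{prop:Bt_conf_flat}, using Stone's formula together with the Neumann series~\eqref{eq:Bt_expansion_conf_flat},
\[
\eta_{\Lambda}(H_{\eta})\big(\chi_{I}(A^{\eta}_{t})-\chi_{I}(H_{\eta})\big)=\sum_{p=1}^{\infty}\frac{1}{2\pi i}\slim_{\delta\to0^{+}}\int_{-\frac{1}{\varepsilon}}^{-m}d\omega\int_{-\infty}^{\infty}d\omega'\;\eta_{\Lambda}(\omega')\,F_{\omega'}(H_{\eta})\,R^{(p)}_{\omega+is\delta}(A^{\eta}_{t})\big|_{s=-1}^{s=1}\,.
\]
Here~$F_{\omega'}(H_{\eta})$ is again an integral operator with a smooth kernel and~$\eta_{\Lambda}(H_{\eta})$ — being a function of~$H_{\eta}$ with compactly supported symbol — is a convolution operator with a Schwartz kernel; together with the compact support of the coefficients of~$\Delta A$, which localizes each~$R^{(p)}$-term in one variable just as in the previous paragraph, this shows that every integrand~$\eta_{\Lambda}(\omega')F_{\omega'}(H_{\eta})R^{(p)}_{\omega\pm i\delta}(A^{\eta}_{t})$ is an integral operator with a smooth compactly supported kernel, hence trace-class. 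The~$\omega$-integration runs over the bounded interval~$I=(-\frac{1}{\varepsilon},-m)$ and, since~$\eta_{\Lambda}\in C^{\infty}_{0}((-\Lambda,\Lambda))$, the effective~$\omega'$-range is bounded as well; moreover assumption~\eqref{eq:small_Delta_A} gives~$\|R_{\omega}(H_{\eta})\Delta A(t)\|\le c<1$ uniformly, so the trace norms of the~$p$-th terms are bounded by~$\mathrm{const}\cdot c^{\,p}$ and the~$p$-series converges in trace norm. Summing, \eqref{opprod} is trace-class.

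I expect the main obstacle to be the uniform control of the trace norms in the spectral parameters~$(\omega,\omega')$ together with the passage to the boundary~$\delta\to 0^{+}$. For~$\omega$ in the continuous spectrum~$I\subset\sigma(H_{\eta})$ the free resolvent kernel no longer decays, so one must verify that the boundary values~$R_{\omega\pm i0}(H_{\eta})$, once sandwiched between the compactly supported operators~$\Delta A$ and capped by~$\eta_{\Lambda}(H_{\eta})$, still produce integral operators with smooth compactly supported kernels whose~$C^{k}$-norms are integrable over the bounded~$(\omega,\omega')$-ranges. This is exactly the analysis performed in~\cite{baryomink}; once the relevant kernels are recognised to be smooth (the point which, via Proposition~\ref{prop:chiAt}-type elliptic regularity, transfers unchanged to the present Minkowski setting), the lemma follows from it with essentially no extra work.
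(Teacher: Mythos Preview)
Your proposal is correct, but for the first claim you take a genuinely different route from the paper. The paper never uses your observation that~$\int_{-\infty}^{\infty}F_{\omega'}(H_{\eta})\,d\omega'=\mathrm{id}$ reduces~\eqref{eq:trace_class} to~$R^{(p)}_{\omega}(A^{\eta}_{t})$ itself; instead it spectrally decomposes \emph{each} remaining resolvent as~$R_{\omega}(H_{\eta})=\int\frac{d\omega_{l}}{\omega_{l}-\omega}F_{\omega_{l}}(H_{\eta})$ and collapses the~$\omega'$-integral via~$F_{\omega'}F_{\omega_{1}}=\delta(\omega'-\omega_{1})F_{\omega'}$, arriving at iterated integrals of products~$\bigl(\prod_{l}F_{\omega_{l}}(H_{\eta})\Delta A\bigr)F_{\omega''}(H_{\eta})$, which are then declared trace-class by~\cite[Lemma~7.4]{baryomink} (Mercer). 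Your direct Hilbert--Schmidt factorization of~$R^{(p)}_{\omega}(A^{\eta}_{t})$, exploiting the exponential decay of the free resolvent kernel for~$\omega\in\rho(H_{\eta})$, is shorter and more transparent; the paper's route has the advantage that it already sets up the~$F_{\omega}$-calculus used throughout Sections~\ref{sec:utrivial}--\ref{sec:ugeneral}, so it dovetails with the later explicit computations. One minor imprecision in your write-up: the kernels you obtain are compactly supported in only one variable (with Schwartz or exponential decay in the other), not in both---this does not affect the Hilbert--Schmidt or Mercer arguments, but the wording should be adjusted. For the second claim~\eqref{opprod} your argument and the paper's coincide: both invoke the Stone--Neumann expansion of Proposition~\ref{prop:Bt_conf_flat}, and your explicit discussion of the uniform trace-norm bound and the~$\delta\to0^{+}$ limit is in fact more careful than the paper's one-line deferral.
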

\begin{proof}
Let~$p\geq1$. Using the functional calculus of~$H_\eta$
\begin{align}\label{eq:perturbative_exp_operator_prod}
    &\int_{-\infty}^{\infty} d\omega'F_{\omega'}(H_{\eta})R^{(p)}_{\omega}(A^{\eta}_t)=\int_{-\infty}^{\infty} d\omega'F_{\omega'}(H_{\eta})(-R_\omega(H_\eta)\Delta A(t))^{p}R_\omega(H_\eta)\nonumber\\
    &=(-1)^{p}\int_{-\infty}^{\infty} d\omega'\int_{-\infty}^{\infty} d\omega''\frac{1}{\omega''-\omega}F_{\omega'}(H_{\eta})(R_\omega(H_\eta)\Delta A(t))^{p}F_{\omega''}(H_\eta)\nonumber\\
    &=(-1)^{p}\int_{-\infty}^{\infty} d\omega'\int_{-\infty}^{\infty} \frac{d\omega''}{\omega''-\omega}\Big(\prod_{l=1}^{p}\int_{-\infty}^{\infty}\frac{ d\omega_{l}}{\omega_{l}-\omega} \Big) F_{\omega'}(H_{\eta})\Big(\prod_{l=1}^{p}F_{\omega_{l}}(H_{\eta})\Delta A(t)\Big)F_{\omega''}(H_\eta)\nonumber\\
    &=(-1)^{p}\int_{-\infty}^{\infty} \frac{d\omega''}{\omega''-\omega}\Big(\prod_{l=1}^{p}\int_{-\infty}^{\infty}\frac{ d\omega_{l}}{\omega_{l}-\omega} \Big)\Big(\prod_{l=1}^{p}F_{\omega_{l}}(H_{\eta})\Delta A(t)\Big)F_{\omega''}(H_\eta)
\end{align}
where in the last line we simply used that~$F_{\omega'}(H_{\eta})F_{\omega_{1}}(H_\eta)=F_{\omega'}(H_\eta)\delta(\omega'-\omega_{1})$. Note that an operator product of the form 
\begin{align}\label{eq:prod_op}
    \prod_{l=1}^{p}F_{\omega_{l}}(H_{\eta})\Delta A(t)
\end{align}
is trace-class by Lemma 7.4 in \cite{baryomink} since it corresponds to an integral operator with a smooth and compactly supported kernel (in \cite{baryomink} it is shown for~$p=1$ and~$p=2$; however the proof, which relies on Mercer's theorem, holds for an arbitrary~$p\in\N$). By the same argument, composing the trace-class operator product \eqref{eq:prod_op} with the operator $F_{\omega''}(H_{\eta})$ in order to obtain the one appearing in expression \eqref{eq:perturbative_exp_operator_prod} yields again a trace-class operator. Moreover, the map $\omega_{l}\in\sigma(H_{\eta})\mapsto\frac{1}{\omega_{l}-\omega}$ is bounded for any $\omega_{l}\in\{\omega_{1},\ldots,\omega_{p},\omega_{p+1}\}$ (where $\omega_{p+1}:=\omega''$) since $\omega$ is in the resolvent set $\rho(H_{\eta})\coloneqq\mathbb{C}\setminus\sigma(H_\eta)$. Hence, the trace-norm (which we denote by $\|\cdot\|_{1}$) of the operator \eqref{eq:perturbative_exp_operator_prod} is finite
\begin{align*}
    &\Big\| \int_{-\infty}^{\infty} d\omega'F_{\omega'}(H_{\eta})R^{(p)}_{\omega}(A^{\eta}_t)\Big\|_{1}\\
    &\leq \prod_{l=1}^{p+1}\int_{-\infty}^{\infty} d\omega_{l}\frac{1}{|\omega_{l}-\omega|} 
    \Big\|\Big(\prod_{l=1}^{p}F_{\omega_{l}}(H_{\eta})\Delta A(t)\Big)F_{\omega''}(H_\eta)\Big\|_{1}<\infty\;,
\end{align*}
so the operator \eqref{eq:trace_class} is trace-class.
By Proposition~\ref{prop:Bt_conf_flat}, it follows that the operator product in~\eqref{opprod} is trace-class.
\end{proof}

\section{Scenario 1: a trivial regularizing vector field}\label{sec:utrivial}
In the following lemma we choose an arbitrary subset~$\H^{\varepsilon}_\eta$ of the space of smooth spinors on a slice~$N_{t}$ in Minkowski spacetime. The reason that here we decided to label this subspace with~$\H^{\varepsilon}_\eta$ is in preparation of Corollary~\ref{cor:Bt_conf_flat_uparallel}: there we will apply the following lemma with~$\H^{\varepsilon}_\eta$ the image of the unitary operator~$\tilde{U}$ (acting on a subspace $\H^{\varepsilon}_{t}\subset\H_{t,g}$).

\begin{Lemma}\label{lem:Bt2_multiplication_op}
Let~$T_{j}: \H^{\varepsilon}_\eta\subset C^{\infty}(N_t,SM)\rightarrow C_{0}^{\infty}(N_t,SM)$ with~$j\in\{1,2\}$ denote two smooth and compactly supported multiplication operators which satisfy that at every point~$x\in N_t$ and for every~$\psi\in \H^{\varepsilon}_\eta$,~$ (T_{j}\psi)(x)=\alpha_{j}(x)\gamma_{\eta 0}\psi(x)$ with~$\alpha_{j}(x)\in\mathbb{C}$. Then, for any~$\omega,\omega'\in \sigma(H_\eta)$ it holds that
\begin{align}\label{eq:Bt2_multiplication_op}
I_{T_{1}T_{2}}&:=\int_{-\infty}^{\infty}d\omega'\int_{-\infty}^{\infty}d\omega\;\partial_\omega\big(\eta_{\Lambda}(\omega)\text{\rm{tr}}_{\overline{\H^{\varepsilon}_\eta}}(K_{T_{1}T_{2}})\big)\frac{g(\omega') - g(\omega)}{\omega'-\omega}\nonumber\\
&=-2\int_{0}^{\infty} \frac{d\rho}{(2\pi)^{4}}\hat{\alpha}_{1}(\rho)\hat{\alpha}_{2}(-\rho)K(\rho)\;,
\end{align}
where~$K_{T_{1}T_{2}}=T_{1}F_\omega(H_\eta)T_{2}F_{\omega'}(H_\eta)$ is assumed to be trace-class,~$g$ is the characteristic function of the interval~$(-\frac{1}{\varepsilon},-m)$ and the kernel~$K$ is given by expression~\eqref{eq:kernel_Q}.
\end{Lemma}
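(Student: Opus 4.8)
The plan is to bring the double $\omega$-integral into closed form by diagonalising the free Dirac Hamiltonian $H_\eta$ in momentum space, and then to reduce the resulting momentum integral to the stated radial form. First I would rewrite the restricted trace as a full trace with a spectral projection inserted,
\[ \tr_{\overline{\H^{\varepsilon}_{\eta}}}\big(K_{T_{1}T_{2}}\big)\;=\;\tr_{\H_{t,\eta}}\Big(P^{\varepsilon}\,T_{1}\,F_{\omega}(H_\eta)\,T_{2}\,F_{\omega'}(H_\eta)\Big)\,, \]
where $P^{\varepsilon}:=\chi_{(-1/\varepsilon,-m)}(H_\eta)$ projects onto the closure of the ``Dirac-sea'' subspace $\H^{\varepsilon}_{\eta}$. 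Since $H_\eta$ is translation invariant, $F_{\omega}(H_\eta)$ and $P^{\varepsilon}$ are convolution operators, diagonal in momentum with symbols $\hat F_{\omega}(\vec k)=\delta(\omega-\omega_{k})P_{+}(\vec k)+\delta(\omega+\omega_{k})P_{-}(\vec k)$ (up to the ultraviolet cutoff $|\vec k|\lesssim 1/\varepsilon$ built into $F_{\omega}$, cf.~\cite{baryomink}), where $P_{\pm}(\vec k)$ are the spectral projections of the momentum-space Hamiltonian onto its eigenvalues $\pm\omega_{k}$; the multiplication operators $T_{j}\psi=\alpha_{j}\gamma_{\eta 0}\psi$ contribute the Fourier transforms $\hat\alpha_{j}$ at the momentum transfer. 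Carrying out the $\vec x,\vec y$ integrations, and using $F_{\omega'}(H_\eta)P^{\varepsilon}=g(\omega')\,F_{\omega'}(H_\eta)$, one finds that $\tr_{\overline{\H^{\varepsilon}_{\eta}}}(K_{T_{1}T_{2}})$ is $g(\omega')$ times a momentum integral of $\hat\alpha_{1}(\vec k-\vec k')\hat\alpha_{2}(\vec k'-\vec k)$ against $\sum_{a,b\in\{\pm\}}\delta(\omega-a\omega_{k})\,\delta(\omega'-b\omega_{k'})\,\Tr_{\C^{4}}\big(\gamma_{\eta 0}P_{a}(\vec k)\gamma_{\eta 0}P_{b}(\vec k')\big)$.

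The technical heart is the $\omega,\omega'$ integration. Here I would integrate by parts in $\omega$ to shift $\partial_\omega$ off $\eta_{\Lambda}(\omega)\tr(\cdots)$ onto the divided difference — the boundary terms drop because $\eta_{\Lambda}$ has compact support — and then treat the three distributional derivatives that appear: the $\delta'(\omega\mp\omega_{k})$ from $\partial_\omega\hat F_{\omega}$, the jump $g'(\omega)=\delta(\omega+1/\varepsilon)-\delta(\omega+m)$ inside $\partial_\omega\frac{g(\omega')-g(\omega)}{\omega'-\omega}$, and the $\delta(\omega'\mp\omega_{k'})$ from $\hat F_{\omega'}$. The $g'$-contributions vanish: $\delta(\omega+m)$ is killed by the Jacobian, which vanishes at $|\vec k|=0$, and $\delta(\omega+1/\varepsilon)$ is killed by $\eta_{\Lambda}(-1/\varepsilon)=0$; here, and for the fact that $\eta_{\Lambda}\equiv 1$ on the surviving momentum range, one uses the hierarchy $-1/\varepsilon\ll-\Lambda\ll-m$ of Remark~\ref{rem:baryo_rate}. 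The crucial point is that the sea-projection — equivalently the factor $g(\omega')$ it produces — singles out the negative-energy branch $\omega'=-\omega_{k'}$ and thereby destroys the antisymmetry in $(\omega,\omega')$ that would otherwise make the two remaining branch contributions cancel; what survives is the single term with $\omega=+\omega_{k}$, $\omega'=-\omega_{k'}$, which carries the factor $(\omega_{k}+\omega_{k'})^{-2}$. As a consistency check, the regulators $\varepsilon,\Lambda$ then drop out entirely.

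It then remains to evaluate the $\C^{4}$-trace and reduce the momentum integral. Using $\gamma_{\eta 0}P_{\pm}(\vec k)\gamma_{\eta 0}=P_{\pm}(-\vec k)$ together with the Clifford relations (the trace of an odd number of $\gamma$-matrices vanishes, $\Tr_{\C^{4}}(1)=4$, and the standard two-$\gamma$ traces), $\Tr_{\C^{4}}(\gamma_{\eta 0}P_{-}(\vec k)\gamma_{\eta 0}P_{+}(\vec k'))$ comes out proportional to $\omega_{k}\omega_{k'}+\vec k\cdot\vec k'-m^{2}$. Collecting everything, $I_{T_{1}T_{2}}$ equals a numerical constant times $\int\frac{d^{3}k\,d^{3}k'}{(2\pi)^{6}}\,\hat\alpha_{1}(\vec k-\vec k')\hat\alpha_{2}(\vec k'-\vec k)\,\frac{\omega_{k}\omega_{k'}+\vec k\cdot\vec k'-m^{2}}{\omega_{k}\omega_{k'}(\omega_{k}+\omega_{k'})^{2}}$. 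Changing variables to $\vec r=\tfrac12(\vec k+\vec k')$ and $\vec q=\vec k-\vec k'$, the $\hat\alpha$-factors become $\hat\alpha_{1}(\vec q)\hat\alpha_{2}(-\vec q)$, and the $\vec r$-integral — together with the angular integral over the direction of $\vec q$ — depends only on $\rho=|\vec q|$ by rotational invariance; carrying it out defines the kernel $K$ of~\eqref{eq:kernel_Q} and produces the prefactor $-2/(2\pi)^{4}$, which is the claimed identity.

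\emph{The hard part} is the double $\omega$-integration: one must keep careful track of the overlapping distributional derivatives coming from the spectral densities, the characteristic function $g$, and the cutoff $\eta_{\Lambda}$, and in particular recognise that it is precisely the restriction of the trace to $\overline{\H^{\varepsilon}_{\eta}}$ which prevents the naive cancellation between the two spectral branches. The momentum-space bookkeeping and the final change of variables are then routine.
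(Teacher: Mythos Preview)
Your overall momentum-space strategy and the $\omega$-bookkeeping are close to the paper's, but the central mechanism you invoke is not the one that actually produces the answer, and as written it would give the wrong numerical prefactor.

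The gap is your treatment of $\tr_{\overline{\H^{\varepsilon}_\eta}}$. You rewrite it as $\tr_{\H_{t,\eta}}(P^{\varepsilon}\,\cdot\,)$ with $P^{\varepsilon}=\chi_{(-1/\varepsilon,-m)}(H_\eta)$ and then use $F_{\omega'}(H_\eta)P^{\varepsilon}=g(\omega')F_{\omega'}(H_\eta)$ to ``single out the negative-energy branch $\omega'=-\omega_{k'}$.'' But the paper does \emph{not} insert any projection: it evaluates $\tr_{\overline{\H^{\varepsilon}_\eta}}(K_{T_1T_2})$ via Mercer's theorem as the full kernel integral $\int d^{3}k\,\Tr_{\C^{4}}(\hat K_{T_1T_2}(k,k))$, keeping \emph{both} branches $\omega'=\pm\omega_{k'}$. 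The branch selection comes entirely from the explicit factor $\frac{g(\omega')-g(\omega)}{\omega'-\omega}$ in the integrand, which is nonzero in \emph{two} regions: $J_{+}$ (with $\omega>m$, $\omega'<-m$) and $J_{-}$ (with $\omega<-m$, $\omega'>m$). If you multiply in an extra $g(\omega')$ as you propose, then in the $J_{-}$ region one has $g(\omega')=0$ and that contribution is killed; you are left with $J_{+}$ alone and hence with half the stated result. Your text asserts the prefactor $-2/(2\pi)^{4}$ but gives no mechanism for the $2$.

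In the paper the factor of $2$ arises because, for multiplication operators of the specific form $T_j=\alpha_j\gamma_{\eta 0}$, the spinor trace $\chi(\omega,\omega')=\Tr_{\C^4}\!\big[\gamma_{\eta 0}\hat F_\omega(k')\gamma_{\eta 0}\hat F_{\omega'}(k)\big]=4(\omega\omega'+m^{2}-k\!\cdot\!k')$ depends on $(\omega,\omega')$ only through the product $\omega\omega'$, so that $\chi(\omega_{k},-\omega_{k'})=\chi(-\omega_{k},\omega_{k'})$ and $J_{+}=J_{-}$. This symmetry is precisely what Remark~\ref{rem:Bt2_general_multiplication}\,(ii) warns can fail for more general $T_j$; your argument never uses it. (Relatedly, your formula $\omega_{k}\omega_{k'}+\vec k\!\cdot\!\vec k'-m^{2}$ for the trace has the opposite overall sign to the paper's $\chi(\omega_{k},-\omega_{k'})$.) To repair the argument, drop the inserted projection, keep both regions $J_{\pm}$ of $\frac{g(\omega')-g(\omega)}{\omega'-\omega}$, compute each via the $\delta(\omega^{2}-\omega_{k}^{2})$ decomposition as the paper does, and then use the $\omega\omega'$-symmetry of $\chi$ to conclude $I_{T_1T_2}=J_{+}+J_{-}=2J_{+}$.
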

\begin{proof}
In this proof, all tangent vectors are in~$\mathbb{R}^{3}$. The (Euclidean) scalar product will be denoted by ``$\cdot$", i.e.~$k\cdot x:=\delta_{\mu\nu}k^{\mu}x^{\nu}$ (where~$\delta$ is the Euclidean metric) and analogously we define~$\gamma_\eta\cdot k:=\delta_{\mu\nu}\gamma_\eta^{\mu}x^{\nu}$ and~$|k|:=\sqrt{k\cdot k}$.

In the first place, note that since~$g$ is the characteristic function of the interval~$(-1/\varepsilon,-m)$, the only contribution to the integral~\eqref{eq:Bt2_multiplication_op} is when~$\omega'\in(-\infty,-m)$ and~$\omega\in(m,\infty)$ or when~$\omega\in(-\infty,-m)$ and~$\omega'\in(m,\infty)$. In other words,
\[ I_{T_{1}T_{2}}=J_{+}+J_{-}\;, \]
where,~$J_{+}$ and~$J_{-}$ are defined by
\begin{align*}
    &J_{+}:=\int_{-\infty}^{-m}d\omega'\int_{m}^{\infty}d\omega\;\partial_\omega\big(\eta_{\Lambda}(\omega)\text{\rm{tr}}_{\overline{\H^{\varepsilon}_\eta}}(K_{T_{1}T_{2}})\big)\frac{1}{\omega'-\omega}\\
    &J_{-}:=-\int_{m}^{\infty}d\omega'\int_{-\infty}^{-m}d\omega\;\partial_\omega\big(\eta_{\Lambda}(\omega)\text{\rm{tr}}_{\overline{\H^{\varepsilon}_\eta}}(K_{T_{1}T_{2}})\big)\frac{1}{\omega'-\omega}\;.
\end{align*}

Secondly, given~$x,y\in \mathbb{R}^{3}$ and~$\omega\in(-\infty,-m)$, an explicit expression for the kernel~$F_\omega(x,y)$ of the integral operator~$F_\omega(H_\eta)$ was derived in~\cite[Appendix]{baryomink}
\begin{align}
    &F_\omega(x,y)=\int \frac{d^{3}k}{(2\pi)^{3}}\Hat{F}_\omega(k)\delta(\omega^{2}-\omega_{k}^{2})e^{ik\cdot(x-y)}\label{eq:kernelFomega}\\
    &\textrm{with}\hspace{0.5cm}\Hat{F}_\omega(k)\coloneqq-(\gamma_{\eta0}\omega-\gamma_\eta\cdot k+m)\gamma_{\eta0}\Theta(1+\varepsilon \omega)\label{eq:Fomega_k}\;,
\end{align}
where~$k\in\mathbb{R}^{3}$ and~$\omega_{k}\coloneqq(|k|^{2}+m^{2})^{1/2}$. In the case that~$\omega'\in(m,\infty)$,~$\Theta(1+\varepsilon \omega)$ has to be replaced with~$\Theta(1-\varepsilon \omega')$ in~\eqref{eq:Fomega_k}. 
\par With the help of the Fourier transformation~$\mathcal{F}: L^{1}(\mathbb{R}^{3})\rightarrow L^{1}(\mathbb{R}^{3})$, we define the Fourier conjugated operators~$\Hat{T}_{j}:=\mathcal{F}T_{j}\mathcal{F}^{-1}$,~$\Hat{F}_\omega(H_\eta):=\mathcal{F}F_\omega(H_\eta)\mathcal{F}^{-1}$ and~$\Hat{Q}_{j,\omega}\coloneqq\Hat{T}_{j}\Hat{F}_\omega(H_\eta)$, where~$j\in\{1,2\}$. In particular, they are integral operators
\begin{align}
    & (\Hat{T}_{j}\Hat{\psi})(p)=\int d^{3}k\hat{\alpha}_{j}(p-k)\gamma_{\eta0}\Hat{\psi}(k) & & \textrm{with} \hspace{0.4cm}\hat{\alpha}_{j}(p-k)=\frac{1}{(2\pi)^{3}}(\mathcal{F}\alpha_{j})(p-k) \label{eq:FourierT1}\\
    & (\Hat{F}_\omega(H_\eta)\Hat{\psi})(p)=\int d^{3}k \Hat{F}_\omega(p,k)\Hat{\psi}(k)& & \textrm{with}\hspace{0.4cm}\Hat{F}_\omega(p,k)\!=\!\!\int\!\!\frac{d^{3}x}{(2\pi)^{3}}\!\!\int\!\! d^{3}ye^{-ip\cdot x}F_\omega(x,y)e^{ik\cdot y}\nonumber\\
    & (\Hat{Q}_{j,\omega}\Hat{\psi})(p)=\int d^{3}k \Hat{Q}_{j,\omega}(p,k)\Hat{\psi}(k)& & \textrm{with}\hspace{0.4cm}\Hat{Q}_{j,\omega}(p,k)\!=\!\!\int\!\! d^{3}k'\hat{\alpha}_{j}(p-k')\gamma_{\eta0}\Hat{F}_\omega(k',k)\nonumber
\end{align}
where in all the previous expressions integration is over~$\mathbb{R}^{3}$ (for the sake of conciseness we will continue omitting the domain~$\mathbb{R}^{3}$ over which the integration is performed). It then follows that the trace of the Fourier transformed operator $\Hat{K}_{T_{1}T_{2}}\coloneqq\Hat{T}_{1}\Hat{F}_\omega(H_\eta)\Hat{T}_{2}\Hat{F}_{\omega'}(H_\eta)=\Hat{Q}_{1,\omega}\Hat{Q}_{2,\omega'}=\mathcal{F}K_{T_{1}T_{2}}\mathcal{F}^{-1}$ is
\begin{align*} 
    (\Hat{K}_{T_{1}T_{2}}\hat{\psi})(p)\!& =\!\int\!\! d^{3}k \Hat{K}_{T_{1}T_{2}}(p,k)\hat{\psi}(k) \hspace{0.4cm} \textrm{with}\hspace{0.4cm}\Hat{K}_{T_{1}T_{2}}(p,k)\!=\!\!\int \!\!d^{3}k'\Hat{Q}_{1,\omega}(p,k')\Hat{Q}_{2,\omega'}(k',k)\nonumber\\
    \tr_{\overline{\H^{\varepsilon}_\eta}}(K_{T_{1}T_{2}})&=\tr_{\overline{\H^{\varepsilon}_\eta}}(\Hat{K}_{T_{1}T_{2}})=\int d^{3}k\Tr_{\mathbb{C}^{4}}{(\Hat{K}_{T_{1}T_{2}}(k,k))}\\
    &=\int d^{3}k\int d^{3}k'\Tr_{\mathbb{C}^{4}}{(\Hat{Q}_{1,\omega}(k,k')\Hat{Q}_{2,\omega'}(k',k))}\;,
\end{align*}
where we used Lemma~\ref{lem:time_independent} and Mercer's Theorem. Given our explicit expression~\eqref{eq:kernelFomega} for the kernel~$F_\omega(x,y)$, it is easy to show that the kernel~$\Hat{F}_\omega(p,k)$ (and thus also $\Hat{Q}_{j,\omega}(p,k)$) presents the following form:
    \begin{align*}
        \Hat{F}_\omega(p,k)&=\int \frac{d^{3}x}{(2\pi)^{3}}\int d^{3}y e^{-ip\cdot x}F_\omega(x,y)e^{ik\cdot y}\nonumber\\
        &=\int \frac{d^{3}x}{(2\pi)^{3}}\int d^{3}y \int \frac{d^{3}k'}{(2\pi)^{3}}e^{-ip\cdot x}\Hat{F}_\omega(k')\delta(\omega^{2}-\omega_{k'}^{2})e^{ik'\cdot(x-y)}e^{ik\cdot y}\nonumber\\
        &=\int d^{3}k'\Hat{F}_\omega(k')\delta(\omega^{2}-\omega_{k'}^{2})\int \frac{d^{3}x}{(2\pi)^{3}} e^{ix\cdot(k'-p)}\int \frac{d^{3}y}{(2\pi)^{3}} e^{iy\cdot(k-k')}\nonumber\\
        &=\int d^{3}k'\Hat{F}_\omega(k')\delta(\omega^{2}-\omega_{k'}^{2})\delta^{(3)}(k'-p)\delta^{(3)}(k-k')\\
        &=\Hat{F}_\omega(p)\delta(\omega^{2}-\omega_{p}^{2})\delta^{(3)}(k-p)
    \end{align*}
This simplifies the kernels~$\Hat{Q}_{j,\omega}(p,k)$ and the trace of~$\Hat{K}_{12}$ considerably
\begin{align*}
     &\Hat{Q}_{j,\omega}(p,k)=\int d^{3}k'\hat{\alpha}_{j}(p-k')\gamma_{\eta0} \Hat{F}_\omega(k')\delta(\omega^{2}-\omega_{k'}^{2})\delta^{(3)}(k-k')\\
        &\;\quad\quad\quad\quad=\hat{\alpha}_{j}(p-k)\gamma_{\eta0}\Hat{F}_\omega(k)\delta(\omega^{2}-\omega_{k}^{2})\\
    &\implies\tr_{\overline{\H^{\varepsilon}_\eta}}(K_{T_{1}T_{2}})=\int d^{3}k\int d^{3}k'\Tr_{\mathbb{C}^{4}}{(\Hat{Q}_{1,\omega}(k,k')\Hat{Q}_{2,\omega'}(k',k))}\\
    &=\!\!\int \!\!\frac{d^{3}k}{(2\pi)^{3}}\!\!\int\!\! \frac{d^{3}k'}{(2\pi)^{3}}\hat{\alpha}_{1}(k-k')\hat{\alpha}_{2}(k'-k)\Tr_{\mathbb{C}^{4}}{[\gamma_{\eta0}\Hat{F}_\omega(k')\gamma_{\eta0}\Hat{F}_{\omega'}(k)]}\delta(\omega^{2}-\omega_{k}^{2})\delta(\omega'^{2}-\omega_{k'}^{2})
\end{align*}

The computation of the trace gives
\begin{align}\label{eq:trace_mult_ops}
    &\chi(\omega,\omega')\coloneqq\Tr_{\mathbb{C}^{4}}{[\gamma_{\eta0}\Hat{F}_\omega(k')\gamma_{\eta0}\Hat{F}_{\omega'}(k)]}=4(\omega'\omega+m^{2}-k\cdot k')\;,
\end{align}
where we used that the trace of an odd number of gamma matrices vanishes as well as the terms proportional to~$\gamma_\eta^{0}\gamma_\eta^{\mu}$ (since~$\Tr_{\mathbb{C}^{4}}{(\gamma_\eta^{0}\gamma_\eta^{\mu})}=4g^{\mu0}=0$). So, we have that
\begin{align*}
    &J_{+}=\int \frac{d^{3}k}{(2\pi)^{3}}\int \frac{d^{3}k'}{(2\pi)^{3}}\hat{\alpha}_{1}(k-k')\hat{\alpha}_{2}(k'-k)\int_{-\infty}^{\infty}d\omega\int_{-\infty}^{\infty}d\omega'\\
    &\;\times\frac{1}{\omega'-\omega}\partial_\omega\big(r(\omega,\omega')\chi(\omega,\omega')\delta(\omega^{2}-\omega_{k}^{2})\delta(\omega'^{2}-\omega_{k'}^{2} )\big) \Theta(\omega-m)\Theta(-\omega'-m)\;,
\end{align*}
where~$r(\omega,\omega'):=\eta_{\Lambda}(\omega)\Theta(1+\varepsilon\omega)\Theta(1-\varepsilon\omega')$. In order to integrate over~$\omega$ and~$\omega'$ we use that~$\delta(\omega^{2}-\omega_{k}^{2})=\frac{1}{2\omega_{k}}(\delta(\omega-\omega_{k})+\delta(\omega+\omega_{k}))$ and~$\delta(\omega'^{2}-\omega_{k'}^{2})=\frac{1}{2\omega_{k'}}(\delta(\omega'-\omega_{k'})+\delta(\omega'+\omega_{k'}))$. Then
\begin{align}
    &\int_{-\infty}^{\infty}d\omega\int_{-\infty}^{\infty}d\omega'\frac{1}{\omega'-\omega}\partial_\omega\big(r(\omega,\omega')\chi(\omega,\omega')\delta(\omega^{2}-\omega_{k}^{2})\delta(\omega'^{2}-\omega_{k'}^{2} )\big)\nonumber\\
    &\quad\quad\quad\quad\quad\quad\quad\quad\quad\quad\quad\times\Theta(\omega-m)\Theta(-\omega'-m)\nonumber\\
    &=\int_{m}^{\infty}d\omega\int_{-\infty}^{-m}d\omega'\frac{1}{\omega'-\omega}\frac{1}{4\omega_{k}\omega_{k'}}\partial_\omega\big(r(\omega,\omega')\chi(\omega,\omega')\delta(\omega-\omega_{k})\delta(\omega'+\omega_{k'})\big)\nonumber\\
    &=\int_{m}^{\infty}d\omega\frac{1}{-\omega_{k'}-\omega}\frac{1}{4\omega_{k}\omega_{k'}}\Big(\partial_\omega\big(r(\omega,-\omega_{k'})\chi(\omega,-\omega_{k'})\big)\delta(\omega-\omega_{k})\nonumber\\
    &\quad\quad+\partial_\omega\big(\delta(\omega-\omega_{k})\big)r(\omega,-\omega_{k'})\chi(\omega,-\omega_{k'})\Big)\nonumber\\
    &=\frac{1}{-\omega_{k'}-\omega_{k}}\frac{1}{4\omega_{k}\omega_{k'}}\partial_\omega\big(r(\omega,-\omega_{k'})\chi(\omega,-\omega_{k'})\big)\big|_{\omega=\omega_{k}}\nonumber\\
    &\quad-\frac{1}{4\omega_{k}\omega_{k'}}\partial_\omega\Big(\frac{r(\omega,-\omega_{k'})}{-\omega_{k'}-\omega}\chi(\omega,-\omega_{k'})\Big)\Big|_{\omega=\omega_{k}}\nonumber\\
    &=-\frac{1}{(\omega_{k'}+\omega_{k})^{2}}\frac{1}{4\omega_{k}\omega_{k'}}r(\omega_{k},-\omega_{k'})\chi(\omega_{k},-\omega_{k'})\;,\label{eq:omega_integral}\\
    &J_{+}=-\int \frac{d^{3}k}{(2\pi)^{3}}\int \frac{d^{3}k'}{(2\pi)^{3}}\hat{\alpha}_{1}(k-k')\hat{\alpha}_{2}(k'-k)\frac{r(\omega_{k},-\omega_{k'})}{(\omega_{k'}+\omega_{k})^{2}}\frac{1}{4\omega_{k}\omega_{k'}}\chi(\omega_{k},-\omega_{k'})\label{eq:traceBt2_intermediate}
\end{align}
That the multiplication operators are smooth and have a compact support implies that~$\hat{\alpha}_{1}$ and~$\hat{\alpha}_{2}$ are Schwartz functions and thus that they have a rapid decay in~$k-k'$ (they decay faster than any inverse polynomial). Therefore, the main contribution of the integrand to the~$k,k'$ integrals in~$J_{+}$ is for~$k$ close to~$k'$. Since expression~\eqref{eq:omega_integral} corresponds to an inverse polynomial in~$k$ and~$k'$, that~$k\approx k'$ implies that the main contribution to the integrals in~\eqref{eq:traceBt2_intermediate} is, in particular, for~$k$ and~$k'$ close to zero, i.e.\ when~$\omega_{k}\approx\omega_{k'}\approx m$. Moreover, for~$\omega_{k}\approx\omega_{k'}\approx m$, clearly~$\Theta(1+\varepsilon\omega_{k})=\Theta(1-\varepsilon\omega_{k'})=1$ and~$\eta_{\Lambda}(\omega)=1$ (as~$\Lambda\gg m$ and there exists a sufficiently small~$\delta>0$ such that~$\eta_{\Lambda}(\omega)=1$ for~$\omega\in(-\Lambda+\delta,\Lambda-\delta)$, cf.\ Remark~\ref{rem:baryo_rate}). For this reason, without loss of generality we can simply set~$r(\omega_{k},-\omega_{k'})$ equal to~$1$ in expression~\eqref{eq:traceBt2_intermediate}.

Proceeding analogously for the integral~$J_{-}$ (and omitting~$r(\omega,\omega')$ by the previous argument) yields
\begin{align}
    &\int_{-\infty}^{\infty}d\omega\int_{-\infty}^{\infty}d\omega'\frac{1}{\omega'-\omega}\partial_\omega\big(\chi(\omega,\omega')\delta(\omega^{2}-\omega_{k}^{2})\delta(\omega'^{2}-\omega_{k'}^{2} )\big)\Theta(-\omega-m)\Theta(\omega'-m)\nonumber\\
    &=\frac{1}{(\omega_{k'}+\omega_{k})^{2}}\frac{1}{4\omega_{k}\omega_{k'}}\chi(-\omega_{k},\omega_{k'})\label{eq:integral2_omega}\;,\\
    & J_{-}=-\int \frac{d^{3}k}{(2\pi)^{3}}\int \frac{d^{3}k'}{(2\pi)^{3}}\hat{\alpha}_{1}(k-k')\hat{\alpha}_{2}(k'-k)\frac{1}{(\omega_{k'}+\omega_{k})^{2}}\frac{1}{4\omega_{k}\omega_{k'}}\chi(-\omega_{k},\omega_{k'})\;,
\end{align}
where the relative sign difference between~\eqref{eq:omega_integral} and~\eqref{eq:integral2_omega} stems from the factor~$\frac{1}{\omega'-\omega}$ appearing in both integrals and the replacement~$\omega=\omega_{k}$ and~$ \omega'=-\omega_{k'}$ in the first one and~$\omega=-\omega_{k}$ and~$ \omega'=\omega_{k'}$ in the second one. Moreover, since~$\chi(\omega_{k},-\omega_{k'})=\chi(-\omega_{k},\omega_{k'})$, the two integrals agree ($J_{+}=J_{-}$). Note that for more general multiplication operators~$T_{j}$,~$\chi(\omega,-\omega')\neq\chi(-\omega,\omega')$ (more on this in Remark~\ref{rem:Bt2_general_multiplication}~(ii)).

We introduce the function~$\Gamma : \mathbb{R}\times\mathbb{R}\to\mathbb{R}$ by
\[ 
    \Gamma(k^{\mu},k'^{\mu}):=\frac{1}{(\omega_{k'}+\omega_{k})^{2}}\frac{1}{4\omega_{k}\omega_{k'}}\chi(\pm\omega_{k},\mp\omega_{k'})\;, \]
and the integration variables~$q=k-k'$ and~$r=\frac{k+k'}{2}$,
\begin{align}\label{eq:integral1}
    &I_{T_{1}T_{2}}=J_{+}+J_{-}=-2\int \frac{d^{3}q}{(2\pi)^{3}}\int \frac{d^{3}r}{(2\pi)^{3}}\hat{\alpha}_{1}(q)\hat{\alpha}_{2}(-q)\Gamma(r,q)\;,
\end{align}
where we used that the Jacobian determinant of the change of integration variables is~$-1$. The explicit expression of~$\Gamma$ in terms of~$k$ and~$k'$, and~$r$ and~$q$ is
\begin{align*}
    &\Gamma(k^{\mu},k'^{\mu})=\frac{-\omega_{k'}\omega_{k}+m^{2}-k\cdot k'}{(\omega_{k'}+\omega_{k})^{2}\omega_{k}\omega_{k'}}\implies \Gamma(r,q)=\frac{a-2|r|^{2}-\sqrt{a^{2}-b^{2}}}{2(a+\sqrt{a^{2}-b^{2}})\sqrt{a^{2}-b^{2}}}\;,
\end{align*}
where~$a:=|r|^{2}+\frac{1}{4}|q|^{2}+m^{2}$ and~$b\coloneqq q\cdot r$, so~$\omega_{k}=\sqrt{a+b}$ and~$\omega_{k'}=\sqrt{a-b}$. The spherical symmetry of~$\Gamma(r,q)$ (it is only a function of the norm of~$q$ and~$r$ and of the scalar product of~$q$ and~$r$), simplifies the computation of~$J_{\pm}$: introducing spherical coordinates~$\rho,\rho'\in(0,\infty)$,~$\theta,\theta'\in(0,\pi)$, and~$\varphi,\varphi'\in(0,2\pi)$ we can assume that
\begin{align*}
    q=(0,0,\rho') \hspace{0.5cm}\textrm{and}\hspace{0.5cm}r=(\rho\sin{\theta}\cos{\varphi},\rho\sin{\theta}\sin{\varphi},\rho\cos{\theta})\;.
\end{align*}
By this ansatz for~$q$ and~$r$, three out of the six integrals in~\eqref{eq:integral1} become trivial,
\[ I_{T_{1}T_{2}}=-2\int_{0}^{\infty} \frac{d\rho}{(2\pi)^{4}}\hat{\alpha}_{1}(\rho)\hat{\alpha}_{2}(-\rho)K(\rho)\;, \]
where we introduced the kernel~$K$
\begin{equation}\label{eq:kernel_Q}
    K(\rho)\coloneqq 2\int_{0}^{\infty} d\rho\int_{0}^{\pi} d\theta \rho^{2}\sin{\theta}\Gamma(r,q)\;.
\end{equation}
\end{proof}

\begin{Remark}\label{rem:Bt2_general_multiplication}\em{
    The previous lemma can be easily extended to more general multiplication operators, which will be required for Corollary~\ref{cor:Bt_conf_flat_ugeneral} and Remark~\ref{rem:Bt2_mixed_terms}.
    \bitem
\item[{\rm{(i)}}] If the multiplication operators do not involve gamma matrices (e.g.~$(T_{j}\psi)(x)=\alpha_{j}(x)\psi(x)$ with~$\psi\in\H_\eta^{\varepsilon}$,~$x\in N_t$ and~$\alpha_{j}(x)\in\mathbb{C}$) all of the derivations and the final result of the previous lemma go through with a small sign difference: the function~$\chi(\omega,\omega')$ has to be replaced with
\begin{equation}
    \chi(\omega,\omega')=4(\omega'\omega+m^{2}+k\cdot k')\;.
\end{equation}
\item[{\rm{(ii)}}] If the multiplication operators involve an arbitrary number of gamma matrices the function~$\chi(\omega,\omega')$ has to be replaced again. Let~$(T_{j}\psi)(x)=\alpha_{j}A_{j}(\gamma_\eta)\psi(x)$ with~$\psi\in\H_\eta^{\varepsilon}$,~$x\in N_t$,~$\alpha_{j}(x)\in\mathbb{C}$ and~$A_{j}(\gamma_\eta)\in\mathbb{C}^{4\times4}$ an arbitrary product of gamma matrices. Then~$\chi(\omega,\omega')$ has to be replaced with the less simplified expression
\begin{equation*}
    \chi(\omega,\omega'):=\Tr_{\mathbb{C}^{4}}{[A_{1}(\gamma_\eta)\Hat{F}_\omega(k')A_{2}(\gammaeta)\Hat{F}_{\omega'}(k)}]\;.
\end{equation*}
However, in this case,~$\chi(\omega,-\omega')\neq\chi(-\omega,\omega')$ since the~$\omega$ and~$\omega'$ dependence in previous expression is, in general, not of the form~$\omega\omega'$ anymore (which was the case in the previous lemma, cf.~\eqref{eq:trace_mult_ops}): for example, if~$A_{1}=\textrm{Id}_{\mathbb{C}^{4}}$ and~$A_{2}=\gamma_{\eta0}$ we have terms proportional to~$m\omega+m\omega'$. So, the function~$\Gamma$ is replaced with
\begin{equation*}
    \Gamma_{\pm}(k^{\mu},k'^{\mu}):=\frac{1}{(\omega_{k'}+\omega_{k})^{2}}\frac{1}{4\omega_{k}\omega_{k'}}\chi(\pm\omega_{k},\mp\omega_{k'})\;.
\end{equation*}
and~$J_{+}\neq J_{-}$. Moreover, a priori, it is not clear anymore if the integrand in~\eqref{eq:integral1} presents spherical symmetry. Hence, in this scenario
\begin{align}\label{eq:general_IT1T2}
   &I_{T_{1}T_{2}}=-\int \frac{d^{3}q}{(2\pi)^{3}}\int \frac{d^{3}r}{(2\pi)^{3}}\hat{\alpha}_{1}(q)\hat{\alpha}_{2}(-q)(\Gamma_+{}(r,q)+\Gamma_{-}(r,q))\;.
\end{align}

\eitem
\hfill\QEDrem
    }
\end{Remark}

\begin{Corollary}\label{cor:Bt_conf_flat_uparallel}
    Let~$(N_t)_{t\in\mathbb{R}}$ be the foliation of the conformally flat spacetime~\eqref{eq:metric_conf_flat} given by the level sets of the global time function~$t$. Moreover, assume that the conformal factor satisfies that~$\Omega=1$ outside a compact region~$V\subset M$, that~$m\neq0$ and~$u=\partial_t$. Furthermore, given a hypersurface~$N_t$ with~$N_t\cap V\neq\emptyset$, we assume that~$A_t$ has an absolutely continuous spectrum and that condition~\eqref{eq:small_Delta_A} is satisfied. Then for this hypersurface~$N_t$ it holds that
\[ B_t^{(2)}=2m^{2}\int_{0}^{\infty} \frac{d\rho}{(2\pi)^{4}}\big(\hat{\alpha}_{1}(\rho)\hat{\alpha}_{2}(-\rho)+\hat{\alpha}_{2}(\rho)\hat{\alpha}_{1}(-\rho)\big)K(\rho)\:, \]
    where~$\alpha_{1}=\dot{\Omega}$, ~$\alpha_{2}=(\Omega-1)$ and~$K$ is the kernel~\eqref{eq:kernel_Q}.
\end{Corollary}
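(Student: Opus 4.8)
The plan is to make the perturbation~$\Delta A(t)=A_t-\tilde H_\eta$ fully explicit in this scenario, to verify that it satisfies the hypotheses of Proposition~\ref{prop:Bt_conf_flat}, and then to evaluate the second‑order term by applying Lemma~\ref{lem:Bt2_multiplication_op} twice. \emph{First, identifying~$\Delta A$.} Since~$u=\partial_t$ has coordinate components~$(1,0,0,0)$, the definition~\eqref{eq:At} collapses to~$A_t=\tfrac12(H_g+H_g^\ast)$. I would conjugate with the unitary~$\tilde U=\Omega^{3/2}$ of Lemma~\ref{lem:time_independent} and use the conformal covariance~$D_g\psi=\Omega^{-5/2}D_\eta(\Omega^{3/2}\psi)$: for a Dirac solution~$\psi$ and~$\phi=\Omega^{3/2}\psi$ this gives~$D_\eta\phi=m\Omega\,\phi$, hence in Cartesian coordinates~$i\partial_t\phi=\big(H_\eta+m(\Omega-1)\gamma_{\eta 0}\big)\phi$; comparing with~$i\partial_t\phi=\tfrac32 i\tfrac{\dot\Omega}{\Omega}\phi+(\tilde U H_g\tilde U^{-1})\phi$ (the first term coming from~$\partial_t\tilde U$) identifies~$\tilde U H_g\tilde U^{-1}=H_\eta+m(\Omega-1)\gamma_{\eta 0}-\tfrac32 i\tfrac{\dot\Omega}{\Omega}$. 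The last summand is an anti‑self‑adjoint multiplication operator, so it cancels in~$H_g+H_g^\ast$ and one obtains~$\tilde U A_t\tilde U^{-1}=H_\eta+m(\Omega-1)\gamma_{\eta 0}$. As~$\tilde U\tilde H_\eta\tilde U^{-1}=H_\eta$ by definition, the perturbation transported to~$\H_{t,\eta}$ is the multiplication operator~$\psi\mapsto m(\Omega(t,\cdot)-1)\gamma_{\eta 0}\psi$; being a scalar function times~$\gamma_{\eta 0}$, conjugation by~$\tilde U$ leaves it unchanged, so~$\Delta A(t)$ is already this multiplication operator on~$\H_{t,g}$, with smooth coefficients supported in~$V$ (because~$\Omega\equiv1$ off~$V$). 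This is precisely the~$m\Omega\gamma_{\eta 0}$ contribution from~$H_g+H_g^\ast$ alluded to after Corollary~\ref{cor:At_FLRW}; alternatively it can be read off from Lemma~\ref{lem:coefs_conformally_flat}. In particular the hypotheses of Proposition~\ref{prop:Bt_conf_flat} hold here.

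Next, Proposition~\ref{prop:Bt_conf_flat} gives~$B_t^{(2)}$ by formula~\eqref{eq:Bt2_conf_flat} with~$\tilde Q(\omega,\omega')=\Delta A\,F_\omega(\tilde H_\eta)\,\Delta A\,F_{\omega'}(\tilde H_\eta)$. I would apply Lemma~\ref{lem:time_independent}\,(iii) with~$\H^{\varepsilon}_\eta:=\tilde U(\H^{\varepsilon}_t)$, using that~$\tilde U F_\omega(\tilde H_\eta)\tilde U^{-1}=F_\omega(H_\eta)$ is~$t$‑independent, to write~$\frac{d}{dt}\tr_{\overline{\H^{\varepsilon}_t}}\tilde Q(\omega,\omega')=\tr_{\overline{\H^{\varepsilon}_\eta}}\frac{d}{dt}Q(\omega,\omega')$ with~$Q(\omega,\omega')=[m(\Omega-1)\gamma_{\eta 0}]\,F_\omega(H_\eta)\,[m(\Omega-1)\gamma_{\eta 0}]\,F_{\omega'}(H_\eta)$. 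The entire~$t$‑dependence of~$Q$ sits in the two factors~$\Omega-1$, so the Leibniz rule yields
\[ \frac{d}{dt}Q(\omega,\omega')=T_1 F_\omega(H_\eta)\,T_2 F_{\omega'}(H_\eta)+T_2 F_\omega(H_\eta)\,T_1 F_{\omega'}(H_\eta), \]
where~$T_1\psi:=m\dot\Omega\,\gamma_{\eta 0}\psi$ and~$T_2\psi:=m(\Omega-1)\gamma_{\eta 0}\psi$ are exactly the smooth, compactly supported multiplication operators (with coefficients~$\alpha_1=m\dot\Omega$ and~$\alpha_2=m(\Omega-1)$) to which Lemma~\ref{lem:Bt2_multiplication_op} applies. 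Trace‑class of the two summands follows as in Lemma~\ref{lem:trace_class}/Proposition~\ref{prop:Bt_conf_flat}, and the interchanges of~$\frac{d}{dt}$, $\partial_\omega$ and the~$(\omega,\omega')$‑integrations are covered by the same smoothness and compact‑support estimates.

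Finally, substituting the above into~\eqref{eq:Bt2_conf_flat} and splitting by linearity gives~$B_t^{(2)}=-I_{T_1T_2}-I_{T_2T_1}$ in the notation of Lemma~\ref{lem:Bt2_multiplication_op}, and that lemma supplies~$I_{T_jT_k}=-2\int_0^\infty\frac{d\rho}{(2\pi)^4}\hat\alpha_j(\rho)\hat\alpha_k(-\rho)\,K(\rho)$ with~$K$ the kernel~\eqref{eq:kernel_Q}. Pulling the overall~$m^2$ out of the Fourier transforms and relabelling~$\alpha_1:=\dot\Omega$,~$\alpha_2:=\Omega-1$ produces exactly~$B_t^{(2)}=2m^2\int_0^\infty\frac{d\rho}{(2\pi)^4}\big(\hat\alpha_1(\rho)\hat\alpha_2(-\rho)+\hat\alpha_2(\rho)\hat\alpha_1(-\rho)\big)K(\rho)$, which is the claim.

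I expect the main obstacle to be the first step: pinning down~$\Delta A(t)$ correctly — in particular seeing that it reduces to a pure ``scalar~$\times\,\gamma_{\eta 0}$'' multiplication operator, so that Lemma~\ref{lem:Bt2_multiplication_op} is applicable verbatim and the~$\tilde U$‑conjugation is invisible — and recognising that it is the extra~$\frac{d}{dt}$ in the definition of the baryogenesis rate that manufactures the second, a priori different multiplication operator~$T_1$ carrying the coefficient~$\dot\Omega$. Everything downstream of that is a matter of assembling results already established in the paper.
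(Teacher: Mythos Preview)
Your proposal is correct and follows essentially the same approach as the paper: identify~$\Delta A(t)=m(\Omega-1)\gamma_{\eta 0}$, transport via~$\tilde U$ (which leaves this multiplication operator unchanged), differentiate the trace of~$Q(\omega,\omega')$ by Leibniz to produce the pair~$T_1,T_2$, and apply Lemma~\ref{lem:Bt2_multiplication_op} to each of the two resulting terms. The only cosmetic difference is that you derive~$\Delta A$ via the conformal covariance~$D_g=\Omega^{-5/2}D_\eta\Omega^{3/2}$ rather than by quoting Lemma~\ref{lem:coefs_conformally_flat} directly, and you absorb~$m$ into~$T_1,T_2$ before extracting~$m^2$ at the end, whereas the paper factors it out immediately.
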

\begin{proof}
        In a conformally flat spacetime with~$u=\partial_t$ the symmetrized Hamiltonian is
        \begin{equation*}
        A_t=-i\gamma_{\eta 0}\gamma_\eta^{\mu}\pr_{\mu}-\frac{3}{2}i\frac{\pr_{\mu}(\Omega)}{\Omega}\gamma_{\eta 0}\gamma_\eta^{\mu}+m\gamma_{gt}=\tilde{H}_\eta+(\Omega-1)m\gamma_{\eta 0} \;, 
        \end{equation*}
    where in the last step we used that~$\gamma_{gt}=\Omega\gamma_{\eta 0}$. Hence, we see that 
        \begin{align*} 
        &\Delta A(t)\coloneqq A_t-\tilde{H}_\eta=(\Omega-1)m\gamma_{\eta 0}\;,
    \end{align*}
    which corresponds to a smooth multiplication operator compactly supported in~$V\cap N_t$ (since~$\Omega=1$ in~$N_t\setminus V$). As in Proposition~\ref{prop:Bt_conf_flat}, we introduce the operator $\tilde{Q}(\omega,\omega'):=\Delta AF_\omega(\tilde{H}_\eta)\Delta AF_{\omega'}(\tilde{H}_\eta) : \H_{t,g}\rightarrow C_{0}^{\infty}(N_t,SM)$. Using that~$F_\omega(\tilde{H}_\eta)=\Omega^{-3/2}F_\omega(H_\eta)\Omega^{3/2}$ we obtain
    \begin{align*}
        &\frac{d}{dt}\tr_{\overline{\H_t^{\varepsilon}}}({\tilde{Q}}(\omega,\omega'))=\frac{d}{dt}\tr_{\overline{\H_t^{\varepsilon}}}\big(\Delta AF_\omega(\tilde{H}_\eta)\Delta AF_{\omega'}(\tilde{H}_\eta)\big)\\
        &=\frac{d}{dt}\tr_{\overline{\H_{\eta}^{\varepsilon}}}\big((\Omega^{3/2}\Delta A(t)\Omega^{-3/2})F_\omega(H_\eta)(\Omega^{3/2}\Delta A(t)\Omega^{-3/2} )F_{\omega'}(H_\eta)\big)\\
        &= m^{2}\Big(\tr_{\overline{\H_{\eta}^{\varepsilon}}}(\dot{\Omega}\gamma_{\eta 0}F_\omega(H_\eta)(\Omega-1)\gamma_{\eta 0}F_{\omega'}(H_\eta))+\tr_{\overline{\H_t^{\varepsilon}}}((\Omega-1)\gamma_{\eta 0}F_\omega(H_\eta)\dot{\Omega}\gamma_{\eta 0}F_{\omega'}(H_\eta))\Big)\\
        &=:m^{2}(\tr_{\overline{\H_\eta^{\varepsilon}}}(K_{T_{1}T_{2}})+\tr_{\overline{\H_\eta^{\varepsilon}}}(K_{T_{2}T_{1}}))\;,
    \end{align*}
    where~$\H_\eta^{\varepsilon}=\tilde{U}(\H_t^{\varepsilon})$ (cf. Lemma~\ref{lem:time_independent}), we used that since $\Delta A$ is a multiplication operator, the unitarily transformed operator is simply $\Omega^{3/2}\Delta A(t)\Omega^{-3/2}=(\Omega-1)m\gamma_{\eta 0} : \H^{\varepsilon}_{\eta}\rightarrow \H^{\varepsilon}_{\eta}$ and introduced the operators~$T_{1}:=\dot{\Omega}\gamma_{\eta0}$,~$T_{2}:=(\Omega-1)\gamma_{\eta0}$. Moreover, the operator product $K_{T_{i}T_{j}}$ is defined by
    \begin{align*}
    K_{T_iT_{j}}:=T_iF_\omega(H_\eta)T_{j}F_{\omega'}(H_\eta) : \H^{\varepsilon}_\eta  \rightarrow \H^{\varepsilon}_\eta \hspace{0.3cm}
    \end{align*}
    Then, Proposition~\ref{prop:Bt_conf_flat} and Lemma~\ref{lem:Bt2_multiplication_op} yield the stated result
    \begin{align*}
        B^{(2)}_t&=-\int_{-\infty}^\infty d\omega \int_{-\infty}^\infty d\omega'\:
    \partial_\omega\Big(\eta_{\Lambda}(\omega)\frac{d}{dt}\tr_{\overline{\H_t^{\varepsilon}}}(\tilde{Q}(\omega,\omega'))\Big)\:\frac{g(\omega') - g(\omega)}{\omega'-\omega}\\
    & =-m^{2}\int_{-\infty}^\infty d\omega \int_{-\infty}^\infty d\omega'\:
    \partial_\omega\big(\eta_{\Lambda}(\omega)(\tr_{\overline{\H_\eta^{\varepsilon}}}(K_{T_{1}T_{2}})+\tr_{\overline{\H_\eta^{\varepsilon}}}(K_{T_{2}T_{1}}))\big)\:\frac{g(\omega') - g(\omega)}{\omega'-\omega}\\
    &=2m^{2}\int_{0}^{\infty} \frac{d\rho}{(2\pi)^{4}}\big(\hat{\alpha}_{1}(\rho)\hat{\alpha}_{2}(-\rho)+\hat{\alpha}_{2}(\rho)\hat{\alpha}_{1}(-\rho)\big)K(\rho)\;,
    \end{align*}
    where~$\alpha_{1}=\dot{\Omega}$, ~$\alpha_{2}=(\Omega-1)$ and~$K$ is the kernel~\eqref{eq:kernel_Q}.

\end{proof}

\section{Scenario 2: a general regularizing vector field} \label{sec:ugeneral}

Recall that by Lemma~\ref{lem:At_diagonalizable} the rate of baryogenesis is identically zero in the case that~$m=0$ and~$u=\partial_t$. Hence, in the following corollary we analyze the case in which the regularizing vector field deviates slightly from~$\partial_t$. Note that we start again with an arbitrary subset~$\H^{\varepsilon}_\eta$ of the space of smooth spinors on a Cauchy hypersurface~$N_{t}$ in Minkowski spacetime.

\begin{Lemma}\label{lem:Bt2_diff_ops}
    Let~$L_{j}: \H_\eta^{\varepsilon}\subset C^{\infty}(N_t,SM)\rightarrow C_{0}^{\infty}(N_t,SM)$ with~$j\in\{1,2\}$ denote two smooth and compactly supported first order differential operators. Furthermore,~$T_{1}: \H_\eta^{\varepsilon}\rightarrow C_{0}^{\infty}(N_t,SM)$ is a smooth and compactly supported multiplication operator which at every point~$x\in N_t$ and for every~$\psi\in \H_\eta^{\varepsilon}$ satisfies that~$ (T_{1}\psi)(x)=\alpha_{1}(x)\gamma_{\eta 0}\psi(x)$ with~$\alpha_{1}(x)\in\mathbb{C}$. Then, for any~$\omega,\omega'\in \sigma(H_\eta)$ it holds that
\begin{align}\label{eq:Bt2_general_ops}
I_{AB}:=&\int_{-\infty}^{\infty}d\omega'\int_{-\infty}^{\infty}d\omega\;\partial_\omega\big(\eta_{\Lambda}(\omega)\text{\rm{tr}}_{\overline{\H^{\varepsilon}_\eta}}(K_{AB})\big)\frac{g(\omega') - g(\omega)}{\omega'-\omega}\nonumber\\
=&-\int \frac{d^{3}k}{(2\pi)^{3}}\int \frac{d^{3}k'}{(2\pi)^{3}}(\Gamma^{AB}_{+}(k,k')+\Gamma^{AB}_{-}(k,k'))\;,
\end{align}
where either~$A=L_{1}$ and~$ B=L_{2}$, or~$A=T_{1}$ and~$B=L_{2}$. Moreover,~$K_{AB}=AF_\omega(H_\eta)BF_{\omega'}(H_\eta)$ is assumed to be trace-class,~$g$ is the characteristic function of the interval~$(-\frac{1}{\varepsilon},-m)$ and~$\Gamma^{AB}_{\pm}$ is a smooth function defined by equations~\eqref{eq:gamma_1} and~\eqref{eq:gamma_2}.
\end{Lemma}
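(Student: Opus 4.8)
The plan is to run the argument of Lemma~\ref{lem:Bt2_multiplication_op} essentially verbatim, the only genuinely new input being that a first order differential operator with smooth compactly supported coefficients becomes, after a spatial Fourier transform, an integral operator whose kernel depends \emph{polynomially} (of degree at most one) on the input momentum, instead of being of the constant-in-momentum type $\hat\alpha(p-k)\gamma_{\eta0}$ treated there. First I would split the double spectral integral into two pieces: since $g$ is the characteristic function of $(-1/\varepsilon,-m)$ while $\hat F_\omega(H_\eta)$ carries the factors $\Theta(1+\varepsilon\omega),\Theta(1-\varepsilon\omega')$ (cf.~\eqref{eq:Fomega_k}), the quotient $\frac{g(\omega')-g(\omega)}{\omega'-\omega}$ is supported on the set where exactly one of $\omega,\omega'$ lies in $(-\infty,-m)$ and the other in $(m,\infty)$. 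Hence $I_{AB}=J_++J_-$, with $J_+$ the term $\omega>m,\ \omega'<-m$ (weight $1/(\omega'-\omega)$) and $J_-$ the term $\omega<-m,\ \omega'>m$ (weight $-1/(\omega'-\omega)$), exactly as in the derivation of~\eqref{eq:integral1}.

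Second, the Fourier computation. Recall from the proof of Lemma~\ref{lem:Bt2_multiplication_op} that, by translation invariance of $H_\eta$, the operator $F_\omega(H_\eta)$ is diagonal in momentum space, with kernel $\hat F_\omega(p,k)=\hat F_\omega(p)\,\delta(\omega^2-\omega_p^2)\,\delta^{(3)}(k-p)$ and $\hat F_\omega(p)$ as in~\eqref{eq:Fomega_k}. For a first order differential operator $L=b^{\mu}\pr_{\mu}+c$ with smooth compactly supported $\C^{4\times4}$-valued coefficients, a short computation in the Fourier conventions of~\eqref{eq:FourierT1} gives
\[
  (\hat L\,\hat\psi)(p)=\int d^3k\,\hat L(p,k)\,\hat\psi(k)\,,\qquad
  \hat L(p,k)=i\,k_{\mu}\,\hat b^{\mu}(p-k)+\hat c(p-k)\,,
\]
where $\hat b^{\mu},\hat c$ are the Fourier-transformed coefficients in the convention of~\eqref{eq:FourierT1}. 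Thus $\hat L(p,k)$ is a finite sum of terms of the form (Schwartz $\C^{4\times4}$-valued function of $p-k$)$\times$(monomial in $k$ of degree $\le1$), while for the multiplication operator $T_1$ one simply has $\hat T_1(p,k)=\hat\alpha_1(p-k)\gamma_{\eta0}$. Composing, $\hat Q_{A,\omega}(p,k)=\hat A(p,k)\,\hat F_\omega(k)\,\delta(\omega^2-\omega_k^2)$; applying Lemma~\ref{lem:time_independent} and Mercer's theorem as in the previous lemma, the trace becomes
\[
  \tr_{\overline{\H^{\varepsilon}_\eta}}(K_{AB})
  =\int \frac{d^3k}{(2\pi)^3}\int \frac{d^3k'}{(2\pi)^3}\,
  \chi^{AB}(\omega,\omega',k,k')\,\delta(\omega^2-\omega_{k'}^2)\,\delta(\omega'^2-\omega_k^2)\,,
\]
with $\chi^{AB}(\omega,\omega',k,k'):=\Tr_{\C^4}\big(\hat A(k,k')\,\hat F_\omega(k')\,\hat B(k',k)\,\hat F_{\omega'}(k)\big)$ (up to the labelling of $k,k'$, as in the proof of Lemma~\ref{lem:Bt2_multiplication_op}). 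This $\chi^{AB}$ is the analogue of $\chi(\omega,\omega')$ in~\eqref{eq:trace_mult_ops}, now additionally carrying the momentum polynomials produced by the derivatives and the Fourier transforms of the coefficient functions of $A$ and $B$.

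Third, the $\omega,\omega'$-integrations are performed exactly as in~\eqref{eq:omega_integral}--\eqref{eq:integral2_omega}: writing $\delta(\omega^2-\omega_k^2)=\tfrac1{2\omega_k}\big(\delta(\omega-\omega_k)+\delta(\omega+\omega_k)\big)$ and likewise for the $\omega'$-factor, the support restrictions coming from $g$ single out $\omega=\omega_k,\ \omega'=-\omega_{k'}$ in $J_+$ and $\omega=-\omega_k,\ \omega'=\omega_{k'}$ in $J_-$; acting with $\pr_\omega$ and integrating against $1/(\omega'-\omega)$ then produces, after the (distributional) $\omega$-integration, the rational prefactor $-\frac{1}{(\omega_{k'}+\omega_k)^2}\,\frac{1}{4\omega_k\omega_{k'}}$, precisely as in~\eqref{eq:omega_integral}. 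Since that step did not use the momentum dependence of the integrand, it carries over with $\chi^{AB}$ in place of $\chi$, and one is led to define (these being~\eqref{eq:gamma_1} and~\eqref{eq:gamma_2})
\[
  \Gamma^{AB}_{\pm}(k,k'):=\frac{1}{(\omega_{k'}+\omega_k)^2}\,\frac{1}{4\omega_k\omega_{k'}}\;
  r(\pm\omega_k,\mp\omega_{k'})\,\chi^{AB}(\pm\omega_k,\mp\omega_{k'},k,k')\,,
\]
with $r(\omega,\omega'):=\eta_\Lambda(\omega)\Theta(1+\varepsilon\omega)\Theta(1-\varepsilon\omega')$, so that $I_{AB}=J_++J_-=-\int\frac{d^3k}{(2\pi)^3}\int\frac{d^3k'}{(2\pi)^3}\big(\Gamma^{AB}_+(k,k')+\Gamma^{AB}_-(k,k')\big)$, which is~\eqref{eq:Bt2_general_ops}.

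Two points need care, and they explain the shape of the statement. First, unlike in Lemma~\ref{lem:Bt2_multiplication_op}, the $(\omega,\omega')$-dependence of $\chi^{AB}$ is no longer purely through the product $\omega\omega'$: the derivatives together with the gamma matrices in $\hat F_\omega$ generate terms linear in a single $\omega$. Hence in general $\chi^{AB}(\omega_k,-\omega_{k'},k,k')\neq\chi^{AB}(-\omega_k,\omega_{k'},k,k')$, so $J_+\neq J_-$ and one cannot pass to the symmetric/spherical variables $q=k-k'$, $r=\tfrac12(k+k')$ as in~\eqref{eq:integral1}--\eqref{eq:kernel_Q}; this is exactly the phenomenon already anticipated in Remark~\ref{rem:Bt2_general_multiplication}~(ii), which is why both $\Gamma^{AB}_+$ and $\Gamma^{AB}_-$ must survive. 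Second, the interchanges of integration, the Stone-formula limits and the distributional $\omega$-integration have to be justified; this rests on the standing hypothesis that $K_{AB}$ is trace-class, together with the rapid (Schwartz) decay of the Fourier transforms $\hat b^{\mu},\hat c$ of the compactly supported coefficients, which dominates the bounded-degree momentum polynomials introduced by the derivatives, while the cut-offs $\eta_\Lambda,\Theta(1\pm\varepsilon\,\cdot\,)$ additionally confine $\omega_k,\omega_{k'}\le\Lambda$ and thereby make the surviving $k,k'$-integrals effectively integrals over a compact region. The only real obstacle I anticipate is bookkeeping: expanding $\hat A$ and $\hat B$ into their monomial pieces and computing the resulting Dirac traces $\Tr_{\C^4}$ to obtain $\chi^{AB}$, and hence $\Gamma^{AB}_\pm$, in closed form — routine but lengthy.
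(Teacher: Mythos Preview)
Your proposal is correct and follows essentially the same route as the paper's own proof: split $I_{AB}=J_++J_-$, compute the Fourier kernel $\hat L_j(p,k)=ik\cdot\hat a_j(p-k)+\hat b_j(p-k)$, use the diagonal form $\hat F_\omega(p,k)=\hat F_\omega(p)\delta(\omega^2-\omega_p^2)\delta^{(3)}(k-p)$, read off the trace as a momentum integral against a matrix trace $\chi^{AB}$, and then carry out the $\omega,\omega'$-integrations exactly as in~\eqref{eq:omega_integral}--\eqref{eq:integral2_omega}. Two cosmetic discrepancies with the paper's version of~\eqref{eq:gamma_1}--\eqref{eq:gamma_2}: the paper drops the factor $r(\cdot,\cdot)$ (by the same rapid-decay argument you invoke, cf.\ the paragraph after~\eqref{eq:traceBt2_intermediate}), and it uses the labeling $\chi(\mp\omega_k,\pm\omega_{k'})$ rather than your $\chi(\pm\omega_k,\mp\omega_{k'})$; since $\Gamma^{AB}_++\Gamma^{AB}_-$ is what enters~\eqref{eq:Bt2_general_ops}, the latter is immaterial, and the paper does \emph{not} expand the Dirac trace $\chi^{AB}$ further, so the ``bookkeeping'' you anticipate is in fact not needed for the lemma as stated.
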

\begin{proof}
    The proof of this lemma is very similar to the one of Lemma~\ref{lem:Bt2_multiplication_op}. In the first place, consider the case in which~$A=L_{1}$ and~$B=L_{2}$. At every point~$x\in N_t$ the first order differential operators~$L_{j}$ adopt the form
    \begin{align*}
        &L_{j}(x)=a_{j}(x)\cdot \partial+b_{j}(x) \;,
    \end{align*}
    where~$a_{j}(x)\cdot \partial:=\delta^{\nu\mu} a_{\nu}(x)\partial_{\mu}$ and the coefficients~$a_{j}^{\mu}$ and~$b_{j}$ can be real, complex or matrix valued. As before, we define again the Fourier conjugated operators~$\Hat{L}_{j}=\mathcal{F}L_{j}\mathcal{F}^{-1}$,~$\Hat{F}_\omega(H_\eta)=\mathcal{F}F_\omega(H_\eta)\mathcal{F}^{-1}$ and~$\Hat{Q}_{j,\omega}\coloneqq\Hat{L}_{j}\Hat{F}_\omega(H_\eta)$, where~$j\in\{1,2\}$. For the explicit expression of~$\Hat{F}_\omega(H_\eta)$ and its kernel~$\Hat{F}_\omega(k,k')$ see the proof of Lemma~\ref{lem:Bt2_multiplication_op}; the operators~$\Hat{L}_{j}$ and~$\Hat{Q}_{j,\omega}$ satisfy that
    \begin{align*}
        &(\Hat{L}_{j}\hat{\psi})(p)=\int d^{3}k \Hat{L}_{j}(p,k)\hat{\psi}(k) & &\textrm{with} \hspace{0.4cm} \Hat{L}_{j}(p,k)=ik\cdot\hat{a}_{j}(p-k)+\hat{b}_{j}(p-k) \\ 
        &(\Hat{Q}_{j,\omega}\hat{\psi})(p)=\int d^{3}k \Hat{Q}_{j,\omega}(p,k)\hat{\psi}(k) & &\textrm{with} \hspace{0.4cm} \Hat{Q}_{j,\omega}(p,k)=\int d^{3}k' \Hat{L}_{j}(p,k')\Hat{F}_\omega(k',k)\;.\nonumber
    \end{align*}
    Using that~$\Hat{F}_\omega(p,k)=\Hat{F}_\omega(p)\delta(\omega^{2}-\omega_{p}^{2})\delta^{(3)}(k-p)$ (cf.\ expression~\eqref{eq:Fomega_k}) the kernel~$\Hat{Q}_{j,\omega}$ simplifies and the trace of~$\Hat{K}_{L_{1}L_{2}}\coloneqq \Hat{L}_{1}\Hat{F}_\omega(H_\eta)\Hat{L}_{2}\Hat{F}_{\omega'}(H_\eta)$ are
    \begin{align*}
    \Hat{Q}_{j,\omega}(p,k)&=\int d^{3}k'\Hat{L}_{j}(p,k')\Hat{F}_\omega(k')\delta(\omega^{2}-\omega_{k'}^{2})\delta^{(3)}(k-k')\\
        &=\Hat{L}_{j}(p,k)\Hat{F}_\omega(k)\delta(\omega^{2}-\omega_{k}^{2})\\
        \implies\tr_{\overline{\H^{\varepsilon}_\eta}}(\Hat{K}_{L_{1}L_{2}})&=\!\int\!d^{3}k\Tr_{\mathbb{C}^{4}}{(\Hat{K}_{L_{1}L_{2}}(k,k))}\!=\!\int\! \!d^{3}k\!\int \!\!d^{3}k'\Tr_{\mathbb{C}^{4}}{(\Hat{Q}_{1,\omega}(k,k')\Hat{Q}_{2,\omega'}(k',k))}\\
     &=\int d^{3}k\int d^{3}k' \chi(\omega,\omega')\delta(\omega^{2}-\omega_{k}^{2})\delta(\omega^{2}-\omega_{k'}^{2})\;,
    \end{align*}
    where the function~$\chi : \mathbb{R}\times \mathbb{R}\rightarrow \mathbb{R}$ is 
    \begin{align*}
        &\chi(\omega,\omega')\!:=\!\Tr_{\mathbb{C}^{4}}\!{\Big[\!\big(ik'\!\cdot\!\hat{a}_{1}(k-k')+\hat{b}_{1}(k-k')\big)\Hat{F}_\omega(k')\big(ik\!\cdot\!\hat{a}_{2}(k'-k)+\hat{b}_{2}(k'-k)\big)\Hat{F}_{\omega'}(k)\!\Big]}
    \end{align*}
    By the discussion in Remark~\ref{rem:Bt2_general_multiplication}~(ii),~$\chi(\omega,-\omega')\neq\chi(-\omega,\omega')$ and using the derivation of Lemma~\ref{lem:Bt2_multiplication_op} it follows that
    \begin{align}\label{eq:gamma_1}
        &J_{\pm}=-\int \frac{d^{3}k}{(2\pi)^{3}}\int \frac{d^{3}k'}{(2\pi)^{3}}\Gamma^{L_{1}L_{2}}_{\pm}(k,k')\nonumber\\
        &\textrm{with}\hspace{0.5cm}\Gamma_{\pm}^{L_{1}L_{2}}(k,k')=\frac{1}{(\omega_{k'}+\omega_{k})^{2}}\frac{1}{4\omega_{k}\omega_{k'}}\chi(\mp\omega_{k},\pm\omega_{k'})\;.
    \end{align}
    In conclusion,
    \begin{align*}
        &I_{L_{1}L_{2}}=-\int \frac{d^{3}k}{(2\pi)^{3}}\int \frac{d^{3}k'}{(2\pi)^{3}}(\Gamma^{L_{1}L_{2}}_{+}(k,k')+\Gamma^{L_{1}L_{2}}_{-}(k,k'))\;.
    \end{align*}
    Consider now that~$A=T_{1}$ and~$B=L_{2}$. Recall that the Fourier conjugated operator~$\Hat{T}_{1}:=\mathcal{F}T_{1}\mathcal{F}^{-1}$ is given by~\eqref{eq:FourierT1}. The trace of~$\Hat{K}_{L_{1}L_{2}}=\Hat{T}_{1}\Hat{F}_\omega(H_\eta)\Hat{L}_{2}\Hat{F}_{\omega'}(H_\eta)$ is
    \begin{align*}
    \tr_{\overline{\H^{\varepsilon}_\eta}}(\Hat{K}_{T_{1}L_{2}})
    &=\int \frac{d^{3}k}{(2\pi)^{3}}\int \frac{d^{3}k'}{(2\pi)^{3}}\hat{\alpha}_{1}(k-k')\chi(\omega,\omega')\;,
\end{align*}
where, in this case, the function~$\chi : \mathbb{R}\times \mathbb{R}\rightarrow\mathbb{R}$ is
\begin{equation*}
    \chi(\omega,\omega')=\Tr_{\mathbb{C}^{4}}{\big[\gamma_{\eta0}\Hat{F}_\omega(k')(k\cdot\hat{a}_{2}(k'-k)+\hat{b}_{2}(k'-k))\Hat{F}_{\omega'}(k)\big]}\;.
\end{equation*}
The desired integral is
\[ I_{T_{1}L_{2}}=-\int \frac{d^{3}k}{(2\pi)^{3}}\int \frac{d^{3}k'}{(2\pi)^{3}}(\Gamma^{T_{1}L_{2}}_{+}(k,k')+\Gamma^{T_{1}L_{2}}_{-}(k,k'))\;, \]
where~$\Gamma^{T_{1}L_{2}}_{\pm}$ is 
\begin{equation}\label{eq:gamma_2}
    \Gamma^{T_{1}L_{2}}_{\pm}(k,k')=\frac{1}{(\omega_{k'}+\omega_{k})^{2}}\frac{1}{4\omega_{k}\omega_{k'}}\hat{\alpha}_{1}(k-k')\chi(\mp\omega_{k},\pm\omega_{k'})\;.
\end{equation}
\end{proof}

\begin{Corollary}\label{cor:Bt_conf_flat_ugeneral}
     Let~$(N_t)_{t\in\mathbb{R}}$ be the foliation of the conformally flat spacetime~\eqref{eq:metric_conf_flat} given by the level sets of the global time function~$t$. Assume that~$u$ evolves according to the locally rigid equation~\eqref{eq:dynamical_eq_u} subject to the initial condition
        \begin{align}\label{eq:u_general}
    & u_{p}=
    \begin{cases}
    (1+\lambda f_{p})\nu+\lambda X_{p}  &\textrm{for}\hspace{0.5cm} p\in V\\
    \partial_t &\textrm{for}\hspace{0.5cm} p\in N_{t_{0}} \setminus V \;,
    \end{cases}
\end{align}
where~$f\in C^{\infty}(\mathbb{R}^{3},\mathbb{R}_{>0})$ is a positive and smooth function,~$X$ is a spacelike vector field~$X$ and~$V\subset N_{t_{0}}$ is compact. 
\\Moreover, assume that~$m=0$,~$A_t$ has an absolutely continuous spectrum and that for all~$\omega\in\rho(A_{t})$ it holds that
\[ 
        \| R_\omega(\tilde{H}_\eta)\Delta A(t))\|<1\;. \]
Then, the rate of baryogenesis is 
\begin{align*}
B_t^{(2)}=\lambda^{2}\big[I_{T_{2}T_{1}}+I_{T_{2}L_{1}}+I_{L_{2}T_{1}}+I_{L_{2}L_{1}}+I_{T_{1}T_{2}}+I_{T_{1}L_{2}}+I_{L_{1}T_{2}}+I_{L_{1}L_{2}}\big]\;,
\end{align*}
where~$I_{AB}$ (with $A,B\in\{L_{1},L_{2},T_{1},T_{2}\}$) stand for the double integrals \eqref{eq:general_IT1T2} and ~\eqref{eq:Bt2_general_ops}, with ~$L_{1}$ and~$L_{2}$ first order differential operators (see expression ~\eqref{eq:L1L2})~and $T_{1}$ and~$T_{2}$ multiplication operators (see expression ~\eqref{eq:T1T2}).
\end{Corollary}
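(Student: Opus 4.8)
The plan is to reduce the statement, as in the proof of Corollary~\ref{cor:Bt_conf_flat_uparallel}, to Proposition~\ref{prop:Bt_conf_flat} together with the explicit evaluations of Lemmas~\ref{lem:Bt2_multiplication_op} and~\ref{lem:Bt2_diff_ops}. First I would compute the symmetrized Hamiltonian $A_t$ for the conformally flat metric~\eqref{eq:metric_conf_flat} with $m=0$ and with the regularizing vector field $u$ obeying the locally rigid evolution~\eqref{eq:dynamical_eq_u} subject to the initial data~\eqref{eq:u_general}; the coordinate coefficients of $A_t$ are read off from Lemma~\ref{lem:coefs_conformally_flat}. Subtracting $\tilde H_\eta$ yields $\Delta A(t)=A_t-\tilde H_\eta$, which vanishes identically for $\lambda=0$ (the case $u=\partial_t$, $m=0$ of Lemma~\ref{lem:At_diagonalizable}); hence $\Delta A(t)$ is of order $\lambda$, has smooth coefficients compactly supported in the region swept out by $V$, and its $O(\lambda)$ part splits canonically into a first order differential operator plus a matrix valued multiplication operator. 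In particular the hypotheses of Proposition~\ref{prop:Bt_conf_flat} hold for $\lambda$ small.

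Next I would pass to Minkowski spacetime using Lemma~\ref{lem:time_independent}: since $\tilde U\psi=\Omega^{3/2}\psi$, one has $F_\omega(\tilde H_\eta)=\Omega^{-3/2}F_\omega(H_\eta)\Omega^{3/2}$, so $\tilde Q(\omega,\omega')=\Delta A\,F_\omega(\tilde H_\eta)\,\Delta A\,F_{\omega'}(\tilde H_\eta)$ is unitarily conjugate to $Q(\omega,\omega')=\widehat{\Delta A}(t)\,F_\omega(H_\eta)\,\widehat{\Delta A}(t)\,F_{\omega'}(H_\eta)$ with $\widehat{\Delta A}(t):=\Omega^{3/2}\Delta A(t)\,\Omega^{-3/2}$, and by Lemma~\ref{lem:time_independent}(iii) the time derivative may be moved inside the trace on the Minkowski side. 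Conjugation by the smooth, nowhere-vanishing factor $\Omega^{\pm 3/2}$ (which equals $1$ outside $V$) preserves the differential/multiplication splitting (it merely adds a term proportional to $\Omega^{-1}\partial_\mu\Omega$ to the multiplication part), so $\widehat{\Delta A}(t)=L_2(t)+T_2(t)$ with $L_2$ a first order differential operator and $T_2$ a multiplication operator of the form $\sum\alpha\,A(\gamma_\eta)$; put $L_1:=\partial_t L_2$, $T_1:=\partial_t T_2$, which fixes the operators in~\eqref{eq:L1L2} and~\eqref{eq:T1T2}. Since $F_\omega(H_\eta)$ and $F_{\omega'}(H_\eta)$ do not depend on $t$, the product rule gives
\[
\frac{d}{dt}Q(\omega,\omega')=(L_1+T_1)F_\omega(H_\eta)(L_2+T_2)F_{\omega'}(H_\eta)+(L_2+T_2)F_\omega(H_\eta)(L_1+T_1)F_{\omega'}(H_\eta)+O(\lambda^3)\,,
\]
and expanding the brackets produces exactly the eight operator products $K_{AB}=A\,F_\omega(H_\eta)\,B\,F_{\omega'}(H_\eta)$ with $(A,B)$ running through $(L_1,L_2),(L_1,T_2),(T_1,L_2),(T_1,T_2)$ together with the four transposed pairs. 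Each $K_{AB}$ is trace-class by the argument of Lemma~\ref{lem:trace_class}, being an integral operator with a smooth, compactly supported kernel.

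It then remains to substitute this expansion into formula~\eqref{eq:Bt2_conf_flat} and to use linearity of the trace and of the $(\omega,\omega')$-integration: this writes $B_t^{(2)}$ as the (signed) sum of the eight double integrals $I_{AB}$ of~\eqref{eq:Bt2_general_ops} and~\eqref{eq:general_IT1T2}, up to the overall factor $\lambda^2$ once the powers of $\lambda$ are factored out of $L_j,T_j$. Finally each $I_{AB}$ is evaluated by the relevant lemma: the purely multiplicative terms $I_{T_1T_2},I_{T_2T_1}$ by Lemma~\ref{lem:Bt2_multiplication_op} together with Remark~\ref{rem:Bt2_general_multiplication} (for the $A(\gamma_\eta)$-structure); the purely differential terms $I_{L_1L_2},I_{L_2L_1}$ and the four mixed terms $I_{T_1L_2},I_{L_1T_2},I_{T_2L_1},I_{L_2T_1}$ by Lemma~\ref{lem:Bt2_diff_ops}, applied in the mixed cases where the multiplication operator occupies the second slot to the obvious variant obtained by interchanging the two factors, and with the analogue of Remark~\ref{rem:Bt2_general_multiplication} when the multiplication factor is not exactly $\alpha\gamma_{\eta0}$. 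Collecting the contributions gives the asserted formula.

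I expect the real work, and the main obstacle, to be the very first step: writing $A_t$, hence $\Delta A(t)$, in closed form from Lemma~\ref{lem:coefs_conformally_flat} and~\eqref{eq:dynamical_eq_u} in the spherical coordinates of~\eqref{eq:metric_conf_flat} (with all the spin-connection coefficients), and then carefully sorting the resulting terms into the differential piece $L_2$ and the multiplication piece $T_2$ while checking (i) smoothness and compact support of every coefficient, (ii) that the multiplication piece genuinely has the $\sum\alpha\,A(\gamma_\eta)$ form required by the lemmas, and (iii) that the $O(\lambda^2)$ remainder of $\Delta A$ feeds only into $O(\lambda^3)$ of $B_t^{(2)}$ and may be dropped. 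Once $L_2,T_2$ (and $L_1,T_1$) are pinned down, the rest is bookkeeping plus direct appeals to the lemmas already established.
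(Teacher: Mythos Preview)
Your proposal is correct and follows essentially the same route as the paper: compute $A_t$ from Lemma~\ref{lem:coefs_conformally_flat} with $m=0$, extract $\Delta A$ of order $\lambda$, conjugate by $\Omega^{3/2}$ via Lemma~\ref{lem:time_independent}, split into differential plus multiplication pieces, apply the product rule in $t$, and feed the resulting eight $K_{AB}$ into Lemmas~\ref{lem:Bt2_multiplication_op}, \ref{lem:Bt2_diff_ops} and Remark~\ref{rem:Bt2_general_multiplication}. Two cosmetic discrepancies with the paper are worth noting. First, your index convention is reversed: the paper sets $\Delta A=\lambda L_1$, $\tfrac{d}{dt}\Delta A=\lambda L_2$ (so subscript $2$ is the time derivative), whereas you put $L_1=\partial_t L_2$. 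Second, and more substantively, the paper does \emph{not} split $\Omega^{3/2}\Delta A\,\Omega^{-3/2}$ into ``pure differential'' versus ``pure multiplication'' parts as you propose; instead it keeps $L_1=\Delta A/\lambda$ intact (a first-order operator that already contains zeroth-order terms $\alpha_3,\alpha_4,\alpha_5$) and defines $T_1$ to be \emph{only} the extra multiplication correction $-\tfrac{3}{2}\Omega^{-1}(\partial_\mu\Omega)\,a_1^\mu$ produced by conjugation. Both splittings are admissible inputs to Lemma~\ref{lem:Bt2_diff_ops} (which allows arbitrary first-order $L_j$), so the eight-term structure is the same either way; but since the corollary's statement pins down $L_j,T_j$ by the references~\eqref{eq:L1L2},~\eqref{eq:T1T2}, you would need to adopt the paper's particular splitting to match those labels exactly.
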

\begin{proof}
    Lemma~\ref{lem:coefs_conformally_flat} (with~$m=0$) with the ansatz~\eqref{eq:u_general} for~$u$ gives
    \begin{align}\label{eq:DeltaAt_ugeneral}
        &A_t
    =\tilde{H}_\eta+\frac{\lambda}{2}\Big[\big\{ f,\tilde{H}_\eta\big\}+i\Big\{X^{\mu},\partial_{\mu}+\frac{\partial_{\mu}(\sqrt{|\text{\rm{det}}(g|_{N_t})|})}{2\sqrt{|\text{\rm{det}}(g|_{N_t})|}}+\frac{1}{4}\frac{\partial_{\nu}(\Omega)}{\Omega}\eta^{\nu\rho}[\gamma_{\eta\mu},\gamma_{\eta \rho}]\Big\}\Big]\nonumber\\
        &\implies\Delta A(t)\coloneqq A_t-\tilde{H}_\eta=\lambda\Big(\frac{1}{2}\big\{ f,\tilde{H}_\eta\big\}+\frac{i}{2}\big\{X^{\mu},K_{\mu}\big\}\Big)\;,
    \end{align}
    where we introduced the first order antisymmetric differential operator 
    \begin{align*}
        K_{\mu}\coloneqq \partial_{\mu}+\frac{\partial_{\mu}(\sqrt{|\text{\rm{det}}(g|_{N_t})|})}{2\sqrt{|\text{\rm{det}}(g|_{N_t})|}}+\frac{1}{4}\frac{\partial_{\nu}(\Omega)}{\Omega}\eta^{\nu\rho}[\gamma_{\eta\mu},\gamma_{\eta \rho}]\;.
    \end{align*}
    In order to determine the differential operators~$\Delta A$ and~$\frac{d}{dt}(\Delta A)$ explicitly we rewrite~$\tilde{H}_\eta$ and~$K_{\mu}$ as follows
    \begin{align*}
        &\tilde{H}_\eta=\gamma_{\eta 0}\gamma_\eta^{\mu}(a\partial_{\mu}+b_{\mu}) & &\textrm{with}\hspace{0.3cm}a:=-i,\: b_{\mu}:=-\frac{3}{2}i\frac{\pr_{\mu}(\Omega)}{\Omega}\\
        &K_{\mu}=\partial_{\mu}+c_{\mu}+d^{\rho}[\gamma_{\eta\mu},\gamma_{\eta\rho}]& &\textrm{with}\hspace{0.3cm}c_{\mu}:=ib_{\mu}+\frac{\partial_{\mu}(\sqrt{|\text{\rm{det}}(\eta)|})}{2\sqrt{|\text{\rm{det}}(\eta)|}}, d^{\rho}:=\frac{1}{4}\frac{\partial_{\nu}(\Omega)}{\Omega}\eta^{\nu\rho}\;.
    \end{align*}
    An explicit formula for~$\Delta A$ is obtained by combining the previous ones with~\eqref{eq:DeltaAt_ugeneral}
    \begin{equation}\label{eq:DeltaAt_ugeneral_explicit}
        \Delta A= \lambda\big((\alpha_{1}\gamma_{\eta 0}\gamma_\eta^{\mu}+\alpha_{2}^{\mu})\partial_{\mu}+\alpha_{3\mu}\gamma_{\eta 0}\gamma_\eta^{\mu}+\alpha_{4}^{\mu\rho}[\gamma_{\eta\mu},\gamma_{\eta\rho}]+\alpha_{5}\big)\;,
    \end{equation}
    where the coefficients are
    \begin{align*}
        &\alpha_{1}=af\;, & &\alpha_{2}^{\mu}=iX^{\mu}\;, & &\alpha_{3\mu}=fb_{\mu}+\frac{a}{2}\partial_{\mu}(f)\;,\nonumber\\
        &\alpha_{4}^{\mu\rho}=iX^{\mu}d^{\rho}\;, & &\alpha_{5}=iX^{\mu}c_{\mu}+\frac{i}{2}\partial_{\mu}(X^{\mu})\;.
    \end{align*}
    Moreover, the coefficients~$a,b_{\mu},c_{\mu}$ and~$d^{\rho}$ satisfy that~$\dot{a}=0$,~$\dot{c}_{\mu}=i\dot{b}_{\mu}$ and
    \begin{align*}
        & \dot{b}_{\mu}=-\frac{3}{2}i\frac{1}{\Omega}\Big(\partial_{\mu}\dot{\Omega}-\frac{\dot{\Omega}\partial_{\mu}\Omega}{\Omega}\Big);\hspace{0.5cm}\dot{d}^{\rho}=\frac{1}{4}\frac{\eta^{\nu\rho}}{\Omega}\Big(\partial_{\nu}\dot{\Omega}-\frac{\dot{\Omega}\partial_{\nu}\Omega}{\Omega}\Big).\hspace{0.5cm}
    \end{align*}
    Note that from the dynamical equation of~$u$~\eqref{eq:dynamical_eq_u} we have that to linear order in~$\lambda$
    \begin{align*}
        \lambda\frac{df}{dt}=\frac{d(1+\lambda f)}{dt}=\frac{du^{t}}{dt}=\frac{\lambda}{3}\textrm{div}_{\delta}(X) \hspace{0.5cm}\textrm{and}\hspace{0.5cm} \lambda\frac{dX^{\mu}}{dt}=\frac{du^{\mu}}{dt}=-\lambda\delta^{\mu\nu}\partial_\nu{(f^{-1})}\;.
    \end{align*}
Differentiating~\eqref{eq:DeltaAt_ugeneral} with respect to time yields
\[ 
    \frac{d}{dt}\Delta A=\lambda\big((\beta_{1}\gamma_{\eta 0}\gamma_\eta^{\mu}+\beta_{2}^{\mu})\partial_{\mu}+\beta_{3\mu}\gamma_{\eta 0}\gamma_\eta^{\mu}+\beta_{4}^{\mu\rho}[\gamma_{\eta\mu},\gamma_{\eta\rho}]+\beta_{5}\big)\;, \]
where the coefficients
\begin{align*}
    &\beta_{1}=\frac{1}{3}\textrm{div}_{\delta}(X)a & &\beta_{2}^{\mu}=-i\delta^{\mu\nu}\partial_\nu{(f^{-1})}\nonumber\\
    &\beta_{3\mu}=\frac{1}{3}\textrm{div}_{\delta}(X)b_{\mu}+f\dot{b}_{\mu}+\frac{a}{6}\partial_{\mu}(\textrm{div}_{\delta}(X)) & &\beta_{4}^{\mu\rho}=-id^{\rho}\delta^{\mu\nu}\partial_{\nu}(f^{-1})+iX^{\mu}\dot{d}^{\rho}\nonumber\\
    &\beta_{5}=-ic_{\mu}\delta^{\mu\nu}\partial_\nu{(f^{-1})}-\frac{i}{2}\delta^{\mu\nu}\partial_{\mu}\partial_\nu{(f^{-1})}+iX^{\mu}\dot{c}_{\mu}\;,
\end{align*}
are real and/or complex valued. We introduce now the differential operators~$L_{j}$ by setting
\begin{align}\label{eq:L1L2}
    \Delta A=\lambda L_{1} \hspace{0.5cm} \textrm{and}\hspace{0.5cm}\frac{d}{dt}\Delta A=\lambda L_{2}\;.    
\end{align}
The coefficients of the differential operators~$\Delta A$ and~$\frac{d}{dt}\Delta A$ are related to the coefficients~$a_{j}^{\mu}$ and~$b_{j}$  of the differential operator~$L_{j}$ appearing in Lemma~\ref{lem:Bt2_diff_ops} by the following identifications
\begin{align*}
        &a_{1}^{\mu}=\alpha_{1}\gamma_{\eta 0}\gamma_\eta^{\mu}+\alpha_{2}^{\mu}; & &b_{1}=\alpha_{3\mu}\gamma_{\eta 0}\gamma_\eta^{\mu}+\alpha_{4}^{\mu\rho}[\gamma_{\eta\mu},\gamma_{\eta\rho}]+\alpha_{5} \\ 
        &a_{2}^{\mu}=\beta_{1}\gamma_{\eta 0}\gamma_\eta^{\mu}+\beta_{2}^{\mu}; & & b_{2}=\beta_{3\mu}\gamma_{\eta 0}\gamma_\eta^{\mu}+\beta_{4}^{\mu\rho}[\gamma_{\eta\mu},\gamma_{\eta\rho}]+\beta_{5} 
    \end{align*}
Let~$\psi\in\H_{t,g}$. Then, the operator product~$\Omega^{3/2}\Delta A \Omega^{-3/2}$ can be rewritten as the sum of a differential and a multiplication operator, i.e.\
\begin{align*}
    \Omega^{3/2}\Delta A \Omega^{-3/2}=\Big[-\frac{\lambda}{2}\frac{3}{2}\frac{\partial_{\mu}(\Omega)}{\Omega}(\alpha_{1}\gamma_{\eta 0}\gamma_\eta^{\mu}+\alpha_{2}^{\mu})+\Delta A\Big]\psi = \lambda[T_{1}+L_{1}]\psi\;,
\end{align*}
where the multiplication operators~$T_{1}$ and~$T_{2}$ are
\begin{align}\label{eq:T1T2}
    T_{1}:=\frac{3}{4}\frac{\partial_{\mu}(\Omega)}{\Omega}(\alpha_{1}\gamma_{\eta 0}\gamma_\eta^{\mu}+\alpha_{2}^{\mu})\hspace{0.5cm}\textrm{and}\hspace{0.5cm}T_{2}:=\frac{d}{dt}(T_{1})\;.
\end{align}
Then, differentiating with respect to time the trace of the product operator~$\tilde{Q}(\omega,\omega'):=\Delta AF_\omega(\tilde{H}_\eta)\Delta AF_{\omega'}(\tilde{H}_\eta)$ gives
    \begin{align*}
        &\frac{d}{dt}\tr_{\overline{\H_t^{\varepsilon}}}(\tilde{Q}(\omega,\omega'))=\frac{d}{dt}\tr_{\overline{\H_t^{\varepsilon}}}\big(\Delta AF_\omega(\tilde{H}_\eta)\Delta AF_{\omega'}(\tilde{H}_\eta)\big)\\
        &=\frac{d}{dt}\tr_{\overline{\H_{\eta}^{\varepsilon}}}\big((\Omega^{3/2}\Delta A(t)\Omega^{-3/2} )F_\omega(H_\eta)(\Omega^{3/2}\Delta A(t)\Omega^{-3/2})F_{\omega'}(H_\eta)\big)\\
        &=\lambda^{2}\frac{d}{dt}\tr_{\overline{\H_{\eta}^{\varepsilon}}}([T_{1}+L_{1}]F_\omega(H_\eta)[T_{1}+L_{1}]F_{\omega'}(H_\eta))\\
        &=\lambda^{2}\big(\tr_{\overline{\H_{\eta}^{\varepsilon}}}([T_{2}+L_{2}]F_\omega(H_\eta)[T_{1}+L_{1}]F_{\omega'}(H_\eta))\\
        &\quad\quad\quad+\tr_{\overline{\H_t^{\varepsilon}}}([T_{1}+L_{1}]F_\omega(H_\eta)[T_{2}+L_{2}]F_{\omega'}(H_\eta))\big)
    \end{align*}
where~$\H_\eta^{\varepsilon}=\tilde{U}(\H_t^{\varepsilon})$ (cf. Lemma~\ref{lem:time_independent}). By Proposition~\ref{prop:Bt_conf_flat} it directly follows that~$B_t^{(2)}$ can be split in eight different integrals
\begin{align*}
    B_t^{(2)}&=-\int_{-\infty}^\infty d\omega \int_{-\infty}^\infty d\omega'\:
    \partial_\omega\Big(\eta_{\Lambda}(\omega)\frac{d}{dt}\tr_{\overline{\H_t^{\varepsilon}}}(\tilde{Q}(\omega,\omega'))\Big)\:\frac{g(\omega') - g(\omega)}{\omega'-\omega}\\
    &=\lambda^{2}\big[I_{T_{2}T_{1}}+I_{T_{2}L_{1}}+I_{L_{2}T_{1}}+I_{L_{2}L_{1}}+I_{T_{1}T_{2}}+I_{T_{1}L_{2}}+I_{L_{1}T_{2}}+I_{L_{1}L_{2}}\big]\;.
\end{align*}
Each one of the eight integrals have been determined in Lemma~\ref{lem:Bt2_multiplication_op} (for its extended version to more general multiplication operators see Remark~\ref{rem:Bt2_general_multiplication}) and Lemma~\ref{lem:Bt2_diff_ops}.
\end{proof}
\begin{Remark}\label{rem:Bt2_mixed_terms}\em{
The previous lemma can be easily extended to the case with~$m\neq0$ if we assume again that~$\Omega=1$ outside a compact subset~$V\subset M$. In this scenario,~$\Delta A$ corresponds to the sum of a multiplication operator~$T_{3}$ and a differential operator~$L_{1}$
\begin{align*}
    \Delta A(t)&\coloneqq A_t-\tilde{H}_\eta=(\Omega-1)m\gamma_{\eta 0}+\frac{\lambda}{4}\Big[\{f_{p},H_{g}+H_{g}^{\ast}\}+ i\{X_{p}^{\mu},\nabla^{s}_{\mu}-(\nabla^{s}_{\mu})^{\ast}\}\Big]\\
    &= m T_{3}+ \lambda L_{1}\;,
\end{align*}
where~$T_{3}:=(\Omega-1)\gamma_{\eta 0}$. So,
\begin{align*}
    &\Omega^{-3/2}\Delta A \Omega^{3/2}=m T_{3}+\lambda (T_{1}+L_{1})\\
    &\frac{d}{dt}(\Omega^{-3/2}\Delta A \Omega^{3/2})=m T_{4}+\lambda (T_{2}+L_{2})\;,
\end{align*}
where~$L_{1}$ and~$L_{2}$ correspond to the differential operators appearing in~\eqref{eq:L1L2},~$T_{1}$ and~$T_{2}$ are given by~\eqref{eq:T1T2} and~$T_{4}:=\frac{d}{dt}(T_{3})$. Moreover, differentiating the trace of the operator~$\tilde{Q}(\omega,\omega')$ yields
\begin{align*}
    &\frac{d}{dt}\tr_{\overline{\H_t^{\varepsilon}}}(\tilde{Q}(\omega,\omega'))=\frac{d}{dt}\tr_{\overline{\H_{\eta}^{\varepsilon}}}\big((\Omega^{3/2}\Delta A(t)\Omega^{-3/2} )F_\omega(H_\eta)(\Omega^{3/2}\Delta A(t)\Omega^{-3/2})F_{\omega'}(H_\eta)\big)\\
    &=\frac{d}{dt}\tr_{\overline{\H_{\eta}^{\varepsilon}}}\big([m T_{3}+\lambda (T_{1}+L_{1})]F_\omega(H_\eta)[m T_{3}+\lambda (T_{1}+L_{1})]F_{\omega'}(H_\eta)\big)\\
    &=\tr_{\overline{\H_{\eta}^{\varepsilon}}}([m T_{4}+\lambda (T_{2}+L_{2})]F_\omega(H_\eta)[m T_{3}+\lambda (T_{1}+L_{1})]F_{\omega'}(H_\eta))\\
    &+\tr_{\overline{\H_{\eta}^{\varepsilon}}}([m T_{3}+\lambda (T_{1}+L_{1})]F_\omega(H_\eta)[m T_{4}+\lambda (T_{2}+L_{2})]F_{\omega'}(H_\eta))\\
    &=m^{2}\tr_{\overline{\H_{\eta}^{\varepsilon}}}\Big(\frac{d}{dt}\tilde{Q}(\omega,\omega')\Big)\Big|_{m\neq0,\lambda=0}+\lambda^{2}\tr_{\overline{\H_{\eta}^{\varepsilon}}}\Big(\frac{d}{dt}\tilde{Q}(\omega,\omega')\Big)\Big|_{m=0,\lambda\neq0}\\
    &+m\lambda\Big(\tr_{\overline{\H_{\eta}^{\varepsilon}}}\big[T_{4}F_\omega(H_\eta)(T_{1}+L_{1})F_{\omega'}(H_\eta)\big]+\tr_{\overline{\H_{\eta}^{\varepsilon}}}\big[(T_{2}+L_{2})F_\omega(H_\eta)T_{3}F_{\omega'}(H_\eta)\big]\\
    &+\tr_{\overline{\H_{\eta}^{\varepsilon}}}\big[T_{3}F_\omega(H_\eta)(T_{2}+L_{2})F_{\omega'}(H_\eta)\big]+\tr_{\overline{\H_{\eta}^{\varepsilon}}}\big[(T_{1}+L_{1})F_\omega(H_\eta)T_{4}F_{\omega'}(H_\eta)\big]\Big)\;.
\end{align*}
Using again Lemma~\ref{lem:time_independent} to evaluate the trace on the space~$\H^{\varepsilon}_\eta$, the second order contribution to the rate of baryogenesis is
\begin{align*}
B_t^{(2)}&=B_t^{(2)}\big|_{m\neq0,\lambda=0}+B_t^{(2)}\big|_{m=0,\lambda\neq0}\nonumber\\
    &+m\lambda \big[I_{T_{4}T_{1}}+I_{T_{4}L_{1}}+I_{T_{2}T_{3}}+I_{L_{2}T_{3}}+I_{T_{3}T_{2}}+I_{T_{3}L_{2}}+I_{T_{1}T_{4}}+I_{L_{1}T_{4}}\big]\;,
\end{align*}
where~$B_t^{(2)}\big|_{m\neq0,\lambda=0}$ and~$B_t^{(2)}\big|_{m=0,\lambda\neq0}$ correspond, respectively, to the rate of baryogenesis computed in Corollaries~\ref{cor:Bt_conf_flat_uparallel} and~\ref{cor:Bt_conf_flat_ugeneral}. Furthermore, the eight integrals~$I_{AB}$ (where~$A$ and~$B$ are multiplication or first order differential operators) are of the form of those computed in Lemmas~\ref{lem:Bt2_multiplication_op} and~\ref{lem:Bt2_diff_ops}.}
\hfill\QEDrem
\end{Remark}

\section{Discussion} \label{secdiscuss}

The first contribution of this paper is to study the main analytic and geometric features of the baryogenesis mechanism presented in \cite{baryogenesis} in general conformally flat spacetimes. We emphasize that Proposition~\ref{prop:chiAt} sheds light onto the interpretation of the locally rigid spinor dynamics already introduced in \cite{baryogenesis} as a realization, through adiabatic projections, of a spinor dynamics which deviates slightly from Dirac. Proposition~\ref{prop:chiAt} implies that, under suitable assumptions, the spectral projection operator~$\chi_{I}(A_{t})$ is a \emph{regularization operator} (in the sense that it maps into the smooth spinor fields on~$N_{t}$). So, in the locally rigid spinor dynamics the spinors evolve following the Dirac dynamics and, after arbitrarily small time-steps, they are regularized. As a consequence, by construction, the spinor dynamics depends crucially on the dynamics of~$\chi_{I}(A_{t})$ and thus, by the definition of~$A_{t}$, on the dynamics of the vector field~$u$. By Proposition~\ref{prop:u_conformal_invariant}, the locally rigid dynamics of the regularizing vector field is a conformal invariant, so the dynamical equation derived in \cite[Lemma 7.1]{baryomink} also applies to general conformally flat spacetimes. 

Moreover, the main result of this paper is Theorem~\ref{theo:baryo_rate}, which gives a concise formula for the leading order contribution to the rate of baryogenesis in conformally flat spacetimes depending on the value of the conformal factor~$\Omega$, the mass~$m$ and the regularizing vector field~$u$. A first implication is that, in our setting, a process of baryogenesis is only triggered if the mass is non-zero and/or if the regularizing vector field deviates from~$\partial_{t}$ (with~$t$ the time coordinate, which we also use to construct the foliation). Moreover, another interesting consequence is that when~$u=\partial_{t}$ the rate of baryogenesis vanishes identically in Minkowski spacetime, but, however, it is non-zero (yet very small; it scales as~$m^{2}$) for general conformally flat spacetimes.

Our derived formula for the rate of baryogenesis paves the way for concrete, quantitative predictions for cosmological spacetimes. Spacetimes modeling the early Universe are, arguably, those of most interest, since it is then that cosmologists believe that the matter/antimatter asymmetry originated. A particular example of an interesting conformally flat spacetime which is anisotropic is the one presented in \cite{LingPiubello}. Hence, working out concrete predictions for spacetimes describing the early Universe will be the next natural step of our analysis. Finally, we would like to address the question of whether the predictions of our baryogenesis mechanism 
match the observed matter/antimatter asymmetry.

\Thanks{{{\em{Acknowledgments:}} We would like to thank Eric Ling, Claudio F.\ Paganini and Gabriel Schmid for helpful discussions. The second author gratefully acknowledges support by the Studienstiftung des deutschen Volkes.

\appendix
\section{Computation of the symmetrized Hamiltonian}\label{sec:At_twisted_prod}

In the following lemma we compute the spin connection, Dirac operator, Dirac Hamiltonian and symmetrized Hamiltonian for the general conformally flat spacetime with metric~\eqref{eq:metric_conf_flat}.
\begin{Lemma}\label{lem:coefs_conformally_flat}
    In a conformally flat spacetime with metric~\eqref{eq:metric_conf_flat} the spin coefficients, the Dirac operator, the Dirac Hamiltonian and its corresponding adjoint operators are
    \begin{align*}
        &\nabla^{s}_{r}=\pr_{r}+\frac{1}{2}\gamma_{gr}\Big(\frac{\dot{\Omega}}{\Omega}\gamma_{g}^{t}+\frac{\partial_{\theta}(\Omega)}{\Omega}\gamma_{g}^{\theta}+\frac{\partial_{\varphi}(\Omega)}{\Omega}\gamma_{g}^{\varphi}\Big)\\
        &\nabla^{s}_{\theta}=\pr_{\theta}+\frac{1}{2}\gamma_{g\theta}\Big(\frac{\dot{\Omega}}{\Omega}\gamma_{g}^{t}+\frac{\Omega'}{\Omega}\gamma_{g}^{r}+\frac{\partial_{\varphi}(\Omega)}{\Omega}\gamma_{g}^{\varphi}\Big)\\
        &\nabla^{s}_{\varphi}=\pr_{\varphi}+\frac{1}{2}\gamma_{g\varphi}\Big(\frac{\dot{\Omega}}{\Omega}\gamma_{g}^{t}+\frac{\Omega'}{\Omega}\gamma_{g}^{r}+\frac{\partial_{\theta}(\Omega)}{\Omega}\gamma_{g}^{\theta}\Big)\\
        &D_{g}=i\gammag^{j}\pr_{j}+\frac{3}{2}i\gammag^{j}\frac{\partial_{j}(\Omega)}{\Omega}\\ 
        &H_{g}=\tilde{H}_\eta+(\Omega-1) m\gamma_{\eta 0}-\frac{3}{2}i\frac{\dot{\Omega}}{\Omega}\\
        &(\nabla^{s}_{r})^{\ast}=-\partial_{r}-\frac{2}{r}-3\frac{\Omega'}{\Omega}+\frac{1}{2}\gamma_{gr}\Big(\frac{\dot{\Omega}}{\Omega}\gamma_{g}^{t}-\frac{\partial_{\theta}(\Omega)}{\Omega}\gamma_{g}^{\theta}-\frac{\partial_{\varphi}(\Omega)}{\Omega}\gamma_{g}^{\varphi}\Big)\\
        &(\nabla^{s}_{\theta})^{\ast}=-\partial_{\theta}-\frac{\cos{\theta}}{\sin{\theta}}-3\frac{\partial_{\theta}(\Omega)}{\Omega}+\frac{1}{2}\gamma_{g\theta}\Big(\frac{\dot{\Omega}}{\Omega}\gamma_{g}^{t}-\frac{\Omega'}{\Omega}\gamma_{g}^{r}-\frac{\partial_{\varphi}(\Omega)}{\Omega}\gamma_{g}^{\varphi}\Big)\\
        &(\nabla^{s}_{\varphi})^{\ast}=-\partial_{\varphi}-3\frac{\partial_{\varphi}(\Omega)}{\Omega}+\frac{1}{2}\gamma_{g\varphi}\Big(\frac{\dot{\Omega}}{\Omega}\gamma_{g}^{t}-\frac{\Omega'}{\Omega}\gamma_{g}^{r}-\frac{\partial_{\theta}(\Omega)}{\Omega}\gamma_{g}^{\theta}\Big)\\
        &H_{g}^{\ast}= \tilde{H}_\eta+(\Omega-1) m\gamma_{\eta 0}+\frac{3}{2}i\frac{\dot{\Omega}}{\Omega}\;.
    \end{align*}
    Furthermore, let~$u : M\rightarrow TM$ be a global future-directed timelike vector field. Then, 
    \begin{align*}
        A_t=&\frac{1}{2}\{u^{t},\tilde{H}_\eta+(\Omega-1) m\gamma_{\eta 0}\}+\frac{i}{2}\Big\{u^{\mu},\partial_{\mu}+\frac{\partial_{\mu}(\sqrt{|\text{\rm{det}}(g|_{N_t})|})}{2\sqrt{|\text{\rm{det}}(g|_{N_t})|}}+\frac{1}{4}\frac{\partial_{\nu}(\Omega)}{\Omega}\eta^{\nu\rho}[\gamma_{\eta\mu},\gamma_{\eta\rho}]\Big\}
    \end{align*}
\end{Lemma}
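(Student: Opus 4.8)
The plan is to exploit that $g=\Omega^{2}\eta$ is a pure conformal rescaling of Minkowski space, so that every object in the statement is obtained from its (trivial) Minkowski counterpart by the standard conformal transformation rules. I would trivialise the spinor bundle over the chart~\eqref{eq:metric_conf_flat} by the constant Cartesian spin frame of $\eta$; in this trivialisation the Minkowski spin connection is simply $\nabla^{s,\eta}_{j}=\partial_{j}$ for every coordinate index, and a $g$--orthonormal frame is obtained by rescaling the Minkowski one by $\Omega^{-1}$, which yields the Clifford identities $\gamma_{gj}=\Omega\,\gamma_{\eta j}$, $\gamma_{g}^{j}=\Omega^{-1}\gamma_{\eta}^{j}$, $\gamma_{gj}\gamma_{g}^{j}=1$ (no sum) and, for the future unit normal $\nu=\Omega^{-1}\partial_{t}$, the relation $\gamma_{g}(\nu)=\gamma_{\eta}(\partial_{t})=\gamma_{\eta 0}$ already used in Lemma~\ref{lem:time_independent}.

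For the first block of formulas I would apply the standard conformal transformation formula for $\nabla^{s}$ under $g=\Omega^{2}\eta$, which (up to the fixed sign conventions of~\eqref{eq:spin_connection}) expresses $\nabla^{s}_{j}\psi$ as $\partial_{j}\psi$ plus $\tfrac12\,\partial_{j}(\log\Omega)\,\psi$ and a Clifford term $\mp\tfrac12\,\gamma_{gj}\,\gamma_{g}\!\big(\mathrm{grad}_{g}\log\Omega\big)\psi=\mp\tfrac12\gamma_{gj}\sum_{k}\tfrac{\partial_{k}\Omega}{\Omega}\gamma_{g}^{k}\psi$; here the longitudinal piece $k=j$ produces a scalar $\mp\tfrac12\tfrac{\partial_{j}\Omega}{\Omega}$ by $\gamma_{gj}\gamma_{g}^{j}=1$, which cancels the $\tfrac12\partial_{j}(\log\Omega)$ term and leaves exactly the transverse Clifford contributions appearing in the stated $\nabla^{s}_{r},\nabla^{s}_{\theta},\nabla^{s}_{\varphi}$. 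Contracting with $i\gamma_{g}^{j}$ and reassembling the transverse pieces gives $D_{g}=i\gamma_{g}^{j}\partial_{j}+\tfrac32 i\gamma_{g}^{j}\tfrac{\partial_{j}\Omega}{\Omega}$, which I would cross--check against the identity $D_{g}\psi=\Omega^{-5/2}D_{\eta}(\Omega^{3/2}\psi)$ quoted in the footnote. For $H_{g}$ I would read off the spin coefficients $E_{\mu}$ from $\nabla^{s}_{\mu}=\partial_{\mu}-iE_{\mu}$ (with $a_{\mu}=0$), insert them into~\eqref{eq:Hamiltonian} together with $(\gamma_{g}^{0})^{-1}=\Omega\gamma_{\eta 0}$ and the expression $\tilde H_{\eta}=H_{\eta}-\tfrac32 i\tfrac{\partial_{\mu}\Omega}{\Omega}\gamma_{\eta 0}\gamma_{\eta}^{\mu}$ from Section~\ref{sec:math_intro}, and collect: the mass term gives $\Omega m\gamma_{\eta 0}=m\gamma_{\eta 0}+(\Omega-1)m\gamma_{\eta 0}$, the spatial derivative and spin--connection parts rebuild $\tilde H_{\eta}$, and the remaining time derivative of $\Omega$ contributes the anti--symmetric scalar $-\tfrac32 i\tfrac{\dot\Omega}{\Omega}$.

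Next I would compute the adjoints with respect to $(\psi|\phi)_{t}=\int \prec\!\psi\,|\,\gamma_{\eta 0}\phi\!\succ\,\Omega^{3}r^{2}\sin\theta\,dr\,d\theta\,d\varphi$ (using $d\mu_{N_{t}}=\Omega^{3}d^{3}x$ and $\gamma_{g}(\nu)=\gamma_{\eta 0}$, cf.\ Lemma~\ref{lem:time_independent}). Integration by parts turns $\partial_{\mu}$ into $-\partial_{\mu}-\partial_{\mu}\log(\Omega^{3}r^{2}\sin\theta)$, which produces the $-\tfrac{2}{r}-3\tfrac{\Omega'}{\Omega}$ and $-\tfrac{\cos\theta}{\sin\theta}-3\tfrac{\partial_{\theta}\Omega}{\Omega}$ terms; for the zeroth order (Clifford) part one uses that Clifford multiplication is symmetric for $\prec\!.\,|\,.\!\succ$ while the $L^{2}$ pairing carries the extra $\gamma_{\eta 0}$, so that a bivector built from one temporal and one spatial gamma is self--adjoint whereas one built from two spatial gammas is anti--self--adjoint. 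This is exactly why the $\gamma_{g}^{t}$--piece keeps its sign while the $\gamma_{g}^{r},\gamma_{g}^{\theta},\gamma_{g}^{\varphi}$--pieces flip sign in $(\nabla^{s}_{\mu})^{\ast}$, and why $H_{g}^{\ast}$ differs from $H_{g}$ only by the replacement $-\tfrac32 i\tfrac{\dot\Omega}{\Omega}\mapsto+\tfrac32 i\tfrac{\dot\Omega}{\Omega}$ (one also needs that $\tilde H_{\eta}$ and $(\Omega-1)m\gamma_{\eta 0}$ are symmetric on $\H_{t,g}$: the former since $\tilde U$ of Lemma~\ref{lem:time_independent} is unitary, the latter from $\gamma_{\eta 0}^{2}=1$). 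Finally I would substitute $H_{g},H_{g}^{\ast},\nabla^{s}_{\mu},(\nabla^{s}_{\mu})^{\ast}$ into~\eqref{eq:At}: in $H_{g}+H_{g}^{\ast}$ the $\pm\tfrac32 i\tfrac{\dot\Omega}{\Omega}$ cancel, giving $\tfrac12\{u^{t},\tilde H_{\eta}+(\Omega-1)m\gamma_{\eta 0}\}$; in $\nabla^{s}_{\mu}-(\nabla^{s}_{\mu})^{\ast}$ the self--adjoint $\gamma_{g}^{t}$--piece cancels and what remains is $2\partial_{\mu}+\partial_{\mu}\log(\Omega^{3}r^{2}\sin\theta)$ plus the transverse Clifford pieces, which I would rewrite, using $\sqrt{|\det(g|_{N_{t}})|}=\Omega^{3}r^{2}\sin\theta$, $\gamma_{g\mu}\gamma_{g}^{\nu}=\gamma_{\eta\mu}\gamma_{\eta}^{\nu}$ and $\eta^{\nu\rho}[\gamma_{\eta\mu},\gamma_{\eta\rho}]=2\gamma_{\eta\mu}\gamma_{\eta}^{\nu}$ for spatial $\nu\neq\mu$ (and $=0$ for $\nu=\mu$), as $2\partial_{\mu}+\tfrac{\partial_{\mu}\sqrt{|\det(g|_{N_{t}})|}}{\sqrt{|\det(g|_{N_{t}})|}}+\tfrac12\tfrac{\partial_{\nu}\Omega}{\Omega}\eta^{\nu\rho}[\gamma_{\eta\mu},\gamma_{\eta\rho}]$; dividing by $4$ and inserting $u^{\mu}$ yields the second term of the claimed $A_{t}$.

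I expect the main obstacle to be not conceptual but purely computational: fixing the precise sign in the conformal transformation of $\nabla^{s}$ consistent with the paper's Clifford conventions, carrying the spin inner product $\prec\!.\,|\,\gamma_{\eta 0}.\!\succ$ correctly through every integration by parts, and verifying the several Clifford--algebra cancellations (the disappearance of the $1/r$, $\cot\theta$ and radial $\partial_{r}\Omega$ corrections in $\nabla^{s}_{r}$; the self-- versus anti--self--adjointness dichotomy; the reorganisation into the $\sqrt{|\det(g|_{N_{t}})|}$ and $[\gamma_{\eta\mu},\gamma_{\eta\rho}]$ form). To guard against sign errors I would use two consistency checks: the identity $D_{g}\psi=\Omega^{-5/2}D_{\eta}(\Omega^{3/2}\psi)$, and the requirement that $A_{t}$ collapses to $\tilde H_{\eta}$ when $m=0$ and $u=\partial_{t}$, which is used repeatedly elsewhere in the paper.
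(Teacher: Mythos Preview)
Your approach is correct and takes a genuinely different route for the first half of the lemma. The paper derives $\nabla^{s}_{j}$ by brute force: it writes out the spherical Clifford matrices, lists every nonvanishing Christoffel symbol $\Gamma^{n}_{jk}$ and every derivative coefficient $h^{n}_{jk}$ defined by $\partial_{j}\gamma_{g}^{n}=h^{n}_{jk}\gamma_{g}^{k}$, and then plugs these into the explicit spin--coefficient formula $\nabla^{s}_{j}=\partial_{j}-\tfrac14(h^{n}_{jk}+\Gamma^{n}_{jk})\gamma_{g}^{k}\gamma_{gn}$ coming from~\eqref{eq:spin_connection}. All the $1/r$ and $\cot\theta$ terms appear in both $h$ and $\Gamma$ and are seen to cancel by hand. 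Your route---trivialise by the Cartesian Minkowski spin frame so that $\nabla^{s,\eta}_{j}=\partial_{j}$, then apply the conformal transformation law for the spin connection---bypasses the Christoffel computation entirely and makes the absence of $1/r$, $\cot\theta$ contributions automatic. What this buys you is brevity and conceptual clarity; what it costs is exactly what you anticipate, namely fixing the overall sign in the conformal formula consistently with the paper's Clifford conventions $\{\gamma^{j},\gamma^{k}\}=2g^{jk}$. Your two proposed cross--checks ($D_{g}=\Omega^{-5/2}D_{\eta}\Omega^{3/2}$ and $A_{t}\big|_{m=0,u=\partial_{t}}=\tilde H_{\eta}$) are adequate to pin this down.

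For the adjoints, for $H_{g}$, $H_{g}^{\ast}$, and for the assembly of $A_{t}$, your argument coincides with the paper's: both integrate by parts against $d\mu_{N_{t}}=\Omega^{3}r^{2}\sin\theta\,dr\,d\theta\,d\varphi$ with the pairing $\langle\,\cdot\,|\gamma_{\eta 0}\,\cdot\,\rangle$, both use that temporal--spatial bivectors are self--adjoint while spatial--spatial bivectors are anti--self--adjoint, and both repackage the surviving terms via $\sqrt{|\det(g|_{N_{t}})|}=\Omega^{3}r^{2}\sin\theta$ and $\gamma_{g\mu}\gamma_{g}^{\nu}=\tfrac14\eta^{\nu\rho}[\gamma_{\eta\mu},\gamma_{\eta\rho}]+\tfrac12\delta^{\nu}_{\mu}$ to reach the stated form of $A_{t}$.
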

\begin{proof}
In the first place, let~$\gamma_\eta$ denote Clifford multiplication in the Minkowski spacetime in spherical coordinates (see~\cite[eq.~(2.1)]{moritz}):
    \begin{align*}
        &\gamma_\eta^{0}=\gamma^{0} \\
        &\gammaeta^{r}=\cos{\theta} \gammaeta^{3}+\sin{\theta}\cos{\varphi}\gamma^{1}+\sin{\theta}\sin{\varphi}\gamma^{2} \\
        &\gammaeta^{\theta}=\frac{1}{r}(-\sin{\theta}\gamma^{3}+\cos{\theta}\cos{\varphi}\gamma^{1}+\cos{\theta}\sin{\varphi}\gamma^{2}) \\
        &\gammaeta^{\varphi}=\frac{1}{r\sin{\theta}}(-\sin{\varphi}\gamma^{1}+\cos{\varphi}\gamma^{2}) 
    \end{align*}
    where~$\gamma^{0},...,\gamma^{3}$ denote Clifford multiplication in~$(\mathbb{R}^{1,3},\eta)$ in Cartesian coordinates and in the Dirac representation. Note~$\gamma^{0},...,\gamma^{3}$ satisfy that with respect to the usual inner product~$\langle\cdot|\cdot\rangle_{\mathbb{C}^{4}}$ on~$\mathbb{C}^{4}$,~$(\gamma^{0})^{\ast}=\gamma^{0}$ and~$(\gamma^{\mu})^{\ast}=-\gamma^{\mu}$ for~$\mu\in\{1,2,3\}$. In the conformally flat spacetime~$(M,g)$ Clifford multiplication is
\begin{align*}
    & \gammag^{j}=\frac{1}{\Omega}\gammaeta^{j}\;.
\end{align*}
  
  Moreover, the non-vanishing Christoffel symbols are:
    \begin{align*}
        &\Gamma^{t}_{tt}=\Gamma^{t}_{rr}=\Gamma^{r}_{tr}=\Gamma^{\theta}_{t\theta}=\Gamma^{\varphi}_{t\varphi}=\frac{\dot{\Omega}}{\Omega}& &\Gamma^{t}_{\theta\theta}=r^{2}\frac{\dot{\Omega}}{\Omega}\\
        &\Gamma^{t}_{\varphi\varphi}=r^{2}\sin^{2}{\theta}\frac{\dot{\Omega}}{\Omega}& &\Gamma^{t}_{rt}=\Gamma^{r}_{tt}=\Gamma^{r}_{rr}=\frac{\Omega'}{\Omega} \\
        &\Gamma^{r}_{\theta\theta}=-r-r^{2}\frac{\Omega'}{\Omega} & &\Gamma^{r}_{\varphi\varphi}=-\sin^{2}{\theta}\Big(r+r^{2}\frac{\Omega'}{\Omega}\Big)\\
        &\Gamma^{\theta}_{\varphi\varphi}=-\sin{\theta}\cos{\theta}-\sin^{2}{\theta}\frac{\partial_{\theta}(\Omega)}{\Omega} & &\Gamma^{\theta}_{r\theta}=\Gamma^{\varphi}_{r\varphi}=\frac{1}{r}+\frac{\Omega'}{\Omega}\\
        &\Gamma^{\varphi}_{\theta\varphi}=\frac{\cos{\theta}}{\sin{\theta}} +\frac{\partial_{\theta}(\Omega)}{\Omega} & &\Gamma^{t}_{t\theta}=\Gamma^{r}_{r\theta}=\Gamma^{\theta}_{\theta\theta}=\frac{\partial_{\theta}(\Omega)}{\Omega}\\
        &\Gamma^{t}_{t\varphi}=\Gamma^{r}_{r\varphi}=\Gamma^{\theta}_{\theta\varphi}=\Gamma^{\varphi}_{\varphi\varphi}=\frac{\partial_{\varphi}(\Omega)}{\Omega} & &\Gamma^{\theta}_{tt}=\frac{\partial_{\theta}(\Omega)}{r^{2}\Omega}\\
        &\Gamma^{\theta}_{rr}=-\frac{\partial_{\theta}(\Omega)}{r^{2}\Omega}& &\Gamma^{\varphi}_{\theta\theta}=-\frac{1}{\sin^{2}{\theta}}\frac{\partial_{\varphi}(\Omega)}{\Omega}\\
        &\Gamma^{\varphi}_{tt}=\frac{1}{r^{2}\sin^{2}{\theta}}\frac{\partial_{\varphi}(\Omega)}{\Omega}& &\Gamma^{\varphi}_{rr}=-\frac{1}{r^{2}\sin^{2}{\theta}}\frac{\partial_{\varphi}(\Omega)}{\Omega}\\
    \end{align*}
    In the second place, we now proceed to compute the coefficients of the spin connection. Given an arbitrary spacetime~$(M,g)$ with local frame~$\{x^{j}\}_{j=0,1,2,3}$, the coefficients of its spin connection along the coordinate vector field~$\pr_{j}$ are (see~\cite[Appendix A]{intro} for an explicit derivation):
    \begin{equation}\label{eq:spin_coefs}
        \nabla^{s}_{j}=\pr_{j}+\frac{1}{2}\rho\pr_{j}(\rho)-\frac{1}{16}\textrm{tr}(\gammag^{m}\nabla_{j}\gammag^{n})\gamma_{gm}\gamma_{gn}+\frac{1}{8}\textrm{tr}(\rho\gamma_{gj}\nabla_{n}\gammag^{n})\rho
    \end{equation}
    where~$\rho\coloneqq\frac{i}{4!}\sqrt{|g|}\epsilon_{jklm}\gammag^{j}\gammag^{k}\gammag^{l}\gammag^{m}$ and~$\nabla_{j}$ is the Levi-Civita connection along~$\pr_{j}$. A computation similar to the one of the proof of~\cite[Lemma 2.1]{moritz} shows that the first and the third term in equation~\eqref{eq:spin_coefs} vanish so that the expression for the spin coefficients simplifies to
    \begin{equation}\label{eq:conf_flat_spin_coefs}
        \nabla^{s}_{j}=\pr_{j}-\frac{1}{16}\textrm{tr}(\gammag^{m}\nabla_{j}\gammag^{n})\gamma_{gm}\gamma_{gn}
    \end{equation}
    We introduce the coefficients~$h_{jk}^{n}$ which encode the partial derivatives of the gamma matrices, i.e.\ 
    \begin{equation*}
        \pr_{j}\gammag^{n}=-\frac{\pr_{j}(\Omega)}{\Omega}\gammag^{n}+\frac{1}{\Omega}\pr_{j}\gammaeta^{n}=:h^{n}_{jk}\gammag^{k}\;,
    \end{equation*}
    and such that expression~\eqref{eq:conf_flat_spin_coefs} can be rewritten as follows:
    \begin{align}\label{eq:spin_coefs_simpler}
    \nabla^{s}_{j}&=\pr_{j}-\frac{1}{16}\Big(\textrm{tr}(\gammag^{m}\pr_{j}\gammag^{n})\gamma_{gm}\gamma_{gn}+4\Gamma^{n}_{jk}\gamma_{g}^{k}\gamma_{gn}\Big)\nonumber\\
    &=\pr_{j}-\frac{1}{4}\Big(h^{n}_{jk}\gammag^ {k}\gamma_{gn}+\Gamma^{n}_{jk}\gamma_{g}^{k}\gamma_{gn}\Big)
    \end{align}
    where we used that~$\nabla_{j}\gammag^{n}=\pr_{j}\gammag^{n}+\Gamma^{n}_{jk}\gammag^{k}$ and~$\textrm{tr}(\gammag^{m}\gammag^{k})=4g^{mk}$. The only non-zero coefficients~$h^{n}_{jk}$ are:
\begin{align*}
    &h^{t}_{tt}=h^{r}_{tr}=h^{\theta}_{t\theta}=h^{\varphi}_{t\varphi}=-\frac{\dot{\Omega}}{\Omega}\nonumber & &h^{t}_{rt}=h^{r}_{rr}=-\frac{\Omega'}{\Omega}\\ 
    &h^{\theta}_{r\theta}=h^{\varphi}_{r\varphi}=-\frac{\Omega'}{\Omega}-\frac{1}{r} & &h^{\varphi}_{\varphi r}=h^{\theta}_{\theta r}=-\frac{1}{r}\nonumber\\
    &h^{r}_{\theta \theta}=r & & h^{\theta}_{\varphi\varphi}=\sin{\theta}\cos{
    \theta}\nonumber\\
    &h^{\varphi}_{\varphi \theta}=-\frac{\cos{\theta}}{\sin{\theta}} & &h^{r}_{\varphi \varphi}=r\sin^{2}{\theta}\nonumber\\
    &h^{t}_{\theta t}=h^{r}_{\theta r}=h^{\theta}_{\theta \theta}=-\frac{\partial_{\theta}(\Omega)}{\Omega} & &h^{\varphi}_{\theta \varphi}=-\frac{\partial_{\theta}(\Omega)}{\Omega}-\frac{\cos{\theta}}{\sin{\theta}}\\
    &h^{t}_{\varphi t}=h^{r}_{\varphi r}=h^{\theta}_{\varphi \theta}=h^{\varphi}_{\varphi\varphi}=-\frac{\partial_{\varphi}(\Omega)}{\Omega}
\end{align*}
We can now proceed to compute the spin coefficients, the Dirac operator and the Dirac Hamiltonian using expression~\eqref{eq:spin_coefs_simpler}, the spin coefficients and the Christoffel symbols.
    \item[{\rm{(i)}}]\underline{$\nabla^{s}_t$:}
    \\Since~$\gammag^{0}\gamma_{g 0}=\gammag^{r}\gamma_{g r}=\gammag^{\theta}\gamma_{g\theta}=\gammag^{\varphi}\gamma_{g\varphi}=1$, and~$\gamma_{gj}$ and~$\gamma_{gk}$ anti-commute for~$j\neq k$
    \begin{align*}
        & h^{n}_{tk}\gammag^{k}\gamma_{gn}= h^{t}_{tt}\gammag^{t}\gamma_{g t}+h^{r}_{tr}\gammag^{r}\gamma_{g r}+h^{\theta}_{t\theta}\gammag^{\theta}\gamma_{g\theta}+h^{\varphi}_{t\varphi}\gammag^{\varphi}\gamma_{g\varphi}=-4 \frac{\dot{\Omega}}{\Omega}\\
        & \Gamma^{n}_{tk}\gamma_{g}^{k}\gamma_{gn}= \Gamma^{t}_{tt}\gamma_{g}^{t}\gamma_{gt}+\Gamma^{r}_{tr}\gamma_{g}^{r}\gamma_{gr}+\Gamma^{\theta}_{t\theta}\gamma_{g}^{\theta}\gamma_{g\theta}+\Gamma^{\varphi}_{t\varphi}\gamma_{g}^{\varphi}\gamma_{g\varphi}+\Gamma^{t}_{tr}\gamma_{g}^{r}\gamma_{gt}+\Gamma^{r}_{tt}\gamma_{g}^{t}\gamma_{gr}\\
        &+\Gamma^{t}_{t\theta}\gamma_{g}^{\theta}\gamma_{gt}+\Gamma^{\theta}_{tt}\gamma_{g}^{t}\gamma_{g\theta}+\Gamma^{t}_{t\varphi}\gamma_{g}^{\varphi}\gamma_{gt}+\Gamma^{\varphi}_{tt}\gamma_{g}^{t}\gamma_{g\varphi}\\
        &=4\frac{\dot{\Omega}}{\Omega}+\frac{\Omega'}{\Omega}(\gamma_{g}^{r}\gamma_{gt}+\gamma_{g}^{t}\gamma_{gr})+\frac{\partial_{\theta}(\Omega)}{\Omega}(\gamma_{g}^{\theta}\gamma_{gt}+\frac{1}{r^{2}}\gamma_{g}^{t}\gamma_{g\theta})+\frac{\partial_{\varphi}(\Omega)}{\Omega}(\gamma_{g}^{\varphi}\gamma_{gt}+\frac{1}{r^{2}\sin^{2}{\theta}}\gamma_{g}^{t}\gamma_{g\varphi})\\
        &=4\frac{\dot{\Omega}}{\Omega}+2\Big(\frac{\Omega'}{\Omega}\gamma_{g}^{t}\gamma_{gr}+\frac{\partial_{\theta}(\Omega)}{\Omega}\frac{1}{r^{2}}\gamma_{g}^{t}\gamma_{g\theta}+\frac{\partial_{\varphi}(\Omega)}{\Omega}\frac{1}{r^{2}\sin^{2}{\theta}}\gamma_{g}^{t}\gamma_{g\varphi}\Big)\;,
    \end{align*}
    which yields the following spin coefficient
    \begin{align*}
       \nabla^{s}_t&=\pr_t-\frac{1}{4}\Big(h^{n}_{tk}\gammag^ {k}\gamma_{gn}+\Gamma^{n}_{tk}\gamma_{g}^{k}\gamma_{gn}\Big)\\
       &=\pr_t-\frac{1}{2}\gamma_{gt}\Omega^{-2}\Big(\frac{\Omega'}{\Omega}\gamma_{gr}+\frac{\partial_{\theta}(\Omega)}{\Omega}\frac{1}{r^{2}}\gamma_{g\theta}+\frac{\partial_{\varphi}(\Omega)}{\Omega}\frac{1}{r^{2}\sin^{2}{\theta}}\gamma_{g\varphi}\Big)\\
       &=\pr_t+\frac{1}{2}\gamma_{gt}\Big(\frac{\Omega'}{\Omega}\gamma_{g}^{r}+\frac{\partial_{\theta}(\Omega)}{\Omega}\gamma_{g}^{\theta}+\frac{\partial_{\varphi}(\Omega)}{\Omega}\gamma_{g}^{\varphi}\Big)
    \end{align*}
    \item[{\rm{(ii)}}] \underline{$\nabla^{s}_{r}$:}
    \begin{align*}
        & h^{n}_{rk}\gammag^{k}\gamma_{gn}= h^{t}_{rt}\gammag^{t}\gamma_{gt} +h^{r}_{rr}\gammag^{r}\gamma_{gr}+h^{\theta}_{r\theta}\gammag^{\theta}\gamma_{g\theta}+h^{\varphi}_{r\varphi}\gammag^{\varphi}\gamma_{g\varphi}=-4\frac{\Omega'}{\Omega}-\frac{2}{r}\\
        &\Gamma^{n}_{rk}\gamma_{g}^{k}\gamma_{gn} = \Gamma^{r}_{rr}\gamma_{g}^{r}\gamma_{gr}+\Gamma^{\theta}_{r\theta}\gamma_{g}^{\theta}\gamma_{g\theta}+\Gamma^{\varphi}_{r\varphi}\gamma_{g}^{\varphi}\gamma_{g\varphi}+\Gamma^{t}_{rr}\gamma_{g}^{r}\gamma_{gt}+\Gamma^{r}_{tr}\gamma_{g}^{t}\gamma_{gr}+\Gamma^{t}_{rt}\gamma_{g}^{t}\gamma_{gt}\\
        &+\Gamma^{r}_{r\theta}\gamma_{g}^{\theta}\gamma_{gr}+\Gamma^{r}_{r\varphi}\gamma_{g}^{\varphi}\gamma_{gr}+\Gamma^{\theta}_{rr}\gamma_{g}^{r}\gamma_{g\theta}+\Gamma^{\varphi}_{rr}\gamma_{g}^{r}\gamma_{g\varphi}=2\frac{\Omega'}{\Omega}+2\Big(\frac{1}{r}+\frac{\Omega'}{\Omega} \Big)\\
        &+\frac{\dot{\Omega}}{\Omega}(\gammag^{r}\gamma_{gt}+\gammag^{t}\gamma_{gr})+\frac{\partial_{\theta}(\Omega)}{\Omega}\Big(-\frac{1}{r^{2}}\gammag^{r}\gamma_{g\theta}+\gammag^{\theta}\gamma_{gr}\Big)+\frac{\partial_{\varphi}(\Omega)}{\Omega}(-\frac{1}{r^{2}\sin^{2}{\theta}}\gammag^{r}\gamma_{g\varphi}+\gammag^{\varphi}\gamma_{gr})\\
        &=4\frac{\Omega'}{\Omega}+\frac{2}{r}+2\frac{\dot{\Omega}}{\Omega}\gammag^{r}\gamma_{gt}-2\frac{1}{r^{2}}\frac{\partial_{\theta}(\Omega)}{\Omega}\gammag^{r}\gamma_{g\theta}-2\frac{1}{r^{2}\sin^{2}{\theta}}\frac{\partial_{\varphi}(\Omega)}{\Omega}\gammag^{r}\gamma_{g\varphi}
    \end{align*}
    Thus, the corresponding spin coefficient is:
    \begin{align*}
        \nabla^{s}_{r}&=\pr_{r}-\frac{1}{4}\Big(h^{n}_{rk}\gammag^ {k}\gamma_{gn}+\Gamma^{n}_{rk}\gamma_{g}^{k}\gamma_{gn}\Big)\\
        &=\pr_{r}-\frac{1}{2}\gammag^{r}\Big(\frac{\dot{\Omega}}{\Omega}\gamma_{gt}-\frac{1}{r^{2}}\frac{\partial_{\theta}(\Omega)}{\Omega}\gamma_{g\theta}-\frac{1}{r^{2}\sin^{2}{\theta}}\frac{\partial_{\varphi}(\Omega)}{\Omega}\gamma_{g\varphi}\Big)\\
        &=\pr_{r}+\frac{1}{2}\gamma_{gr}\Omega^{-2}\Big(\frac{\dot{\Omega}}{\Omega}\gamma_{gt}-\frac{1}{r^{2}}\frac{\partial_{\theta}(\Omega)}{\Omega}\gamma_{g\theta}-\frac{1}{r^{2}\sin^{2}{\theta}}\frac{\partial_{\varphi}(\Omega)}{\Omega}\gamma_{g\varphi}\Big)\\
        &=\pr_{r}+\frac{1}{2}\gamma_{gr}\Big(\frac{\dot{\Omega}}{\Omega}\gamma_{g}^{t}+\frac{\partial_{\theta}(\Omega)}{\Omega}\gamma_{g}^{\theta}+\frac{\partial_{\varphi}(\Omega)}{\Omega}\gamma_{g}^{\varphi}\Big)
    \end{align*}
\item[{\rm{(iii)}}] \underline{$\nabla^{s}_{\theta}$:}
 \begin{align*}
        & h^{n}_{\theta k}\gammag^ {k}\gamma_{gn}= h^{\theta}_{\theta r}\gammag^ {r}\gamma_{g\theta}+h^{r}_{\theta \theta}\gammag^{\theta}\gamma_{gr}+h^{\varphi}_{\theta\varphi}\gammag^ {\varphi}\gamma_{g\varphi}+h^{t}_{\theta t}\gammag^ {t}\gamma_{gt}+h^{r}_{\theta r}\gammag^ {r}\gamma_{gr}+h^{\theta}_{\theta \theta}\gammag^{\theta}\gamma_{g\theta}\\
        &=-4\frac{\partial_{\theta}(\Omega)}{\Omega}+r\gammag^{\theta}\gamma_{gr}-\frac{1}{r}\gamma_{g}^{r}\gamma_{g\theta}-\frac{\cos{\theta}}{\sin{\theta}}=-4\frac{\partial_{\theta}(\Omega)}{\Omega}+2r\gammag^{\theta}\gamma_{gr}-\frac{\cos{\theta}}{\sin{\theta}}\\ 
        &\Gamma^{n}_{\theta k}\gamma_{g}^{k}\gamma_{gn}=\Gamma^{t}_{\theta\theta}\gamma_{g}^{\theta}\gamma_{gt}+\Gamma^{\theta}_{t \theta}\gamma_{g}^{t}\gamma_{g\theta}+\Gamma^{\theta}_{\theta r}\gamma_{g}^{r}\gamma_{g\theta}+\Gamma^{r}_{\theta\theta}\gamma_{g}^{\theta}\gamma_{gr}+\Gamma^{\varphi}_{\theta\varphi}\gamma_{g}^{\varphi}\gamma_{g\varphi}\\
        &+\Gamma^{t}_{\theta t}\gamma_{g}^{t}\gamma_{gt}+\Gamma^{r}_{\theta r}\gamma_{g}^{r}\gamma_{gr}+\Gamma^{\theta}_{\theta\theta}\gamma_{g}^{\theta}\gamma_{g\theta}+\Gamma^{\theta}_{\theta\varphi}\gamma_{g}^{\varphi}\gamma_{g\theta}+\Gamma^{\varphi}_{\theta\theta}\gamma_{g}^{\theta}\gamma_{g\varphi}\\
        &=4\frac{\partial_{\theta}(\Omega)}{\Omega}+\frac{\cos{\theta}}{\sin{\theta}}+\frac{\dot{\Omega}}{\Omega}\Big(r^{2}\gamma_{g}^{\theta}\gamma_{gt}+\gamma_{g}^{t}\gamma_{g\theta}\Big)\\
        &+\Big(\frac{\Omega'}{\Omega}+\frac{1}{r}\Big)\Big(-r^{2}\gamma_{g}^{\theta}\gamma_{gr}+\gamma_{g}^{r}\gamma_{g\theta}\Big)+\frac{\partial_{\varphi}(\Omega)}{\Omega}\Big(-\frac{1}{\sin^{2}{\theta}}\gamma_{g}^{\theta}\gamma_{g\varphi}+\gamma_{g}^{\varphi}\gamma_{g\theta}\Big)\\
        &=4\frac{\partial_{\theta}(\Omega)}{\Omega}+\frac{\cos{\theta}}{\sin{\theta}}+2\frac{\dot{\Omega}}{\Omega}r^{2}\gamma_{g}^{\theta}\gamma_{gt}-2\Big(\frac{\Omega'}{\Omega}+\frac{1}{r}\Big)r^{2}\gamma_{g}^{\theta}\gamma_{gr}-2\frac{\partial_{\varphi}(\Omega)}{\Omega}\frac{1}{\sin^{2}{\theta}}\gamma_{g}^{\theta}\gamma_{g\varphi}
    \end{align*}
    Hence, it follows that:
    \begin{align*}
        \nabla^{s}_{\theta}&=\pr_{\theta}-\frac{1}{4}\Big(h^{n}_{\theta k}\gammag^ {k}\gamma_{gn}+\Gamma^{n}_{\theta k}\gamma_{g}^{k}\gamma_{gn}\Big) \\
        &=\pr_{\theta}-\frac{1}{2}r^{2}\gamma_{g}^{\theta}\Big(\frac{\dot{\Omega}}{\Omega}\gamma_{gt}-\frac{\Omega'}{\Omega}\gamma_{gr}-\frac{\partial_{\varphi}(\Omega)}{\Omega}\frac{1}{r^{2}\sin^{2}{\theta}}\gamma_{g\varphi}\Big)\\
        &=\pr_{\theta}+\frac{1}{2}\gamma_{g\theta}\Big(\frac{\dot{\Omega}}{\Omega}\gamma_{g}^{t}+\frac{\Omega'}{\Omega}\gamma_{g}^{r}+\frac{\partial_{\varphi}(\Omega)}{\Omega}\gamma_{g}^{\varphi}\Big)
    \end{align*}
\item[{\rm{(iv)}}] \underline{$\nabla^{s}_{\varphi}$:}
\begin{align*}
    & h^{n}_{\varphi k}\gammag^ {k}\gamma_{gn}=h^{\varphi}_{\varphi r}\gammag^ {r}\gamma_{g\varphi}+h^{\theta}_{\varphi\varphi}\gammag^ {\varphi}\gamma_{g\theta}+h^{\varphi}_{\varphi\theta}\gammag^ {\theta}\gamma_{g\varphi}+h^{r}_{\varphi \varphi}\gammag^ {\varphi}\gamma_{gr}+h^{t}_{\varphi t}\gammag^ {t}\gamma_{gt}\\
    &+h^{r}_{\varphi r}\gammag^ {r}\gamma_{gr}+h^{\theta}_{\varphi\theta}\gammag^ {\theta}\gamma_{g\theta}+h^{\varphi}_{\varphi\varphi}\gammag^ {\varphi}\gamma_{g\varphi} \\
    &=-4\frac{\partial_{\varphi}(\Omega)}{\Omega}-\frac{1}{r}\gammag^ {r}\gamma_{g\varphi}+\sin{\theta}\cos{\theta}\gammag^ {\varphi}\gamma_{g\theta}-\frac{\cos{\theta}}{\sin{\theta}}\gammag^ {\theta}\gamma_{g\varphi}+r\sin^{2}{\theta}\gammag^ {\varphi}\gamma_{gr} \\
    &=-4\frac{\partial_{\varphi}(\Omega)}{\Omega}+2\sin{\theta}\cos{\theta}\gammag^ {\varphi}\gamma_{g\theta}+2r\sin^{2}{\theta}\gammag^ {\varphi}\gamma_{gr} \\
    &\Gamma^{n}_{\varphi k}\gamma_{g}^{k}\gamma_{gn}=\Gamma^{\varphi}_{\varphi t}\gamma_{g}^{t}\gamma_{g\varphi}+\Gamma^{t}_{\varphi \varphi}\gamma_{g}^{\varphi}\gamma_{gt}+\Gamma^{r}_{\varphi \varphi}\gamma_{g}^{\varphi}\gamma_{gr}+\Gamma^{\theta}_{\varphi \varphi}\gamma_{g}^{\varphi}\gamma_{g\theta}+\Gamma^{\varphi}_{\varphi r}\gamma_{g}^{r}\gamma_{g\varphi}\\
    &\quad\: +\Gamma^{\varphi}_{\varphi \theta}\gamma_{g}^{\theta}\gamma_{g\varphi}+\Gamma^{t}_{\varphi t}\gamma_{g}^{t}\gamma_{gt}+\Gamma^{r}_{\varphi r}\gamma_{g}^{r}\gamma_{gr}+\Gamma^{\theta}_{\varphi \theta}\gamma_{g}^{\theta}\gamma_{g\theta}+\Gamma^{\varphi}_{\varphi \varphi}\gamma_{g}^{\varphi}\gamma_{g\varphi}\\
    &=4\frac{\partial_{\varphi}(\Omega)}{\Omega}+\frac{\dot{\Omega}}{\Omega}\Big(r^{2}\sin^{2}{\theta}\gamma_{g}^{\varphi}\gamma_{gt}+\gamma_{g}^{t}\gamma_{g\varphi}\Big) \\
    &\quad\: +\Big(\frac{1}{r}+\frac{\Omega'}{\Omega}\Big)\Big(-r^{2}\sin^{2}{\theta}\gamma_{g}^{\varphi}\gamma_{gr}+\gamma_{g}^{r}\gamma_{g\varphi}\Big)\\
&\quad\: +\Big(-\sin{\theta}\cos{\theta}-\sin^{2}{\theta}\frac{\partial_{\theta}(\Omega)}{\Omega}\Big)\Big(\gamma_{g}^{\varphi}\gamma_{g\theta}-\frac{1}{\sin^{2}{\theta}}\gamma_{g}^{\theta}\gamma_{g\varphi}\Big)\\
&=4\frac{\partial_{\varphi}(\Omega)}{\Omega}+2\frac{\dot{\Omega}}{\Omega}r^{2}\sin^{2}{\theta}\gamma_{g}^{\varphi}\gamma_{gt}-2\Big(\frac{1}{r}+\frac{\Omega'}{\Omega}\Big)r^{2}\sin^{2}{\theta}\gamma_{g}^{\varphi}\gamma_{gr}\\
&\quad\: +2\Big(-\sin{\theta}\cos{\theta}-\sin^{2}{\theta}\frac{\partial_{\theta}(\Omega)}{\Omega}\Big)\gamma_{g}^{\varphi}\gamma_{g\theta}
\end{align*}
Thus,
\begin{align*}
    \nabla^{s}_{\varphi}&=\pr_{\varphi}-\frac{1}{4}\Big(h^{n}_{\varphi k}\gammag^ {k}\gamma_{gn}+\Gamma^{n}_{\varphi k}\gamma_{g}^{k}\gamma_{gn}\Big) \\
    &=\pr_{\varphi}-\frac{1}{2}r^{2}\sin^{2}{\theta}\gammag^{\varphi}\Big(\frac{\dot{\Omega}}{\Omega}\gamma_{gt}-\frac{\Omega'}{\Omega}\gamma_{gr}-\frac{1}{r^{2}}\frac{\partial_{\theta}(\Omega)}{\Omega}\gamma_{g\theta}\Big)\\
    &=\pr_{\varphi}+\frac{1}{2}\gamma_{g\varphi}\Big(\frac{\dot{\Omega}}{\Omega}\gamma_{g}^{t}+\frac{\Omega'}{\Omega}\gamma_{g}^{r}+\frac{\partial_{\theta}(\Omega)}{\Omega}\gamma_{g}^{\theta}\Big)
\end{align*}
\item[{\rm{(v)}}] \underline{$D_{g}$:}
Using the previous computations the Dirac operator follows directly
\begin{align*}
        D_{g}&=i\gammag^{j}\nabla^{s}_{j}=i\gammag^{j}\pr_{j}+\frac{3}{2}i\gammag^{j}\frac{\partial_{j}(\Omega)}{\Omega}\;.
\end{align*}
\item[{\rm{(vi)}}] \underline{$H_{g}$:}
Reordering terms in the Dirac equation so that~$H_{g}=i\partial_t$ yields
\begin{align*}
    H_{g}&=-i\gamma_{gt}\Big(\gammag^{\mu}\pr_{\mu}+\frac{3}{2}\gammag^{j}\frac{\partial_{j}(\Omega)}{\Omega}\Big)+m\gamma_{gt}\\
    &=-i\gamma_{\eta 0}\Big(\gamma_\eta^{\mu}\pr_{\mu}+\frac{3}{2}\frac{\pr_{\mu}(\Omega)}{\Omega}\gamma_\eta^{\mu}+\frac{3}{2}\gamma_\eta^{0}\frac{\dot{\Omega}}{\Omega}\Big)+\Omega m\gamma_{\eta 0}=\tilde{H}_\eta+(\Omega-1) m\gamma_{\eta 0}-\frac{3}{2}i\frac{\dot{\Omega}}{\Omega}
\end{align*}
In the rest of the proof we compute the adjoint operators.
    \item[{\rm{(i)}}]\underline{$(\nabla^{s}_{r})^{\ast}$:} We integrate by parts and use that~$d\mu_{N_t}=\sqrt{|\textrm{det}(g|_{N_t})|}d^{3}x=\Omega^{3}r^{2}\sin{\theta} drd\theta d\varphi$
    \begin{align*}
(\psi|\nabla^{s}_{r}\phi)_t &= \int_{N_t}\langle\psi|\gamma_{\eta 0}\partial_{r}\phi\rangle d\mu_{N_t}+\frac{1}{2}\int_{N_t}\langle\psi|\gamma_{\eta 0}\gamma_{gr}\Big(\frac{\dot{\Omega}}{\Omega}\gamma_{g}^{t}+\frac{\partial_{\theta}(\Omega)}{\Omega}\gamma_{g}^{\theta}+\frac{\partial_{\varphi}(\Omega)}{\Omega}\gamma_{g}^{\varphi}\Big)\phi\rangle d\mu_{N_t}\\
&=\int_{N_t}\langle\Big(-\partial_{r}-\frac{2}{r}-3\frac{\Omega'}{\Omega}\Big)\psi|\gamma_{\eta 0}\phi\rangle d\mu_{N_t} \\
&\quad\: -\frac{1}{2}\int_{N_t}\langle\psi|\gamma_{gr}\Big(\frac{\dot{\Omega}}{\Omega}\gamma_{g}^{t}-\frac{\partial_{\theta}(\Omega)}{\Omega}\gamma_{g}^{\theta}-\frac{\partial_{\varphi}(\Omega)}{\Omega}\gamma_{g}^{\varphi}\Big)\gamma_{\eta 0}\phi\rangle d\mu_{N_t}\\
&=\int_{N_t}\langle\Big(-\partial_{r}-\frac{2}{r}-3\frac{\Omega'}{\Omega}\Big)\psi|\gamma_{\eta 0}\phi\rangle d\mu_{N_t} \\
&\quad\: -\frac{1}{2}\int_{N_t}\langle\Big(\frac{\dot{\Omega}}{\Omega}\gamma_{g}^{t}-\frac{\partial_{\theta}(\Omega)}{\Omega}\gamma_{g}^{\theta}-\frac{\partial_{\varphi}(\Omega)}{\Omega}\gamma_{g}^{\varphi}\Big)\gamma_{gr}\psi|\gamma_{\eta 0}\phi\rangle d\mu_{N_t}\\
&=\int_{N_t}\langle\Big(-\partial_{r}-\frac{2}{r}-3\frac{\Omega'}{\Omega}+\frac{1}{2}\gamma_{gr}\Big(\frac{\dot{\Omega}}{\Omega}\gamma_{g}^{t}-\frac{\partial_{\theta}(\Omega)}{\Omega}\gamma_{g}^{\theta}-\frac{\partial_{\varphi}(\Omega)}{\Omega}\gamma_{g}^{\varphi}\Big)\Big)\psi|\gamma_{\eta 0}\phi\rangle d\mu_{N_t}
    \end{align*}
    \item[{\rm{(ii)}}]\underline{$(\nabla^{s}_{\theta})^{\ast}$:}
        \begin{align*}
        &(\psi|\nabla^{s}_{\theta}\phi)_t=\int_{N_t}\langle\psi| \gamma_{\eta 0}\partial_{\theta}\phi\rangle d\mu_{N_t}+\frac{1}{2}\int_{N_t}\langle\psi|\gamma_{\eta 0}\gamma_{g\theta}\Big(\frac{\dot{\Omega}}{\Omega}\gamma_{g}^{t}+\frac{\Omega'}{\Omega}\gamma_{g}^{r}+\frac{\partial_{\varphi}(\Omega)}{\Omega}\gamma_{g}^{\varphi}\Big)\phi\rangle d\mu_{N_t}\\
        &=\int_{N_t}\langle\Big(-\partial_{\theta}-\frac{\cos{\theta}}{\sin{\theta}}-3\frac{\partial_{\theta}(\Omega)}{\Omega}-\frac{1}{2}\Big(\frac{\dot{\Omega}}{\Omega}\gamma_{g}^{t}-\frac{\Omega'}{\Omega}\gamma_{g}^{r}-\frac{\partial_{\varphi}(\Omega)}{\Omega}\gamma_{g}^{\varphi}\Big)\gamma_{g\theta}\Big)\psi|\gamma_{\eta 0}\phi\rangle d\mu_{N_t}\\
        &=\int_{N_t}\langle\Big(-\partial_{\theta}-\frac{\cos{\theta}}{\sin{\theta}}-3\frac{\partial_{\theta}(\Omega)}{\Omega}+\frac{1}{2}\gamma_{g\theta}\Big(\frac{\dot{\Omega}}{\Omega}\gamma_{g}^{t}-\frac{\Omega'}{\Omega}\gamma_{g}^{r}-\frac{\partial_{\varphi}(\Omega)}{\Omega}\gamma_{g}^{\varphi}\Big)\Big)\psi|\gamma_{\eta 0}\phi\rangle d\mu_{N_t}
    \end{align*}
    \item[{\rm{(iii)}}]\underline{$(\nabla^{s}_{\varphi})^{\ast}$:}
    \begin{align*}
        &(\psi|\nabla^{s}_{\varphi}\phi)_t=\int_{N_t}\langle\psi| \gamma_{\eta 0}\partial_{\varphi}\phi\rangle d\mu_{N_t}+\frac{1}{2}\int_{N_t}\langle\psi|\gamma_{\eta 0}\gamma_{g\varphi}\Big(\frac{\dot{\Omega}}{\Omega}\gamma_{g}^{t}+\frac{\Omega'}{\Omega}\gamma_{g}^{r}+\frac{\partial_{\theta}(\Omega)}{\Omega}\gamma_{g}^{\theta}\Big)\phi\rangle d\mu_{N_t}\\
        &=\int_{N_t}\langle\Big(-\partial_{\varphi}-3\frac{\partial_{\varphi}(\Omega)}{\Omega}-\frac{1}{2}\Big(\frac{\dot{\Omega}}{\Omega}\gamma_{g}^{t}-\frac{\Omega'}{\Omega}\gamma_{g}^{r}-\frac{\partial_{\theta}(\Omega)}{\Omega}\gamma_{g}^{\theta}\Big)\gamma_{g\varphi}\Big)\psi|\gamma_{\eta 0}\phi\rangle d\mu_{N_t}\\
        &=\int_{N_t}\langle\Big(-\partial_{\varphi}-3\frac{\partial_{\varphi}(\Omega)}{\Omega}+\frac{1}{2}\gamma_{g\varphi}\Big(\frac{\dot{\Omega}}{\Omega}\gamma_{g}^{t}-\frac{\Omega'}{\Omega}\gamma_{g}^{r}-\frac{\partial_{\theta}(\Omega)}{\Omega}\gamma_{g}^{\theta}\Big)\Big)\psi|\gamma_{\eta 0}\phi\rangle d\mu_{N_t}
    \end{align*}
    
    \item[{\rm{(iv)}}]\underline{$H_{g}^{\ast}$:}
    Since~$\tilde{H}_\eta$ is symmetric (it is unitarily equivalent to $H_{\eta}$) the adjoint of~$H_{g}$ follows directly
    \begin{align*}
        (\psi|H_{g}\phi)&=\int_{N_t}\langle\psi|\gamma_{\eta 0}\Big(\tilde{H}_\eta+(\Omega-1)m\gamma_{\eta0}-\frac{3}{2}i\frac{\dot{\Omega}}{\Omega}\Big)\phi\rangle d\mu_{N_t}\\
        &=\int_{N_t}\langle\Big(\tilde{H}_\eta+(\Omega-1)m\gamma_{\eta0}+\frac{3}{2}i\frac{\dot{\Omega}}{\Omega}\Big)\psi|\gamma_{\eta 0}\phi\rangle d\mu_{N_t}
    \end{align*}

\noindent From the previous computations we see that
\begin{align*}
&H_{g}+H_{g}^{\ast}=2(\tilde{H}_\eta+(\Omega-1) m\gamma_{\eta 0})\\
 &\nabla^{s}_{r}-(\nabla^{s}_{r})^{\ast}=2\partial_{r}+3\frac{\Omega'}{\Omega}+\frac{2}{r}+\gamma_{gr}\Big(\frac{\partial_{\theta}(\Omega)}{\Omega}\gamma_{g}^{\theta}+\frac{\partial_{\varphi}(\Omega)}{\Omega}\gamma_{g}^{\varphi}\Big)\\
 &\nabla^{s}_{\theta}-(\nabla^{s}_{\theta})^{\ast}=2\partial_{\theta}+3\frac{\partial_{\theta}(\Omega)}{\Omega}+\frac{\cos{\theta}}{\sin{\theta}}+\gamma_{g\theta}\Big(\frac{\Omega'}{\Omega}\gamma_{g}^{r}+\frac{\partial_{\varphi}(\Omega)}{\Omega}\gamma_{g}^{\varphi}\Big)\\
 &\nabla^{s}_{\varphi}-(\nabla^{s}_{\varphi})^{\ast}=2\partial_{\varphi}+3\frac{\partial_{\varphi}(\Omega)}{\Omega}+\gamma_{g\varphi}\Big(\frac{\Omega'}{\Omega}\gamma_{g}^{r}+\frac{\partial_{\theta}(\Omega)}{\Omega}\gamma_{g}^{\theta}\Big)\;,
\end{align*}

and thus the symmetrized Hamiltonian is
\begin{align*}
   A_t &=\frac{1}{2}\{u^{t},\tilde{H}_\eta+(\Omega-1) m\gamma_{\eta 0}\}+\frac{i}{2}\Big\{u^{r},\partial_{r}+\frac{1}{r}+\frac{1}{2}\gamma_{gr}\Big(3\frac{\Omega'}{\Omega}\gammag^{r}\\
    &\quad +\frac{\partial_{\theta}(\Omega)}{\Omega}\gamma_{g}^{\theta}+\frac{\partial_{\varphi}(\Omega)}{\Omega}\gamma_{g}^{\varphi}\Big)\Big\}\\
    &\quad\: +\frac{i}{2}\Big\{u^{\theta},\partial_{\theta}+\frac{\cos{\theta}}{\sin{\theta}}+\frac{1}{2}\gamma_{g\theta}\Big(\frac{\Omega'}{\Omega}\gamma_{g}^{r}+3\frac{\partial_{\theta}(\Omega)}{\Omega}\gammag^{\theta}+\frac{\partial_{\varphi}(\Omega)}{\Omega}\gamma_{g}^{\varphi}\Big)\Big\}\\
    &\quad\: +\frac{i}{2}\Big\{u^{\varphi},\partial_{\varphi}+\frac{1}{2}\gamma_{g\varphi}\Big(\frac{\Omega'}{\Omega}\gamma_{g}^{r}+\frac{\partial_{\theta}(\Omega)}{\Omega}\gamma_{g}^{\theta}+3\frac{\partial_{\varphi}(\Omega)}{\Omega}\gammag^{\varphi}\Big)\Big\}\\
    &=\frac{1}{2}\Big\{u^{t},\tilde{H}_\eta+(\Omega-1) m\gamma_{\eta 0}\Big\} \\
    &\quad\: +\frac{i}{2}\Big\{u^{\mu},\partial_{\mu}+\frac{\partial_{\mu}(\sqrt{|\textrm{det}(\eta)|})}{2\sqrt{|\textrm{det}(\eta)|}}+\frac{1}{2}\gamma_{g\mu}\gamma_{g}^{\nu}\frac{\partial_{\nu}(\Omega)}{\Omega}+\frac{\partial_{\mu}(\Omega)}{\Omega}\Big\}
    \end{align*}
    We now use that
    \begin{align*}
        &\frac{1}{2}\gamma_{g\mu}\gamma_{g}^{\nu}=\frac{1}{4}g^{\rho\nu}([\gamma_{g\mu},\gamma_{g\rho}]+\{\gamma_{g\mu},\gamma_{g\rho}\})=\frac{1}{4}[\gamma_{g\mu},\gamma_{g}^{\nu}]+\frac{1}{2}\delta^{\nu}_{\mu}=\frac{1}{4}\eta^{\rho\nu}[\gamma_{\eta\mu},\gamma_{\eta\rho}]+\frac{1}{2}\delta^{\nu}_{\mu}\\
        &\implies\frac{1}{2}\gamma_{g\mu}\gamma_{g}^{\nu}\frac{\partial_{\nu}(\Omega)}{\Omega}+\frac{\partial_{\mu}(\Omega)}{\Omega}=\frac{1}{4}\frac{\partial_{\nu}(\Omega)}{\Omega}\eta^{\rho\nu}[\gamma_{\eta\mu},\gamma_{\eta\rho}]+\frac{3}{2}\frac{\partial_{\mu}(\Omega)}{\Omega}\;,
    \end{align*}
    in order to simplify the symmetrized Hamiltonian,
    \begin{align*}
    A_t &=\frac{1}{2}\Big\{u^{t},\tilde{H}_\eta+(\Omega-1) m\gamma_{\eta 0}\Big\} \\
    &\quad\:+\frac{i}{2}\Big\{u^{\mu},\partial_{\mu}+\frac{\partial_{\mu}(\sqrt{|\textrm{det}(g|_{N_t})|})}{2\sqrt{|\textrm{det}(g|_{N_t})|}}+\frac{1}{4}\frac{\partial_{\nu}(\Omega)}{\Omega}\eta^{\rho\nu}[\gamma_{\eta\mu},\gamma_{\eta\rho}]\Big\}
\end{align*}
This gives the result.
\end{proof}

\providecommand{\bysame}{\leavevmode\hbox to3em{\hrulefill}\thinspace}
\providecommand{\MR}{\relax\ifhmode\unskip\space\fi MR }
\providecommand{\MRhref}[2]{%
  \href{http://www.ams.org/mathscinet-getitem?mr=#1}{#2}
}
\providecommand{\href}[2]{#2}

\end{document}